    \theoremstyle{plain}
    \newtheorem{theorem}{Theorem}
    \newtheorem{lemma}{Lemma}
    \newtheorem{corollary}{Corollary}
    \theoremstyle{definition}
    \newtheorem{definition}{Definition}
    \newtheorem{example}{Example}
    \newtheorem{remark}{{Remark}}
    \newtheorem{philosophy}{{Philosophy}}
    \newtheorem{assumption}{{Assumption}}
    \newtheorem{insight}{{Insight}}
    \newcommand{\rscl}[1]{\mathrm{#1}}
    \renewcommand{\vec}[1]{\bm #1}
    \newcommand{\rvec}[1]{\mathbf{#1}}
    \newcommand{\vech}[1]{\hat{\bm{#1}}}
    \newcommand{\mat}[1]{\bm #1}
    \newcommand{\bb}[1]{\mathbb{#1}}
    \renewcommand{\cal}[1]{\mathcal{#1}}
    \newcommand{\Tr}{\operatorname{Tr}}
    \newcommand{\R}{\mathbb{R}}
    \newcommand{\E}{\mathbb{E}}
    \renewcommand{\d}{\mathrm{d}}
    \renewcommand{\th}{\text{th}}
    \newcommand{\KL}[2]{\operatorname{KL}\left[{#1}\left\|{#2}\right.\right]}
    \newcommand{\Div}[2]{\operatorname{D}\left[{#1}\left\|{#2}\right.\right]}
    \DeclareMathOperator*{\argmax}{argmax}
    \DeclareMathOperator*{\argmin}{argmin}
    \DeclareMathOperator*{\Ent}{{Ent}}
    \newcommand{\defeq}{\coloneqq}
    \newcommand{\stp}{\hfill $\square$}
    \newcommand{\bfit}[1]{\textbf{\textit{#1}}}
    \newcommand{\captext}[1]{\texorpdfstring{#1}{}}
    \newcommand{\tabincell}[2]{\begin{tabular}{@{}#1@{}}#2\end{tabular}}
    \newcommand{\quotemark}[1]{“#1”}
    \definecolor{hl-bg-color}{RGB}{255,255,215}
    \definecolor{new-magenta}{RGB}{255,0,255}
    \setlist[enumerate]{leftmargin=.3in}
    \setlist[itemize]{leftmargin=.3in}
\begin{document}

\begin{frontmatter}
\title{Uncertainty-Aware Bayes' Rule and Its Applications}
\runtitle{Uncertainty-Aware Bayes' Rule}

\begin{aug}
\author[A]{\fnms{Shixiong}~\snm{Wang}\ead[label=e1]{s.wang@u.nus.edu}\orcid{0000-0001-7928-6918}}
\address[A]{Department of Electrical and Electronic Engineering, Imperial College London \printead[presep={,\ }]{e1}.}

\runauthor{S. Wang}
\end{aug}

\begin{abstract}
Bayes' rule has enabled innumerable powerful algorithms of statistical signal processing and statistical machine learning. However, when model misspecifications exist in prior and/or data distributions, the direct application of Bayes' rule is questionable. 
Philosophically, the key is to balance the relative importance between the prior information and the data evidence when calculating posterior distributions: If prior distributions are overly conservative (i.e., exceedingly spread), we upweight the prior belief; if prior distributions are overly aggressive (i.e., exceedingly concentrated), we downweight the prior belief. The same operation also applies to likelihood distributions, which are defined as normalized likelihoods if the normalization exists. 
This paper studies a generalized Bayes' rule, called uncertainty-aware (UA) Bayes' rule, to technically realize the above philosophy, thus combating model uncertainties in prior and/or data distributions. In particular, the advantage of the proposed UA Bayes' rule over the existing power posterior (i.e., $\alpha$-posterior) is investigated. Applications of the UA Bayes' rule on classification and estimation are discussed: Specifically, the UA naive Bayes classifier, the UA Kalman filter, the UA particle filter, and the UA interactive-multiple-model filter are suggested and experimentally validated.
\end{abstract}

\begin{keyword}[class=MSC]
\kwd[Primary ]{62F15}
\kwd{62F35}
\kwd[; secondary ]{62P30}
\kwd{68T37}
\end{keyword}

\begin{keyword}
\kwd{Bayes' Rule}
\kwd{Uncertainty Awareness}
\kwd{Entropy Method}
\end{keyword}

\end{frontmatter}

\section{Introduction} \label{sec:introdction}
Bayes' rule has countless successful applications in statistical signal processing and statistical machine learning, such as Kalman filter, particle filter, Bayesian hypothesis testing, naive Bayes classifier, Bayesian optimization, Bayesian bandits, and Bayesian networks. It provides a law for calculating the posterior distribution $p(\vec \theta|\vec y)$ after observing the sample $\vec y$ using the likelihood function $\vec \theta \mapsto p(\vec y|\vec \theta) \defeq p_{\vec \theta}(\vec y)$ and the prior distribution $p(\vec \theta)$:
\begin{equation}\label{eq:bayes-rule}
p(\vec \theta|\vec y) \propto p(\vec \theta) \cdot p(\vec y|\vec \theta),~~~\forall \vec y \in \cal Y,~\forall \vec \theta \in \Theta,
\end{equation}
where $\cal Y \subseteq \R^l$ is the domain of the data-generating distribution $p(\vec y | \vec \theta)$, for every $\vec \theta \in \Theta$, and $\Theta \subseteq \R^d$ is the parameter space. Depending on different applications, two interpretations of Bayes' rule \eqref{eq:bayes-rule} are standard; see \citet{gelman2013bayesian}. In the first interpretation (i.e., subjective Bayes), the measurement $\vec y$ is drawn from a data-generating distribution $p_{\vec \theta_0}(\vec y)$, which is parametrized by a deterministic but unknown parameter $\vec \theta_0 \in \Theta$. In this case, we aim to estimate the true value $\vec \theta_0$, and the prior distribution $p(\vec \theta)$ serves as a normalized epistemic belief that $\vec \theta \in \Theta$ is the true value. Namely, $\vec \theta_0$ is not actually a realization of the distribution $p(\vec \theta)$; instead, $p(\vec \theta)$ represents a human's normalized subjective probabilistic guess about $\vec \theta_0$ and $p(\vec \theta)$ may have no actual physical connection to $\vec \theta_0$. Therefore, we do not assume the physical existence of a joint data-generating distribution $p(\vec y, \vec \theta)$, although we mathematically adopt it. Consequently, with the likelihood function $p(\vec y|\vec \theta)$ indicating our $\vec y$-data evidence that $\vec \theta$ is the true value, the posterior distribution $p(\vec \theta|\vec y)$ denotes our integrated posterior belief that $\vec \theta$ is the true value. In the second interpretation (i.e., latent-variable Bayes), the sample $\vec y$ is drawn from the data-generating distribution $p_{\vec \theta_0}(\vec y)$ where $\vec \theta_0$ is a realization of the distribution $p(\vec \theta)$. That is, physically, there exists a joint data-generating distribution $p(\vec y, \vec \theta)$, although the random variable $\vec \theta$ is hidden and not directly observable. In this case, Bayes' rule is used to estimate the hidden random variable $\vec \theta$. For real-world engineering examples of the two interpretations, see Appendix \ref{append:interpretation-bayes-rule} in supplementary materials. In what follows, we refer to the two interpretations as deterministic and stochastic ones, respectively, and do not mention the specific connotation if it is clear from the contexts.

From the optimization-centric perspective, the Bayesian posterior \eqref{eq:bayes-rule} solves the following problem
\begin{equation}\label{eq:bayesian-rule-opt}
   p(\vec \theta|\vec y) = \displaystyle \argmin_{q(\bm \theta)} \E_{\vec \theta \sim q(\vec \theta)}\left[-\ln p(\bm y |\bm \theta) \right] + \KL{q(\bm \theta)}{p(\bm \theta)},
\end{equation}
where $\KL{q(\bm \theta)}{p(\bm \theta)}$ denotes the Kullback–Leibler (KL) divergence of $q(\bm \theta)$ from $p(\bm \theta)$; see, e.g., \citet{blei2017variational,knoblauch2022optimization}. This perspective intuitively supports the validity of the Bayesian posterior \eqref{eq:bayes-rule}: it maximizes the $\bm y$-data evidence conveyed in $p(\bm y |\bm \theta)$ and simultaneously minimizes the deviation from the prior knowledge $p(\bm \theta)$.

In using the conventional Bayes' rule \eqref{eq:bayes-rule}, the fundamental assumption is that the data distribution $\vec y \mapsto p(\vec y | \vec \theta)$ for every $\vec \theta \in \Theta$ and the prior distribution $p(\vec \theta)$ are accurate. 
However, this assumption is highly untenable in practice, and as a result, the performance of algorithms relying on the conventional Bayesian posterior \eqref{eq:bayes-rule} significantly degrades \citep[Chapter~15]{huber2009robust}; \citep{knoblauch2022optimization}. 
For specific examples and illustrations from the general Bayesian statistics community, see, e.g., \citet{berger1994overview,insua2000bayesian,ruggeri2005robust,gelman2013bayesian,medina2022robustness}; for those from the Bayesian statistical signal processing community, see, e.g., \citet{petersen1999robust,zoubir2018robust,shmaliy2018comparing,wang2022thesisdistributionally}. In practice, both the prior distribution $p(\vec \theta)$ and the data-generating distribution $p(\vec y | \vec \theta)$ can be {independently} uncertain, compared to their unknown ground truths $p_0(\vec \theta)$ and $p_0(\vec y | \vec \theta)$, respectively;\footnote{Note that $p_0(\vec y | \vec \theta)$ and $p(\vec y | \vec \theta_0)$ denote different concepts: the former models the true data-generating law, while the later means the nominal (i.e., assumed) data-generating law at $\vec \theta_0$.} for motivating examples, see Appendix \ref{append:interpretation-bayes-rule} in supplementary materials. Therefore, the uncertainties in the two distributions should be independently treated in calculating the posterior distribution. Note that, in the deterministic interpretation, the true prior distribution $p_0(\vec \theta)$ should be understood as the delta distribution $\delta_{\vec \theta_0}(\vec \theta)$ concentrated at the true value $\vec \theta_0$, so the resulting true posterior distribution would be $\delta_{\vec \theta_0}(\vec \theta)$ as well.

Facing the model uncertainties before applying Bayes' rule, one way is to improve the modeling accuracy and reduce such uncertainties whenever new domain knowledge arrives, for example, by replacing Gaussian distributions with Student's $t$ distributions when outliers exist \citep[Chapter~17]{gelman2013bayesian}; approaches in line with this principle are frequently referred to as adaptive methods. When improving the modeling accuracy is practically challenging due to, e.g., limited domain knowledge, the other way that tolerates or withstands the model uncertainties can be attractive to practitioners; methods aligning with this principle are called robust methods. The connotation of robustness in Bayesian inference is as follows.
\begin{philosophy}[Robustness]\label{phi:robustness}
    Let the true Bayesian posterior $p_0(\vec \theta | \vec y)$ be defined by the true prior $p_0(\vec \theta)$ and the true data distribution $p_0(\vec y | \vec \theta)$. Similarly, let the nominal Bayesian posterior $p(\vec \theta | \vec y)$ be defined by the nominal prior $p(\vec \theta)$ and the nominal data distribution $p(\vec y | \vec \theta)$. The robustness of a generalized posterior $p_g(\vec \theta | \vec y)$ refers to its ability to outperform the nominal Bayesian posterior $p(\vec \theta | \vec y)$, when at least one of $p(\vec \theta)$ and $p(\vec y | \vec \theta)$ is uncertain \citep{medina2022robustness,knoblauch2022optimization}. Note that $p_g(\vec \theta | \vec y)$ is defined by nominal distributions $p(\vec \theta)$ and $p(\vec y | \vec \theta)$. Here, given the sample $\vec y$, outperformance means, \bfit{among many others}: 1) $p_g(\vec \theta | \vec y)$ is closer to the true Bayesian posterior  $p_0(\vec \theta | \vec y)$ than $p(\vec \theta | \vec y)$ under some statistical distances or divergences; or 2) in the deterministic interpretation, the mean-squared estimation error of $p_g(\vec \theta | \vec y)$ for $\vec \theta_0$ is smaller than that of $p(\vec \theta | \vec y)$, i.e., $\int_{\Theta} (\vec \theta - \vec \theta_0)^\top(\vec \theta - \vec \theta_0) p_g(\vec \theta | \vec y) \d \vec \theta < \int_{\Theta} (\vec \theta - \vec \theta_0)^\top(\vec \theta - \vec \theta_0) p(\vec \theta | \vec y) \d \vec \theta$; or 3) the posterior cost $\int_{\Theta} u(\vec \theta, \vec y) p_g(\vec \theta | \vec y) \d \vec \theta$ is smaller than $\int_{\Theta} u(\vec \theta, \vec y) p(\vec \theta | \vec y) \d \vec \theta$, for some cost functions $u$; or 4) the population risk\footnote{This Bayesian decision analysis framework is standard for statistical machine learning and signal processing, e.g., classification and estimation tasks \citep[Chapter~5]{murphy2022probabilistic}. For extended investigations, see Subsection \ref{subsec:applications}. All real-world experiments in Subsection \ref{sec:applications} are based on this paradigm.} $\int_{\cal Y} \int_{\Theta} c[\vech \theta_g(\vec y), \vec \theta] p_0(\vec y, \vec \theta) \d \vec \theta \d \vec y$ is smaller than $\int_{\cal Y} \int_{\Theta} c[\vech \theta(\vec y), \vec \theta] p_0(\vec y, \vec \theta) \d \vec \theta \d \vec y$, for some risk functions $c$, where the estimate $\vech \theta_g(\vec y)$ of $\vec \theta$ is defined by $p_g(\vec \theta | \vec y)$ and $\vech \theta(\vec y)$ is by $p(\vec \theta | \vec y)$. 
    Thus, robustness is application-specific.
    \stp
\end{philosophy}


In what follows, when we do not specify whether a distribution is true or nominal, it is understood to be nominal. According to \citet[Section~15.2]{huber2009robust}, the robust design can be achieved by two operations: a) modifying the prior distribution (i.e., prior belief) and/or the data distribution (i.e., data evidence); or b) balancing the relative importance between them. This proposal is motivationally valid because the key to information fusion indeed lies in properly weighing the information from independent sources \citep{koliander2022fusion}, specifically between prior beliefs and data evidence in Bayesian inference. The boundary between the two operations, however, is not sharp because to make one distribution noninformative (e.g., to widen its spread on $\Theta$) is to downweight its relative importance, compared to the other, in calculating the posterior; cf. \eqref{eq:bayes-rule}. Four technical approaches that realize the above philosophy of Bayesian robustness include the following.
\begin{enumerate}[label=A\arabic*)]
    \item Employing noninformative priors (e.g., uniform or large-entropy distributions\footnote{In the deterministic interpretation of Bayes' rule, the most informative priors are delta distributions $\delta_{\vec \theta_0}(\vec \theta)$, whereas in the stochastic interpretation, the most informative priors are the true ones $p_0(\vec \theta)$. For both cases, the most noninformative priors are uniform distributions.}) to admit the lack of prior knowledge  \citep{berger1994overview}. In this case, the data distribution $p(\bm y | \bm \theta)$ is assumed to be accurate, and Bayes' rule \eqref{eq:bayes-rule} calculating the posterior remains unchanged.

    \item Modifying the data distribution to better accommodate data evidence under the likelihood uncertainty \citep{nguyen2019optimistic}; \citep[Chapters~4 and 5]{huber2009robust}. For a given $\bm \theta$, this amounts to adjusting its $\bm y$-data evidence $p(\bm y | \bm \theta)$. In this case, the prior distribution $p(\bm \theta)$ is assumed to be accurate and the rule generating the posterior stays the same as \eqref{eq:bayes-rule}.

    \item Leveraging the maximum entropy scheme to adjust, and hence balance the relative importance between, the prior distribution $p(\bm \theta)$ and the data distribution $p(\bm y |\bm \theta)$ \citep{wang2022distributionally}. To be specific, if $p(\bm \theta)$ is uncertain, we replace it with the maximum entropy distribution within its neighbor (e.g., a distributional ball), thus downweighting the impact of the prior belief on the posterior distribution. That is, in calculating the posterior distribution using \eqref{eq:bayes-rule}, the prior distribution $p(\bm \theta)$ is replaced with the following maximum entropy distribution $f^*(\bm \theta)$:
    \begin{equation}\label{eq:max-ent-intro}
        \begin{array}{cc}
        f^*(\bm \theta) = \displaystyle \argmax_{f(\bm \theta)} & \displaystyle \int_{\Theta} -f(\bm \theta) \ln f(\bm \theta) \d \bm \theta \\
        \text { s.t. } & 
        \left\{
            \begin{array}{l}
            S(f(\bm \theta), p(\bm \theta)) \leq \epsilon \\
            \int_{\Theta} f(\bm \theta) \d \bm \theta = 1.
            \end{array}
        \right.
        \end{array}
    \end{equation}
    where $S(f(\bm \theta), p(\bm \theta))$ defines a similarity measure between two distributions $f(\bm \theta)$ and $p(\bm \theta)$, and $\epsilon \ge 0$ is the uncertainty quantification level of $p(\bm \theta)$. The larger the radius $\epsilon$ is, the larger the entropy of $f^*(\bm \theta)$ is, and therefore, the more uniformly $f^*(\bm \theta)$ spreads on $\Theta$ and the more the prior distribution $p(\bm \theta)$ is downweighted. The same operation applies to the data distribution $p(\bm y |\bm \theta)$ for every $\vec \theta$, if it is uncertain, to reduce the impact of the data distribution on the posterior distribution. As a result, the relative importance of the prior distribution and the data distribution is determined by their corresponding uncertainty radii. Note that, in this case, Bayes' rule \eqref{eq:bayes-rule} is not altered as well.

    \item Modifying Bayes' rule \eqref{eq:bayes-rule}. One trending and computationally efficient representative in this category is the $\alpha$-posterior\footnote{It is also known as the power posterior, tempered posterior, or coarsened posterior.} by introducing a nonnegative parameter $\alpha$ in calculating the posterior \citep{alquier2020concentration,medina2022robustness}; \citep[Section~6.8.5~and~Chapter~8]{ghosal2017fundamentals}  
    \begin{equation}\label{eq:alpha-posterior-intro}
        p_{\alpha}(\vec \theta|\vec y) \propto p(\vec \theta) \cdot p^\alpha(\vec y|\vec \theta),~~~~~~ \alpha \ge 0,
    \end{equation}
    which solves an $\alpha$-weighted version of \eqref{eq:bayesian-rule-opt}, i.e.,
    \begin{equation}\label{eq:alpha-posterior-intro-opt}
       p_{\alpha}(\vec \theta|\vec y) = \displaystyle \argmin_{q(\bm \theta)}~~ \alpha \cdot \E_{\vec \theta \sim q(\vec \theta)}\left[-\ln p(\bm y |\bm \theta) \right] + \KL{q(\bm \theta)}{p(\bm \theta)},~~~~~~ \alpha \ge 0.
    \end{equation}
    Through the above formula, the prior distribution is assumed to be accurate and the importance of the data distribution is reduced if $0 \le \alpha < 1$, remains unchanged if $\alpha = 1$, and is increased if $\alpha > 1$. 
    Other generalizations to \eqref{eq:bayesian-rule-opt} propose to replace the negative logarithm cost function $-\ln p(\bm y |\bm \theta)$ with a general user-designed version $\rho: \cal Y \times \Theta \to \R_+$, and/or substitute the Kullback–Leibler divergence with other user-specified divergences $\operatorname{D}$ such as the Rényi divergence \citep[Table~1]{knoblauch2022optimization}, giving rise to the Rule-of-Three framework \citep{knoblauch2022optimization} 
    \begin{equation}\label{eq:bayesian-rule-of-three}
       \displaystyle \min_{q(\bm \theta)} \E_{\vec \theta \sim q(\vec \theta)} \left[ \rho(\bm y, \bm \theta) \right] + \Div{q(\bm \theta)}{p(\bm \theta)}.
    \end{equation}
    Note that \eqref{eq:bayesian-rule-of-three} encompasses \eqref{eq:alpha-posterior-intro-opt} as a special case. The $\alpha$-posterior \eqref{eq:alpha-posterior-intro} is well-known to be robust against the misspecified likelihood function (i.e., data distribution) \citep{wu2023comparison,medina2022robustness}, while the posteriors under the Rule-of-Three \eqref{eq:bayesian-rule-of-three} can combat uncertainties in both the prior distribution and the data distribution \citep{knoblauch2022optimization}.
\end{enumerate}

Despite the remarkable capabilities of the existing approaches in addressing uncertainties in both the prior and data distributions, four issues naturally arise:
\begin{enumerate}[label=I\arabic*)]
    \item The $\alpha$-posterior, although computationally attractive due to its closed-form feature, is not reliable in addressing uncertainties in the prior distribution \citep{knoblauch2022optimization}. To be specific, if $\alpha > 1$, the $\alpha$-posterior tends to be robust against the uncertainties in the prior distribution because the prior belief is downweighted relative to the data distribution, but the $\alpha$-posterior simultaneously tends to become concentrated (e.g., a delta distribution) as $\alpha$ increases; see Remark \ref{rem:deficiency-alpha-overconcentration}. This concentration property reduces the spread of the posterior distribution so that in the deterministic interpretation of Bayes' rule, the posterior distribution has no sufficient ability for uncertainty quantification, i.e., the ability to cover the true parameter $\bm \theta_0$ with high probability; see \citet{wu2023comparison}. In contrast, if $0 \le \alpha < 1$, the posterior's over-concentration can be avoided, but at the expense of not being robust against the uncertainties in the prior distribution. For justifications on the deficiency of the $\alpha$-posterior, see \citet[Appendix~B.3.1]{knoblauch2022optimization}. For a motivational example, see below.
    \begin{example}[Deficiency of $\alpha$-Posterior: Over-Concentration]\label{ex:alpha-post-deficiency}
    We consider a deterministic interpretation of Bayes' rule. Suppose the true data-generating distribution is a one-dimensional unit-variance Gaussian density $\cal N(y; 0, 1)$, that is, $\theta_0 = 0$. At measurement $y = 0.1$, the data evidence of $\theta$ (i.e., normalized likelihood function) is $l_{y = 0.1}(\theta) = \cal N(\theta; 0.1, 1)$. If we adopt one expert's prior belief of $\theta_0$ as $p(\theta) \defeq \cal N(\theta; 1, 1)$, then $p(\theta)$ is an unreliable prior because it is significantly miscentered. In this case, a good strategy to downweight the misleading information in the prior is to let the prior $p(\theta)$ be noninformative by inflating its variance, that is, $p(\theta) \defeq \cal N(\theta; 1, 1/\beta)$ where $0 < \beta < 1$. As a result, the corresponding posterior would be $\cal N(\theta; \frac{0.1 + \beta}{1 + \beta}, \frac{1}{1 + \beta})$. By using a small value for $\beta$ (e.g., 0.01), the error of the posterior mean $\frac{0.1 + \beta}{1 + \beta}$ can be greatly reduced without overly concentrating the posterior. In contrast, if the $\alpha$-posterior is used, the corresponding posterior would be $\cal N(\theta; \frac{1 + 0.1\alpha}{1 + \alpha}, \frac{1}{1 + \alpha})$, which implies that reducing the error of the posterior mean $\frac{1 + 0.1 \alpha}{1 + \alpha}$ by a large $\alpha > 1$ also significantly concentrates the posterior. 
    \stp
    \end{example}

    \item Both the prior distribution $p(\vec \theta)$ and the data distribution $p(\vec y | \vec \theta)$ can be independently uncertain, compared to their ground truths $p_0(\vec \theta)$ and $p_0(\vec y | \vec \theta)$, respectively; see explanations in Appendix \ref{append:interpretation-bayes-rule} in supplementary materials. Therefore, the uncertainties in the two distributions should be independently treated in calculating the posterior distribution, which cannot be directly achieved by the $\alpha$-posterior because in the $\alpha$-posterior, only the uncertainty in the data distribution $p(\vec y | \vec \theta)$ is explicitly handled. For a motivational example, see below.
    \begin{example}[Deficiency of $\alpha$-Posterior: Misleading Delta Prior]\label{ex:alpha-post-deficiency-2}
    As in Example \ref{ex:alpha-post-deficiency}, we let $\theta_0 = 0$ and the true data evidence be $l_{y = 0.1}(\theta) = \cal N(\theta; 0.1, 1)$. Suppose we have a stubborn expert who strongly believes that the true value of $\theta_0$ must be $-0.1$, so the associated prior distribution is a delta distribution $\delta_{-0.1}(\theta)$. Consequently, no matter what value of $\alpha \in [0, \infty)$ is used, the resulting $\alpha$-posterior is $\delta_{-0.1}(\theta)$ as well, which cannot cover the true value $\theta_0 = 0$ and indicates the deficiency of the $\alpha$-posterior. In this case, the only mitigating strategy is to reduce the confidence of the expert, for example, changing $\delta_{-0.1}(\theta)$ into $\cal N(\theta; -0.1, 1)$. Technically, we treat $\delta_{-0.1}(\theta)$ as a Gaussian distribution $\cal N(\theta; -0.1, \epsilon)$ with variance $\epsilon \to 0$. By using a proper $\beta$ to inflate the variance of the prior (i.e., reduce the confidence of the expert), we can modify the prior $\cal N(\theta; -0.1, \epsilon)$ into $\cal N(\theta; -0.1, \epsilon/\beta)$ where $\epsilon/\beta = 1$. Then, fusing with the exact likelihood information $\cal N(\theta; 0.1, 1)$, the resulting posterior becomes $\cal N(\theta; 0, 0.5)$, which is centered at the true value $\theta_0 = 0$.
    \stp
    \end{example}

    \item The maximum entropy scheme and the Rule-of-Three framework, although capable of handling uncertainties in both the prior distribution and the data distribution, tend to be computationally prohibitive in real-time data analytics, such as signal processing, because the existence of closed-form solutions is not necessarily guaranteed when complex discrepancy measures for distributions are used; recall $S$ in \eqref{eq:max-ent-intro} and $\operatorname{D}$ in \eqref{eq:bayesian-rule-of-three}. As a result, sufficient computational resources must be employed to accommodate numerical methods, which excludes the applicability of these methods in computation-limited scenarios, e.g., low-cost embedded signal processors. 
    
    \item Designing proper $S$, $\rho$, and $\operatorname{D}$ in \eqref{eq:max-ent-intro} and \eqref{eq:bayesian-rule-of-three} requires non-trivial domain knowledge, which is application-specific and user-unfriendly. 
\end{enumerate}

To address the four issues above, this paper proposes a new optimization-centric interpretation of Bayes' rule, differently from \eqref{eq:bayesian-rule-opt}, and formulates a new framework for generating uncertainty-aware posteriors.
\begin{itemize}[label=\scriptsize$\bullet$]   
    \item \textbf{New Optimization-Centric Interpretation.} By introducing the concept of \textit{likelihood distribution} $l_{\bm y}(\vec \theta) \defeq \frac{p(\bm y| \vec \theta)}{\int_{\Theta} p(\bm y| \vec \theta) \d \vec \theta}$, i.e., a $\bm y$-parametric distribution of $\bm \theta$, we show that the Bayesian posterior \eqref{eq:bayes-rule} solves the following optimization problem
    \begin{equation}\label{eq:new-opt-intro}
        p(\vec \theta|\vec y) = \argmin_{q(\vec \theta)} \KL{q(\vec \theta)}{p(\vec \theta)} + \KL{q(\vec \theta)}{l_{\bm y}(\vec \theta)} + \Ent q(\vec \theta),
    \end{equation}
    where $\Ent q(\vec \theta) \defeq - \int q(\vec \theta) \ln{q(\vec \theta)} \d \vec \theta$ denotes the differential entropy of $q(\vec \theta)$; see Definition \ref{def:likelihood-dist} and Lemma \ref{lem:posterior-dist}. Model \eqref{eq:new-opt-intro} reveals that the posterior distribution $p(\vec \theta|\vec y)$ is a minimum-entropy (i.e., minimum spread) distribution that is simultaneously close to the prior belief $p(\bm \theta)$ and the $\bm y$-data evidence $l_{\bm y}(\bm \theta)$; NB: the evidence from the data $\bm y$ is encoded in the distribution $l_{\bm y}(\bm \theta)$. We notice that the new interpretation \eqref{eq:new-opt-intro} links the Bayesian inference problem to the probability density fusion problem \citep{koliander2022fusion}. To be specific, in Bayesian inference, the posterior distribution $p(\vec \theta|\vec y)$ is an entropy-regularized KL-barycenter of the prior distribution $p(\vec \theta)$ and the likelihood distribution $l_{\bm y}(\vec \theta)$, while in probability density fusion, there is no entropy-regularizer presented when computing the KL-barycenter of different distributions. 

    \item \textbf{New Uncertainty-Aware Framework.} Aligning with the philosophy of Bayesian robustness discussed before (i.e., to balance the relative importance between $p(\bm \theta)$ and $l_{\bm y}(\bm \theta)$ and to control the spread of the resultant posterior distribution), motivated by \eqref{eq:new-opt-intro}, this paper proposes to study the following optimization problem 
    \begin{equation}\label{eq:opt-gen-posterior-dist-intro}
        \min_{q(\vec \theta)} \alpha_1 \KL{q(\vec \theta)}{p(\vec \theta)} + \alpha_2 \KL{q(\vec \theta)}{l_{\bm y}(\vec \theta)} + \alpha_3 \Ent q(\vec \theta),
    \end{equation}
    where $\alpha_1$, $\alpha_2$, and $\alpha_3$ are weights to emphasize or discount the corresponding terms, respectively; the weights can be negative to let the posterior be away from $p(\vec \theta)$ or $l_{\bm y}(\vec \theta)$, or to let the posterior entail large entropy (i.e., large spread). The posterior distribution solving the generalized uncertainty-aware formulation \eqref{eq:opt-gen-posterior-dist-intro} is given by
    \begin{equation}\label{eq:gen-bayes-rule-intro}
        p_g(\vec \theta | \vec y) \propto p^{\frac{\alpha_1}{\alpha_1 + \alpha_2 -\alpha_3}}(\vec \theta) \cdot l^{\frac{\alpha_2}{\alpha_1 + \alpha_2 - \alpha_3}}_{\bm y}(\vec \theta),
    \end{equation}
    if $\alpha_1 + \alpha_2 > \alpha_3$; see Theorem \ref{thm:gen-posterior-dist}. The existing $\alpha$-posterior can be recovered by setting $\alpha_1 = 1$ and $\alpha_2 = \alpha_3 = \alpha$. For notational simplicity, we rewrite \eqref{eq:gen-bayes-rule-intro} by employing the parameter pair $(\alpha, \beta)$ and suggest the uncertainty-aware Bayes' rule in Definition \ref{def:alpha-beta-posterior}:
    \begin{equation}\label{eq:generalized-bayes-rule}
        p_g(\vec \theta | \vec y) \propto p^{\beta}(\vec \theta) \cdot p^{\alpha}(\vec y | \vec \theta),~~~~~~ 0 \le \alpha, \beta < \infty,
    \end{equation}
    which independently handles the uncertainties in the prior and data distributions. 
    This paper names the uncertainty-aware (UA) posterior distribution in \eqref{eq:generalized-bayes-rule} as the $(\alpha,\beta)$-posterior, generalizing the existing $\alpha$-posterior from a new perspective rather than the scheme \eqref{eq:alpha-posterior-intro-opt} and the Rule-of-Three framework \eqref{eq:bayesian-rule-of-three}. The key and nontrivial distinction between \eqref{eq:gen-bayes-rule-intro} and \eqref{eq:generalized-bayes-rule} is that different parameter pairs $(\alpha_1, \alpha_2, \alpha_3)$ determine whether $\alpha$ is independent of $\beta$; see Table \ref{tab:examples}.
\end{itemize}

Compared to the $\alpha$-posterior strategy, the maximum entropy scheme, and the Rule-of-Three framework, the practical benefits of the proposed UA framework \eqref{eq:opt-gen-posterior-dist-intro} and the UA posterior \eqref{eq:generalized-bayes-rule} are four-fold:
\begin{itemize}[label=\scriptsize$\bullet$]    
    \item The proposed framework \eqref{eq:opt-gen-posterior-dist-intro} can \bfit{independently} achieve robustness against uncertainties in the prior distribution by tuning $\alpha_1 \ge 0$ (i.e., to be large if the prior is reliable and small otherwise) and that in the likelihood distribution by tuning $\alpha_2 \ge 0$. Simultaneously, the framework can restrict the posterior distribution from over-concentration by using small values of $\alpha_3 \ge 0$. Therefore, when $\alpha_1$, $\alpha_2$, and $\alpha_3$ are suitably balanced, Issues I1) and I2) faced by the existing $\alpha$-posterior can be avoided by the $(\alpha,\beta)$-posterior \eqref{eq:generalized-bayes-rule}; NB: $\alpha$ and $\beta$ are determined by $\alpha_1$, $\alpha_2$, and $\alpha_3$; recall Examples \ref{ex:alpha-post-deficiency} and \ref{ex:alpha-post-deficiency-2}. Intuitively, this benefit comes from the extra freedom introduced by the new parameter $\beta$, that is, by \bfit{not} forcing $\alpha_2 = \alpha_3$; see Table \ref{tab:examples}.

    \item The $(\alpha,\beta)$-posterior is as computationally lightweight as the $\alpha$-posterior. This computational benefit cannot be always shared by the existing Rule-of-Three framework under generic complex cost functions $\rho$ and divergences $\operatorname{D}$, as well as the existing maximum entropy scheme under generic complex distributional balls. In addition, no domain knowledge and efforts are required to properly design $S$, $\rho$, and $\operatorname{D}$. Therefore, Issues I3) and I4) can also be attacked by the $(\alpha,\beta)$-posterior. 
    
    \item In the $(\alpha,\beta)$-posterior, the maximum entropy transform in \eqref{eq:max-ent-intro} on the prior and data distributions can be achieved by adjusting parameters $\alpha$ and $\beta$. To be specific, for example, $p^\beta(\bm \theta)$ (after normalization) can have larger or smaller entropy than $p(\bm \theta)$; see Corollary \ref{cor:alpha-scale}. Recall that the entropy value of a distribution quantifies its spread on $\Theta$; the more the prior or likelihood distribution spreads on $\Theta$, the less influence it has on the posterior.

    \item  The $\beta$-scaled prior distribution $p^\beta({\vec \theta})$, after normalization, can be closer in the KL sense to the underlying true prior distribution $p_0({\vec \theta})$ than the original $p({\vec \theta})$, for some $\beta \ge 0$; see Theorem \ref{thm:benefits-alpha-distributions}. The similar statement applies to $\alpha$-scaled $l^{\alpha}_{\vec y}(\vec \theta)$ and $l_{\vec y}(\vec \theta)$. This scaling-for-closeness property justifies the superior robustness of the proposed $(\alpha,\beta)$-posterior over the conventional Bayesian posterior [i.e., $(1,1)$-posterior] and the existing $\alpha$-posterior [i.e., $(\alpha,1)$-posterior]. Specifically, as special cases of the $(\alpha,\beta)$-posterior, in the sense of Philosophy \ref{phi:robustness}, the Bayesian posterior and $\alpha$-posterior are less robust.
\end{itemize}

Philosophically, the uncertainty awareness of the $(\alpha,\beta)$-posterior includes two aspects: conservativeness and aggressiveness. When an information source (i.e., prior belief or data evidence) is believed to be overly aggressive (i.e., exceedingly concentrated), we pursue conservativeness and downweight the information by increasing the spread. Conversely, when the information source is believed to be overly conservative (i.e., exceedingly spread), we pursue aggressiveness and upweight the information by increasing the concentration. Motivated by \eqref{eq:new-opt-intro}, this paper employs the entropy value of a distribution as the measure of its spread. We show that for $\alpha, \beta < 1$, the $(\alpha,\beta)$-posterior can technically reflect the conservativeness philosophy to downweight the prior belief and/or data evidence, as the maximum entropy scheme does; in contrast, for $\alpha, \beta > 1$, the $(\alpha,\beta)$-posterior can reflect the aggressiveness philosophy and upweight the prior belief and/or data evidence; see Corollary \ref{cor:alpha-scale}. In addition, from \eqref{eq:opt-gen-posterior-dist-intro}, we see that the entropy term controls the spread of the resulting posterior: If $\alpha_3$ is large, then the posterior has a small entropy value (i.e., small spread and strong concentration); if $\alpha_3$ is small, then the posterior has a large entropy value (i.e., large spread and weak concentration).

Although the two parameters $\alpha$ and $\beta$ bring excellent freedom for uncertainty awareness in calculating the posterior distribution, their optimal values cannot be determined in a universal way, especially considering different performance measures in diverse real-world applications. To be specific, the theoretical optimality of the parameters relies on the underlying true but unknown data-generating distribution [cf. \citet[Lemma~1]{medina2022robustness}, \citet[Section~3]{wu2023comparison}, and Theorem \ref{thm:benefits-alpha-distributions}], while the practical optimality depends on the specific applications and the employed empirical performance measures (cf. Philosophy \ref{phi:robustness}). 
This dilemma also arises in tuning the coefficient $\alpha$ of the $\alpha$-posterior \eqref{eq:alpha-posterior-intro} \citep[p.~26]{medina2022robustness}; \citep{wu2023comparison}, the radii $\epsilon$'s of the maximum entropy scheme \eqref{eq:max-ent-intro} \citep[Section~V]{wang2022distributionally}, the parameters involved in $\rho$ and $\operatorname{D}$ of the Rule-of-Three framework \eqref{eq:bayesian-rule-of-three} \citep[p.~36,~57]{knoblauch2022optimization}, and the fusion weights of probability density functions \citep[Section~VII]{koliander2022fusion}. Hence, data-driven empirical tuning methods for $\alpha$ and $\beta$ are unavoidable to achieve satisfactory performances for specific real-world problems (see Subsection \ref{subsec:parameter-tuning} for details); a similar fact is also acknowledged by \citet[p.~26]{medina2022robustness}; \citet{wu2023comparison} in the practical use of the $\alpha$-posterior. Key findings of this paper are as follows:
\begin{itemize}[label=\scriptsize$\bullet$]  
    \item In some applications, e.g., maximum \textit{a posteriori} (MAP) classification, only the ratio $\alpha/\beta$ matters, while in others, e.g., Bayesian posterior estimation and non-MAP classification, $\alpha$ and $\beta$ independently contribute to performance. Hence, for MAP classification problems, using the $\alpha$-posterior is sufficient. However, for the general estimation and non-MAP classification problems, the $(\alpha, \beta)$-posterior is necessary. For motivations, recall Examples \ref{ex:alpha-post-deficiency} and \ref{ex:alpha-post-deficiency-2}; for technical details, see Remarks \ref{rem:deficiency-alpha-overconcentration} and  \ref{rem:deficiency-alpha-posterior}, and Subsection \ref{subsec:applications}; for experimental justifications, see Subsection \ref{sec:applications}.

    \item In the literature of the $\alpha$-posterior, to establish desired properties, such as posterior consistency, asymptotic normality, and robustness, $\alpha \le 1$ is usually required; see, e.g., \citet{miller2019robust,medina2022robustness}, as well as the popular SafeBayes algorithm \citep{grunwald2012safe}. However, in this paper, to achieve good empirical performances, it is found that $\alpha$ may be either larger or smaller than one, which is also reported in \citet[Tables~1-4]{wu2023comparison}. The same observation happens for $\beta$ as well. Hence, building favored theoretical properties in one aspect does not necessarily benefit real-world performances in other aspects; see Example \ref{ex:benefits-alpha-distributions} for motivation. For experimental justifications, see Subsection \ref{sec:applications}.
\end{itemize}


\textbf{Notation}: Random and deterministic quantities are denoted by upright and italic symbols (e.g., $\rvec y$ and $\vec y$), respectively. Let $\R^d$ denote the $d$-dimensional real space. We use $p_{\rvec y}(\vec y)$ to denote the probability density (resp. mass) function of $\rvec y$ if $\rvec y$ is continuous (resp. discrete); when it is clear from the contexts, $p(\vec y)$ is used as a shorthand for $p_{\rvec y}(\vec y)$. 
The multivariate  Gaussian distribution with mean $\vec \mu$ and covariance $\mat \Sigma$ is denoted as $\cal N(\vec \mu, \mat \Sigma)$ and its density function as $\cal N(\cdot; \vec \mu, \mat \Sigma)$. Let $[r] \defeq \{1,2,\ldots,r\}$ for an integer $r$. 




\section{Technical and Experimental Details}\label{sec:main-results}
This paper assumes that (probability) density functions exist (with respect to the Lebesgue measure). 
To reduce the presentation length, the main results are only given under probability density functions; for probability mass functions, one may just change integrals to sums (with technically trivial adaptations). 
To reduce notational clutter, we implicitly mean $\vec y \in \cal Y$ and $\vec \theta \in \Theta$ throughout the paper, unless otherwise stated.

\subsection{Uncertainty-Aware Bayes' Rule}
We begin with the concept of likelihood distribution.
\begin{definition}[Likelihood Distribution]\label{def:likelihood-dist}
Let
\begin{equation}\label{eq:likelihood-dist}
l_{\bm y}(\vec \theta) \defeq \frac{p(\bm y| \vec \theta)}{\int_{\Theta} p(\bm y| \vec \theta) \d \vec \theta},~~~\forall \vec y
\end{equation} 
define the \textit{likelihood distribution} of $\vec \theta$ induced by the likelihood function $\vec \theta \mapsto p(\bm y | \vec \theta)$ evaluated at the sample $\bm y$. \stp
\end{definition}

To ensure that \eqref{eq:likelihood-dist} is well-defined, the following assumption is necessary.

\begin{assumption}\label{assump:integral-finiteness}
We assume that ${\int_{\Theta} p(\bm y| \vec \theta) \d \vec \theta} < \infty$, for every $\vec y \in \cal Y$.
\stp
\end{assumption}

In the \bfit{practice} of Bayesian inference, the above assumption is not restrictive. Representative conditions under which Assumption \ref{assump:integral-finiteness} holds include the following:
\begin{itemize}[label=\scriptsize$\bullet$]
    \item \textit{Location Family}: The parameter $\vec \theta$ denotes the location of a distribution on $\Theta = \R^d$, e.g., Gaussian, Laplacian, Student’s \textit{t}, and Cauchy. In this case, $p(\vec y | \vec \theta) \defeq f(\vec y - \vec \theta)$ for some probability density functions $f$ on $\R^d$. We have $ \int_{\R^d} p(\vec y | \vec \theta) \d \vec \theta = \int_{\R^d} f(\vec y - \vec \theta) \d \vec \theta = \int_{\R^d} f(\vec u) \d \vec u = \int_{\R^d} f(\vec y - \vec \theta) \d \vec y = 1$. Location families have wide applications in statistical signal processing and machine learning, e.g., mean estimation.
    
    \item \textit{Finite-Support Family}: The parameter space $\Theta \subseteq \R^d$ is finite (i.e., discrete and bounded), and $\vec y \mapsto p(\bm y| \vec \theta)$ is bounded on $\cal Y$ for every $\vec \theta \in \Theta$. Finite-support families have wide applications in statistical signal processing and machine learning, e.g., classification problems with finitely many classes and Bayesian model averaging with finitely many models. Also, this condition is standard in the practical and numerical implementations of Bayesian inference (e.g., particle filters, Monte--Carlo simulations), where finite approximations of $\Theta$ are required.

    \item \textit{Compact-Support Family}: The parameter space $\Theta \subseteq \R^d$ is compact and $\vec \theta \mapsto p(\bm y| \vec \theta)$ is continuous (or bounded) on $\Theta$. Requiring the compactness of $\Theta$ and the continuity (or boundedness) of $\vec \theta \mapsto p(\bm y| \vec \theta)$ on $\Theta$ is realistic: for example, facing a Gaussian data-generating distribution $\cal N(0, \sigma^2)$, we can assume that the variance $\sigma^2 \le M$ where $M$ is a sufficiently large value. In this case, a bounded-support prior distribution $p(\sigma^2)$ needs to be specified, which, although potentially breaking conjugacy,\footnote{Note that using an inverse-gamma prior (which has an unbounded support) for $\sigma^2$ is a matter of conjugacy and theoretical convenience, rather than an inherent necessity of Bayesian inference.} does not violate the principle of Bayesian inference in practice. This treatment is also standard in numerical implementations (or numerical approximations) of Bayesian inference, where compact approximations of $\Theta$ are required.

    \item \textit{Special Distributions}: The parameter $\vec \theta$ denotes the rate of a Poisson distribution; the parameter $\vec \theta$ denotes the rate of an exponential  distribution; the parameter $\vec \theta$ denotes the success probability of a Bernoulli distribution.
\end{itemize}

\begin{remark}[Improper Likelihood]
    If $\vec \theta \mapsto p(\bm y| \vec \theta)$ is not integrable on $\Theta$, it plays a similar role as an improper prior $\vec \theta \mapsto p(\vec \theta)$. In this case, although theoretical analysis becomes difficult, the practical use of the proposed $(\alpha, \beta)$-posterior is still possible if $\int_{\Theta} p^\beta(\vec \theta) p^\alpha(\vec y | \vec \theta) \d \vec \theta$ is finite. Otherwise, redesign $\Theta$ to achieve propriety.
    \stp
\end{remark}

The likelihood distribution $l_{\bm y}(\vec \theta)$ is a $\vec y$-parametric distribution of $\vec \theta$. A direct result of Definition \ref{def:likelihood-dist} is that the posterior distribution $p(\vec \theta | \vec y)$ given by the conventional Bayes' rule \eqref{eq:bayes-rule} can be expressed as
\begin{equation}\label{eq:bayes-rule-2}
    p(\vec \theta|\vec y) \propto p(\vec \theta) \cdot l_{\bm y}(\vec \theta).
\end{equation}

Hereafter, when it is clear from the contexts, we suppress the notational dependence on $\bm y$ and use $l(\vec \theta)$ as a shorthand for $l_{\bm y}(\vec \theta)$. The lemma below gives an interpretation of the origin of Bayes' rule \eqref{eq:bayes-rule-2}.
\begin{lemma}[Conventional Bayes' Rule]\label{lem:posterior-dist}
The posterior distribution $p(\vec \theta | \vec y)$ given by Bayes' rule \eqref{eq:bayes-rule} [or \eqref{eq:bayes-rule-2}] solves
\begin{equation}\label{eq:opt-posterior-dist}
\min_{q(\vec \theta)} \KL{q(\vec \theta)}{p(\vec \theta)} + \KL{q(\vec \theta)}{l(\vec \theta)} + \Ent q(\vec \theta).
\end{equation}
\end{lemma}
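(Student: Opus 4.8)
The plan is to show that the three-term objective in \eqref{eq:opt-posterior-dist} collapses, after cancellation of entropy terms, to a single Kullback--Leibler divergence plus a $q$-independent constant, so that Gibbs' inequality pins down the minimizer uniquely.

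First I would expand each term against the Lebesgue measure: writing $q = q(\vec \theta)$, $p = p(\vec \theta)$, $l = l(\vec \theta)$ for brevity,
\[
\KL{q}{p} = \int_{\Theta} q \ln q \,\d \vec \theta - \int_{\Theta} q \ln p \,\d \vec \theta, \qquad \KL{q}{l} = \int_{\Theta} q \ln q \,\d \vec \theta - \int_{\Theta} q \ln l \,\d \vec \theta,
\]
while $\Ent q = - \int_{\Theta} q \ln q \,\d \vec \theta$. Summing the three, the two copies of $+\int_\Theta q \ln q\,\d\vec\theta$ and the single $-\int_\Theta q\ln q\,\d\vec\theta$ combine into one, giving
\[
\KL{q}{p} + \KL{q}{l} + \Ent q = \int_{\Theta} q(\vec \theta) \ln \frac{q(\vec \theta)}{p(\vec \theta)\, l(\vec \theta)} \,\d \vec \theta .
\]

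Next I would renormalize the unnormalized density $p\,l$. Put $Z \defeq \int_{\Theta} p(\vec \theta)\, l(\vec \theta)\,\d \vec \theta$, a finite positive constant independent of $q$ under the standing assumptions that make the integral in \eqref{eq:likelihood-dist} well-defined (e.g., $\Theta$ compact, $p(\vec y\mid\cdot)$ bounded measurable, $p(\vec\theta)$ a genuine prior density). Then $\pi(\vec \theta) \defeq p(\vec \theta)\, l(\vec \theta)/Z$ is a probability density on $\Theta$, and
\[
\int_{\Theta} q \ln \frac{q}{p\, l} \,\d \vec \theta = \int_{\Theta} q \ln \frac{q}{\pi}\,\d \vec \theta - \ln Z = \KL{q}{\pi} - \ln Z .
\]
Since $-\ln Z$ is constant in $q$, minimizing \eqref{eq:opt-posterior-dist} over probability densities $q$ is equivalent to minimizing $\KL{q}{\pi}$. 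By Gibbs' inequality $\KL{q}{\pi} \ge 0$ with equality iff $q = \pi$ almost everywhere, so the unique minimizer is $q^\star(\vec\theta) = \pi(\vec\theta) \propto p(\vec \theta)\, l(\vec \theta)$, which by \eqref{eq:bayes-rule-2} is exactly the Bayes posterior $p(\vec\theta \mid \vec y)$.

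I do not expect a genuine obstacle here; the only point needing care is the legitimacy of the term-by-term rearrangement in the first display, which presumes the individual integrals are finite. It is enough to take the feasible set to be probability densities $q$ with $\KL{q}{p} < \infty$, $\KL{q}{l} < \infty$, and finite differential entropy; this set contains $\pi$ itself, so restricting to it loses no candidate minimizer, and on it all manipulations above are valid. The remainder is bookkeeping of these integrability conditions.
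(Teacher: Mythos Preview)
Your proof is correct and follows essentially the same route as the paper: both collapse the three-term objective to $\int q\ln\frac{q}{p\,l}\,\d\vec\theta$ via the entropy cancellation. The only cosmetic difference is that the paper finishes by identifying this with the already-known variational characterization \eqref{eq:bayesian-rule-opt}, whereas you normalize $p\,l$ and invoke Gibbs' inequality directly---a slightly more self-contained ending, but the same argument.
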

\begin{proof}
See Appendix \ref{append:posterior-dist} 
in supplementary materials.
\end{proof}

Lemma \ref{lem:posterior-dist} indicates that the posterior distribution $p(\vec \theta|\vec y)$ is an entropy-regularized equal-weight KL-barycenter of the prior distribution $p(\vec \theta)$ and the likelihood distribution $l(\vec \theta)$; intuitively, $p(\vec \theta|\vec y)$ is a minimum-entropy distribution that is simultaneously close to both the prior distribution (i.e., prior belief) and the likelihood distribution (i.e., $\vec y$-data evidence). 

Aligning with the philosophy of Bayesian robustness discussed before (i.e., to balance the relative importance between $p(\bm \theta)$ and $l(\bm \theta)$ and to control the spread of the resultant posterior distribution), we generalize the optimization problem \eqref{eq:opt-posterior-dist} to
\begin{equation}\label{eq:opt-gen-posterior-dist}
\min_{q(\vec \theta)} \alpha_1 \KL{q(\vec \theta)}{p(\vec \theta)} + \alpha_2 \KL{q(\vec \theta)}{l(\vec \theta)} + \alpha_3 \Ent q(\vec \theta),
\end{equation}
where $\alpha_1$, $\alpha_2$, and $\alpha_3$ are independent weights. To let $q(\vec \theta)$ be close to $p(\vec \theta)$ and $l(\vec \theta)$, without loss of practicality, the following assumption is imposed.
\begin{assumption}\label{assum:alpha}
    We assume that $0 \le \alpha_1 < \infty$, $0 \le \alpha_2 < \infty$, and $-\infty < \alpha_3 < \infty$. \stp
\end{assumption}

Depending on whether to pursue the maximum or minimum entropy of $q(\vec \theta)$, $\alpha_3$ can take any finite value on the whole real line $\R$. The following lemma shows that when $\alpha_3 > \alpha_1 + \alpha_2$, the optimization problem \eqref{eq:opt-gen-posterior-dist} is ill-posed.

\begin{lemma}\label{lemma:problem-finiteness}
If $\alpha_3 > \alpha_1 + \alpha_2$, the minimum objective of \eqref{eq:opt-gen-posterior-dist} is negative infinity, achieved at any delta distributions.
\end{lemma}
\begin{proof}
See Appendix \ref{append:problem-finiteness}
in supplementary materials.
\end{proof}

The above result is intuitively immediate because if $\alpha_3 > \alpha_1 + \alpha_2$, the minimization focuses primarily on finding the minimum-entropy distributions, driving the minimizer(s) to delta distributions. Hereafter, for practicality, we only focus on the case where $\alpha_3 \le \alpha_1 + \alpha_2$, so that \eqref{eq:opt-gen-posterior-dist} can be finite. We term the distribution solving the generalized problem \eqref{eq:opt-gen-posterior-dist} the \textit{uncertainty-aware posterior distribution}.
\begin{theorem}[Uncertainty-Aware Bayes’ Rule]\label{thm:gen-posterior-dist}
Under Assumptions \ref{assump:integral-finiteness} and \ref{assum:alpha}, the UA posterior distribution $p_g(\vec \theta | \vec y)$ solving \eqref{eq:opt-gen-posterior-dist} is given by
\begin{equation}\label{eq:gen-bayes-rule}
    p_g(\vec \theta | \vec y) \propto p^{\frac{\alpha_1}{\alpha_1 + \alpha_2 -\alpha_3}}(\vec \theta) \cdot l^{\frac{\alpha_2}{\alpha_1 + \alpha_2 - \alpha_3}}(\vec \theta),
\end{equation}
or equivalently,
\begin{equation}\label{eq:gen-bayes-rule-2}
    p_g(\vec \theta | \vec y) \propto p^{\frac{\alpha_1}{\alpha_1 + \alpha_2 - \alpha_3}}(\vec \theta) \cdot p^{\frac{\alpha_2}{\alpha_1 + \alpha_2 - \alpha_3}}(\vec y | \vec \theta),
\end{equation}
when $\alpha_3 < \alpha_1 + \alpha_2$, provided that the right-hand-side terms are integrable on $\Theta$. When $\alpha_3 = \alpha_1 + \alpha_2$, $p_g(\vec \theta | \vec y)$ is an arbitrary distribution supported on the set $\Theta^*$ where
$
\Theta^* \defeq \argmax_{\vec \theta} \alpha_1 \ln p(\vec \theta) + \alpha_2 \ln l(\vec \theta)
$ 
contains all weighted maximum \textit {a-posteriori} estimates.
\end{theorem}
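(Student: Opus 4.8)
The plan is to solve the variational problem in \eqref{eq:opt-gen-posterior-dist} directly using the calculus of variations, treating the objective as a functional of the density $q$ over the convex set of probability densities on $\Theta$. First I would write the objective explicitly by expanding each term as an integral:
\[
J[q] = \alpha_1 \int q \ln\frac{q}{p} \,\d\vec\theta + \alpha_2 \int q \ln\frac{q}{l} \,\d\vec\theta - \alpha_3 \int q \ln q \,\d\vec\theta,
\]
collect the logarithmic terms, and note that the coefficient of $\int q \ln q$ is $\alpha_1 + \alpha_2 - \alpha_3$. Then I would form the Lagrangian by adjoining the normalization constraint $\int q \,\d\vec\theta = 1$ with a multiplier $\lambda$, take the functional (Gateaux) derivative with respect to $q(\vec\theta)$ pointwise, and set it to zero. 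This yields, at each $\vec\theta$,
\[
(\alpha_1 + \alpha_2 - \alpha_3)\bigl(\ln q(\vec\theta) + 1\bigr) - \alpha_1 \ln p(\vec\theta) - \alpha_2 \ln l(\vec\theta) + \lambda = 0,
\]
and solving for $q$ gives, in the regime $\alpha_1 + \alpha_2 - \alpha_3 > 0$,
\[
q(\vec\theta) \propto p^{\alpha_1/(\alpha_1+\alpha_2-\alpha_3)}(\vec\theta)\, l^{\alpha_2/(\alpha_1+\alpha_2-\alpha_3)}(\vec\theta),
\]
which is exactly \eqref{eq:gen-bayes-rule}; the equivalent form \eqref{eq:gen-bayes-rule-2} follows since $l(\vec\theta)$ and $p(\vec y|\vec\theta)$ differ only by the $\vec\theta$-free normalizing constant $\int_\Theta p(\vec y|\vec\theta)\,\d\vec\theta$, which is absorbed into the proportionality. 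To confirm this stationary point is the global minimizer rather than a saddle or maximizer, I would check convexity: when $\alpha_1+\alpha_2-\alpha_3>0$ the functional $J[q]$ is strictly convex in $q$ (it is a nonnegative combination of the convex maps $q\mapsto\int q\ln(q/p)$ and $q\mapsto\int q\ln(q/l)$ plus $(\alpha_1+\alpha_2-\alpha_3-\alpha_1-\alpha_2)\int q \ln q = -\alpha_3 \cdots$; more cleanly, $J[q] = (\alpha_1+\alpha_2-\alpha_3)\int q\ln q - \int q(\alpha_1\ln p + \alpha_2\ln l)$, and $\int q\ln q$ is strictly convex while the remaining term is linear in $q$), so the unique critical point on the probability simplex is the global minimum; integrability of the right-hand side is assumed in the statement so the minimizer is a genuine density.

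For the boundary case $\alpha_3 = \alpha_1 + \alpha_2$, the leading $\int q \ln q$ term vanishes and the objective collapses to the linear functional $-\int q(\vec\theta)\bigl(\alpha_1\ln p(\vec\theta) + \alpha_2\ln l(\vec\theta)\bigr)\,\d\vec\theta$ over the probability simplex. Minimizing a linear functional over the simplex of densities pushes all mass onto the set where the integrand coefficient $\alpha_1 \ln p(\vec\theta) + \alpha_2 \ln l(\vec\theta)$ is maximal, i.e., onto $\Theta^\ast = \argmax_{\vec\theta}\{\alpha_1\ln p(\vec\theta)+\alpha_2\ln l(\vec\theta)\}$; any distribution supported there attains the infimum, giving the stated non-uniqueness. (If $\Theta^\ast$ is measure-zero one interprets this via a limiting/supremum argument or via Dirac measures — I would remark that the minimizing value is still $-\max_{\vec\theta}\{\alpha_1\ln p+\alpha_2\ln l\}$.)

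The main obstacle I anticipate is making the calculus-of-variations step fully rigorous rather than formal: one must justify interchanging the derivative with the integral, argue that the pointwise Euler–Lagrange condition is both necessary and sufficient on the constrained domain, and handle the cases where $p$ or $l$ vanish on parts of $\Theta$ (so that $\ln p$ or $\ln l$ is $-\infty$) by restricting attention to $q$ absolutely continuous with respect to $p\,l$. A cleaner and more robust route, which I would use if the variational argument feels delicate, is the Gibbs-variational (Donsker–Varadhan) identity: for any fixed reference density $r$ and any density $q$, $\int q\ln(q/r) = \KL{q}{r} \ge 0$ with equality iff $q=r$. Writing $J[q] = (\alpha_1+\alpha_2-\alpha_3)\,\KL{q}{\,\pi\,} + C$, where $\pi \propto p^{\alpha_1/(\alpha_1+\alpha_2-\alpha_3)} l^{\alpha_2/(\alpha_1+\alpha_2-\alpha_3)}$ is the claimed posterior and $C$ is a constant independent of $q$ (the log-normalizer), makes the minimizer manifest and the whole proof a one-line completion-of-the-square in KL divergence; this also transparently recovers Lemma \ref{lem:posterior-dist} as the special case $\alpha_1=\alpha_2=1$, $\alpha_3=1$.
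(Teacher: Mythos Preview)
Your proposal is correct. The alternative ``cleaner route'' you describe at the end --- rewriting the objective as $(\alpha_1+\alpha_2-\alpha_3)\,\KL{q}{\pi} + C$ with $\pi$ the claimed posterior and reading off the minimizer --- is exactly the paper's proof: the paper combines the three terms into $\int q \ln \frac{q^{\alpha_1+\alpha_2-\alpha_3}}{p^{\alpha_1} l^{\alpha_2}}\,\d\vec\theta$, factors out the positive scalar $\alpha_1+\alpha_2-\alpha_3$, and recognizes what remains (after inserting the normalizer) as a KL divergence from $\pi$. Your primary Euler--Lagrange route is a valid alternative that reaches the same stationary point via the first-order condition plus a convexity check; it is more laborious and, as you correctly anticipate, requires care about interchanging derivative and integral and about the support of $p$ and $l$, whereas the completion-of-the-square argument sidesteps all of this and delivers global optimality in one line. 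The boundary case $\alpha_3 = \alpha_1+\alpha_2$ is handled the same way in both your sketch and the paper: the objective becomes linear in $q$, so any distribution supported on the argmax set $\Theta^*$ is optimal.
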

\begin{proof}
See Appendix \ref{append:gen-posterior-dist} 
in supplementary materials.
\end{proof}


By rewriting the uncertainty-ware posterior distribution in \eqref{eq:gen-bayes-rule} for notational simplicity, we introduce the $(\alpha,\beta)$-posterior.
\begin{definition}[$(\alpha,\beta)$-Posterior]\label{def:alpha-beta-posterior}
The $(\alpha,\beta)$-posterior induced by the prior distribution $p(\vec \theta)$ and the likelihood distribution $l(\vec \theta)$ is defined as
\begin{equation}\label{eq:alpha-beta-posterior}
    p_g(\vec \theta | \vec y) \propto p^{\beta}(\vec \theta) \cdot l^{\alpha}(\vec \theta),~~~~~0 \le \alpha, \beta < \infty
\end{equation}
if the right-hand-side term is integrable on $\Theta$. When $\alpha$ and $\beta$ are infinity, the $(\alpha,\beta)$-posterior $p_g(\vec \theta | \vec y)$ is defined as an arbitrary distribution supported on $\Theta^*$ where $\Theta^* \defeq \argmax_{\vec \theta} \alpha_1 \ln p(\vec \theta) + \alpha_2 \ln l(\vec \theta)$.
\stp
\end{definition}

Note that, compared with \eqref{eq:gen-bayes-rule}, $\beta \defeq \frac{\alpha_1}{\alpha_1 + \alpha_2 - \alpha_3} \in [0, \infty)$ and $\alpha \defeq \frac{\alpha_2}{\alpha_1 + \alpha_2 - \alpha_3} \in [0, \infty)$, if $\alpha_3 < \alpha_1 + \alpha_2$. When $\alpha_3 \uparrow (\alpha_1 + \alpha_2)$, $\alpha$ and $\beta$ simultaneously tend to infinity. 

\begin{remark}
The value of $\alpha_3$ controls the domain of $(\alpha, \beta)$, and therefore, the independence between $\alpha$ and $\beta$; for example, if $\alpha_3 \ge 0$ or $\alpha_3 = 0$ is additionally required, then $\alpha + \beta \ge 1$ or $\alpha + \beta = 1$ must be appended in Definition \ref{def:alpha-beta-posterior}. On the other hand, if we allow $\alpha_1$ and $\alpha_2$ to be negative, then $\beta$ and $\alpha$ can be negative as well. Negative values for $\alpha_1$ (resp. $\alpha_2$) imply that $q(\vec \theta)$ is expected to be far away from $p(\vec \theta)$ [resp. $l(\vec \theta)$]. Without loss of practicality, this paper focuses on the specifications in Assumption \ref{assum:alpha}.
\stp
\end{remark}


Table \ref{tab:examples} gives illustrative examples of the $(\alpha,\beta)$-posterior; for detailed technical conditions, derivations, and insights, see Appendix \ref{append:examples-bayes-rule} in supplementary materials. Note that different parameter pairs $(\alpha_1, \alpha_2, \alpha_3)$ determine whether $\alpha$ is independent of $\beta$, as well as the domains of $\alpha$ and $\beta$.

\begin{table*}[!htbp]
\caption{Particular Examples of $(\alpha,\beta)$-Posterior.}
\centering
\begin{tabularx}{\textwidth}{X|X|X}
\hline
\textbf{Name} & \textbf{Definition} & \textbf{Conditions} \\
\hline

       $\alpha$-posterior       &     $p(\vec \theta) \cdot l^{\alpha}(\vec \theta)$     &   $\alpha_2 = \alpha_3$ and $\alpha_1 > 0$ \\
       $\beta$-posterior        &     $p^{\beta}(\vec \theta) \cdot l(\vec \theta)$   &  $\alpha_1 = \alpha_3$ and $\alpha_2 > 0$ \\
       $\gamma$-posterior       &     $p^{\gamma}(\vec \theta) \cdot l^{\gamma}(\vec \theta)$     &  $\alpha_1 = \alpha_2$ and $2\alpha_1 > \alpha_3$ \\ 
       $\alpha$-likelihood      &     $l^{\alpha}(\vec \theta)$    &  $\alpha_1 = 0$ and $\alpha_2 > \alpha_3$ \\
       $\alpha$-prior           &     $p^{\alpha}(\vec \theta) $    &  $\alpha_2 = 0$ and $\alpha_1 > \alpha_3$ \\
       $\alpha$-pooled posterior  &   $p^{\alpha}(\vec \theta) \cdot l^{1-\alpha}(\vec \theta)$     &  $\alpha_3 = 0$ \\ 
\hline
       \multicolumn{3}{l}{\tabincell{l}{Note: In $\alpha$-pooled posterior, $\alpha \in [0, 1]$; other $\alpha$, $\beta$, and $\gamma$ take values on $[0, \infty)$.}}
\end{tabularx}
\label{tab:examples}
\end{table*}

Below, leveraging the concept of the likelihood distribution and the optimization formulation \eqref{eq:opt-gen-posterior-dist}, we explain one of the deficiencies of the $\alpha$-posterior. Note that in \eqref{eq:opt-gen-posterior-dist}, to recover the $\alpha$-posterior, we have $\alpha_1 \defeq 1$ and $\alpha_2 = \alpha_3 \defeq \alpha$; cf. \eqref{eq:alpha-posterior-intro-opt}.

\begin{remark}[Deficiency of $\alpha$-Posterior: Over-Concentration]\label{rem:deficiency-alpha-overconcentration}
When the prior distribution $p(\vec \theta)$ is misleading, it should be downweighted. In the $\alpha$-posterior, this is realized by using a large $\alpha$ [see \eqref{eq:alpha-posterior-intro-opt}], causing severe posterior concentration, because in \eqref{eq:opt-gen-posterior-dist}, the entropy term is penalized by a large coefficient $\alpha_3 = \alpha$. In addition, when $p(\vec \theta)$ needs to be entirely disregarded using $\alpha \to \infty$, the scaled likelihood $l^\alpha(\vec \theta)$, after normalization, collapses to a concentrated distribution supported on maximum-likelihood estimates (e.g., a delta distribution whose entropy value is negative infinity, if $l(\vec \theta)$ is unimodal). Consequently, the posterior concentrates at maximum-likelihood estimates, e.g., a delta distribution if the estimate is unique. However, in the $(\alpha,\beta)$-posterior, to ignore the misleading prior $p(\vec \theta)$, we just need $\beta = 0$ without overly concentrating the posterior.
\stp
\end{remark}

\subsection{Properties of Uncertainty-Aware Bayes' Rule}
The definition of the $(\alpha,\beta)$-posterior motivates us to study the properties of $\alpha$-scaled distributions.
\begin{definition}[$\alpha$-Scaled Distribution]\label{def:alpha-scale}
The $\alpha$-scaled distribution $h^{(\alpha)}(\vec \theta)$ induced by the distribution $h(\vec \theta)$ is defined as 
\begin{equation}\label{eq:alpha-scale}
    h^{(\alpha)}(\vec \theta) \defeq \frac{h^{\alpha}(\vec \theta)}{\int h^{\alpha}(\vec \theta) \d \vec \theta},
\end{equation}
for $0 \le \alpha < \infty$, if $\int h^{\alpha}(\vec \theta) \d \vec \theta$ exists.
\stp
\end{definition}

As the definition implies, $h^{(\alpha)}(\vec \theta) \equiv h(\vec \theta)$ for every $\alpha$ if $h(\vec \theta)$ is a uniform distribution on a bounded support set $\Theta$. 
Inspired by \eqref{eq:opt-gen-posterior-dist}, we investigate the relation between the entropy of $h^{(\alpha)}(\vec \theta)$ and that of $h(\vec \theta)$, that is, how $\alpha$ controls the spread of $h^{(\alpha)}$.

\begin{theorem}\label{thm:alpha-scale-ent-monotonicity}
Let $h(\vec \theta)$ not be a uniform distribution. The function $\alpha \mapsto \Ent h^{(\alpha)}(\vec \theta)$ is monotonically decreasing in $\alpha$ on $[0, \infty)$.
\end{theorem}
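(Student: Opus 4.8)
The plan is to differentiate $\alpha \mapsto \Ent h^{(\alpha)}(\vec\theta)$ with respect to $\alpha$ and show the derivative is strictly negative whenever $h$ is not uniform. Write $Z(\alpha) \defeq \int h^{\alpha}(\vec\theta)\,\d\vec\theta$ and $g \defeq \ln h$, so that $h^{(\alpha)}(\vec\theta) = h^{\alpha}(\vec\theta)/Z(\alpha) = e^{\alpha g(\vec\theta)}/Z(\alpha)$. Thus $h^{(\alpha)}$ is an exponential family with natural parameter $\alpha$ and sufficient statistic $g$. A direct computation gives $\Ent h^{(\alpha)} = -\int h^{(\alpha)} \ln h^{(\alpha)} = \ln Z(\alpha) - \alpha\, \E_{\alpha}[g]$, where $\E_{\alpha}$ denotes expectation under $h^{(\alpha)}$. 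Differentiating and using the standard exponential-family identity $\frac{\d}{\d\alpha}\ln Z(\alpha) = \E_{\alpha}[g]$, everything cancels except $\frac{\d}{\d\alpha}\Ent h^{(\alpha)} = -\alpha\, \frac{\d}{\d\alpha}\E_{\alpha}[g] = -\alpha\, \operatorname{Var}_{\alpha}(g)$, again by the exponential-family identity $\frac{\d}{\d\alpha}\E_{\alpha}[g] = \operatorname{Var}_{\alpha}(g)$.

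From here the conclusion is almost immediate: $\operatorname{Var}_{\alpha}(g) \ge 0$ always, and $\operatorname{Var}_{\alpha}(g) = 0$ would force $g = \ln h$ to be $h^{(\alpha)}$-almost-everywhere constant, hence $h$ itself proportional to a constant on its support, i.e.\ $h$ uniform — contradicting the hypothesis. Therefore $\operatorname{Var}_{\alpha}(g) > 0$ for every $\alpha$, so $\frac{\d}{\d\alpha}\Ent h^{(\alpha)} = -\alpha\,\operatorname{Var}_{\alpha}(g) < 0$ for $\alpha > 0$ and equals $0$ at $\alpha = 0$. Since the derivative is nonpositive on $[0,\infty)$ and strictly negative on $(0,\infty)$, the map $\alpha \mapsto \Ent h^{(\alpha)}$ is monotonically decreasing on $[0,\infty)$, which is the claim. (One should also note $\Ent h^{(0)}$ is the entropy of the uniform distribution on $\Theta$, which is finite since $\Theta$ is assumed to have finite Lebesgue measure, consistent with the compactness discussion after Definition \ref{def:likelihood-dist}.)

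The main obstacle is not the algebra but the analytic justification of differentiating under the integral sign and of the exponential-family moment identities — i.e.\ confirming that $Z(\alpha)$, $\E_{\alpha}[g]$, and $\E_{\alpha}[g^2]$ are finite and that dominated convergence applies on a neighborhood of each $\alpha \in [0,\infty)$. Under the working assumptions of the paper (compact $\Theta$, so $h$ is bounded and $g = \ln h$ has at most a logarithmic singularity where $h \to 0$, which is integrable against any power), these integrability conditions hold, and local domination by an integrable envelope follows from the continuity and monotonicity of $\alpha \mapsto h^{\alpha}$ on compact $\alpha$-intervals; I would state these conditions explicitly and invoke the standard theorem rather than belabor the estimates. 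An alternative, fully self-contained route that sidesteps differentiation is to argue directly: for $\alpha < \alpha'$, the Rényi-type inequality / Jensen's inequality applied to the convex map $t \mapsto t \ln t$ (equivalently, log-convexity of $\alpha \mapsto Z(\alpha)$ together with the resolution $\Ent h^{(\alpha)} = \ln Z(\alpha) - \alpha Z'(\alpha)/Z(\alpha)$) yields $\Ent h^{(\alpha')} \le \Ent h^{(\alpha)}$ with equality iff $h$ is uniform; but the derivative computation above is cleaner and more transparent, so that is the route I would present.
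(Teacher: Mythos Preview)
Your proposal is correct and follows essentially the same approach as the paper: both compute the derivative of $\Ent h^{(\alpha)}$ and show it equals $-\alpha$ times a nonnegative quantity that vanishes only when $h$ is uniform. The only difference is packaging --- the paper writes the derivative out as $\frac{\alpha}{C_\alpha^2}\big\{[\int h^\alpha \ln h]^2 - \int h^\alpha \cdot \int h^\alpha (\ln h)^2\big\}$ and invokes Cauchy--Schwarz, whereas you recognize $\{h^{(\alpha)}\}$ as an exponential family in $\alpha$ with sufficient statistic $\ln h$ and identify that same quantity as $-\alpha\operatorname{Var}_\alpha(\ln h)$; these are the identical inequality in different notation.
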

\begin{proof}
    See Appendix \ref{append:alpha-scale-ent-monotonicity} in supplementary materials.
\end{proof}

Let $E(\alpha)$ denote the entropy difference evaluated at $\alpha \in [0,\infty)$:
$
E(\alpha) \defeq \Ent h^{(\alpha)}(\vec \theta) - \Ent h(\vec \theta).
$ 
Theorem \ref{thm:alpha-scale-ent-monotonicity} implies that $E(\alpha)$ is monotonically decreasing in $\alpha$ on $[0, \infty)$. In addition, it is obvious to see that $E(0) > 0$ and $E(1) = 0$. 
As a result, the following corollary is immediate.
\begin{corollary}\label{cor:alpha-scale}
If $0 \le \alpha < 1$, then $h^{(\alpha)}(\vec \theta)$ has larger entropy than $h(\vec \theta)$; if $1 < \alpha < \infty$, then $h^{(\alpha)}(\vec \theta)$ has smaller entropy than $h(\vec \theta)$.
\stp
\end{corollary}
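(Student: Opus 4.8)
The plan is to read Corollary \ref{cor:alpha-scale} directly off Theorem \ref{thm:alpha-scale-ent-monotonicity}, using only the value of the entropy-difference function $E(\alpha) \defeq \Ent h^{(\alpha)}(\vec \theta) - \Ent h(\vec \theta)$ at the single anchor point $\alpha = 1$. First I would observe that Definition \ref{def:alpha-scale} gives $h^{(1)}(\vec \theta) = h(\vec \theta)$, hence $E(1) = 0$. Theorem \ref{thm:alpha-scale-ent-monotonicity} tells us that $\alpha \mapsto \Ent h^{(\alpha)}(\vec \theta)$, and therefore $\alpha \mapsto E(\alpha)$, is monotonically decreasing on $[0,\infty)$; since $h(\vec \theta)$ is assumed not uniform, this decrease is strict for $\alpha > 0$ (as the proof of that theorem shows, the derivative of $\alpha \mapsto \Ent h^{(\alpha)}$ is $-\alpha \operatorname{Var}_{h^{(\alpha)}}[\ln h]$, which is strictly negative once $\ln h$ is not $h$-a.e.\ constant), so $E$ is strictly decreasing on all of $[0,\infty)$. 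Combining the two facts: for $0 \le \alpha < 1$ we get $E(\alpha) > E(1) = 0$, i.e.\ $\Ent h^{(\alpha)}(\vec \theta) > \Ent h(\vec \theta)$; for $1 < \alpha < \infty$ we get $E(\alpha) < E(1) = 0$, i.e.\ $\Ent h^{(\alpha)}(\vec \theta) < \Ent h(\vec \theta)$. These are exactly the two assertions of the corollary.

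As a sanity check — and to connect with the maximum-entropy narrative of the paper — I would also record the boundary value $E(0) > 0$: at $\alpha = 0$ the scaled distribution $h^{(0)}(\vec \theta)$ is the uniform law on $\Theta$, well defined precisely because the integrability hypothesis of Definition \ref{def:alpha-scale} at $\alpha = 0$ forces $\Theta$ to have finite volume $|\Theta|$, and the uniform law uniquely maximizes differential entropy among densities supported on a set of finite measure. That maximality follows from $\KL{h(\vec \theta)}{h^{(0)}(\vec \theta)} \ge 0$ with equality iff $h$ is uniform, together with the identity $\KL{h(\vec \theta)}{h^{(0)}(\vec \theta)} = \ln|\Theta| - \Ent h(\vec \theta) = \Ent h^{(0)}(\vec \theta) - \Ent h(\vec \theta)$. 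This is consistent with, but not logically needed for, the displayed inequalities.

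I expect no genuine obstacle here, as all the analytic content is already carried by Theorem \ref{thm:alpha-scale-ent-monotonicity}. The two points that need a little care are: (i) invoking strict rather than merely weak monotonicity, which is exactly where the hypothesis that $h$ is not uniform gets consumed; and (ii) checking that $h^{(\alpha)}(\vec \theta)$ and its entropy are well defined throughout $0 \le \alpha < 1$ and $1 < \alpha < \infty$, which is guaranteed by the standing integrability assumption in Definition \ref{def:alpha-scale}.
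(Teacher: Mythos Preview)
Your proposal is correct and matches the paper's own argument essentially verbatim: the paper introduces the same entropy-difference function $E(\alpha)$, notes $E(1)=0$ (and $E(0)>0$), and declares the corollary immediate from the monotonicity in Theorem~\ref{thm:alpha-scale-ent-monotonicity}. Your added remarks on strictness and the $\alpha=0$ boundary are accurate refinements but not new ideas relative to the paper.
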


A concrete example for Corollary \ref{cor:alpha-scale} is as follows; for visual illustrations, see Appendices  \ref{append:illlu-alpha-scaling} and \ref{append:illu-ent-diff} in supplementary materials.
\begin{example}\label{ex:alpha-scaled-Gaussian}
Consider an one-dimensional zero-mean Gaussian density function $h(\theta) \propto \exp{(-\frac{1}{2} \frac{\theta^2}{\sigma^2})}$ where $\sigma^2$ is the variance. The $\alpha$-scaled distribution is $h^{(\alpha)}(\theta) \propto \exp{(-\frac{1}{2} \frac{\theta^2}{\sigma^2/\alpha})}$. Therefore, $h^{(\alpha)}(\theta)$ is a Gaussian density function with variance $\sigma^2/\alpha$. When $0 < \alpha < 1$, we have $\sigma^2/\alpha > \sigma^2$ and therefore $\Ent h^{(\alpha)}(\theta) > \Ent h(\theta)$; when $\alpha > 1$, we have $\sigma^2/\alpha < \sigma^2$ and therefore $\Ent h^{(\alpha)}(\theta) < \Ent h(\theta)$. Note that $\Ent h^{(\alpha)}(\theta) = \frac{1}{2} \ln(2 \pi \sigma^2/\alpha) + \frac{1}{2}$, while $\Ent h(\theta) = \frac{1}{2} \ln(2 \pi \sigma^2) + \frac{1}{2}$.
\stp
\end{example}

Theorem \ref{thm:alpha-scale-ent-monotonicity} and Corollary \ref{cor:alpha-scale} collectively imply the following useful insight.
\begin{insight}[Uncertainty Awareness in Posterior]\label{insight:1}
The $\alpha$-scaling operation controls the entropy (i.e., the spread) of $h^{(\alpha)}(\vec \theta)$. Therefore, the $(\alpha, \beta)$-posterior $p_g(\vec \theta | \vec y) \propto p^{\beta}(\vec \theta) \cdot l^{\alpha}(\vec \theta)
$ balances the relative importance between the prior distribution $p(\vec \theta)$ and the likelihood distribution $l(\vec \theta)$. To be specific, if the likelihood distribution is overly aggressive (i.e., exceedingly concentrated), we use $0 \le \alpha < 1$ so that $\Ent l^{(\alpha)}(\vec \theta)$ can be \bfit{increased} compared to $\Ent l(\vec \theta)$; in contrast, if the likelihood distribution is overly conservative (i.e., exceedingly spread), we use $\alpha > 1$ so that $\Ent l^{(\alpha)}(\vec \theta)$ can be \bfit{reduced} compared to $\Ent l(\vec \theta)$. The same operation also applies to the prior distribution, i.e., $p^{(\beta)}(\bm \theta)$ and $p(\bm \theta)$. Note again, that the larger the entropy value of the prior or likelihood distribution is, the more spread it is on $\Theta$, and therefore, the less impact it has on the posterior distribution.
\stp
\end{insight}

Next, we study the closeness to $h(\vec \theta)$ from $h^{(\alpha)}(\vec \theta)$.
\begin{theorem}\label{thm:closeness-from-alpha-scaled}
The closeness to $h(\vec \theta)$ from $h^{(\alpha)}(\vec \theta)$, i.e., 
$
    \KL{h(\vec \theta)}{h^{(\alpha)}(\vec \theta)},
$ 
is a monotonically increasing function if $1 < \alpha < \infty$ and a monotonically decreasing function if $0 \le \alpha < 1$. In addition, it is a convex function in $\alpha$ on $[0, \infty)$.
\end{theorem}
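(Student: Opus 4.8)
The plan is to reduce the quantity to an explicit scalar function of $\alpha$ and then exploit convexity together with the nonnegativity of the Kullback--Leibler divergence, so that no delicate differentiation is needed. Write $Z(\alpha) \defeq \int h^{\alpha}(\vec\theta)\,\d\vec\theta$ for the normalizer appearing in Definition \ref{def:alpha-scale}, so that $\ln h^{(\alpha)}(\vec\theta) = \alpha \ln h(\vec\theta) - \ln Z(\alpha)$. Substituting this identity into $\KL{h(\vec\theta)}{h^{(\alpha)}(\vec\theta)} = \int h\ln h\,\d\vec\theta - \int h\ln h^{(\alpha)}\,\d\vec\theta$ and recalling $\int h\ln h\,\d\vec\theta = -\Ent h(\vec\theta)$, one obtains the closed form
\[
    g(\alpha) \;\defeq\; \KL{h(\vec\theta)}{h^{(\alpha)}(\vec\theta)} \;=\; (\alpha - 1)\,\Ent h(\vec\theta) \;+\; \ln Z(\alpha),\qquad \alpha\in[0,\infty).
\]
The first summand is affine in $\alpha$, so the shape of $g$ is entirely controlled by $\psi(\alpha)\defeq \ln Z(\alpha) = \ln\int e^{\alpha\ln h(\vec\theta)}\,\d\vec\theta$, which is the log-Laplace transform (cumulant generating function) of $\ln h(\vec\theta)$ with respect to Lebesgue measure on $\Theta$.

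First I would prove that $\psi$, and hence $g$, is convex on $[0,\infty)$. This is immediate from H\"older's inequality: for $\alpha=\lambda\alpha_1+(1-\lambda)\alpha_2$ with $\lambda\in[0,1]$, writing $h^{\alpha}=(h^{\alpha_1})^{\lambda}(h^{\alpha_2})^{1-\lambda}$ and applying H\"older with exponents $1/\lambda$ and $1/(1-\lambda)$ gives $Z(\alpha)\le Z(\alpha_1)^{\lambda}Z(\alpha_2)^{1-\lambda}$, i.e.\ $\psi(\alpha)\le\lambda\psi(\alpha_1)+(1-\lambda)\psi(\alpha_2)$; adding the affine term keeps convexity, which settles the final assertion of the theorem. (Alternatively, differentiating twice under the integral sign --- licit on compact $\Theta$ with continuous positive $h$ --- yields $\psi''(\alpha)=\mathrm{Var}_{h^{(\alpha)}}[\ln h(\vec\theta)]\ge 0$.)

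The monotonicity statement then follows with no extra work. By Definition \ref{def:alpha-scale} we have $h^{(1)}(\vec\theta)\equiv h(\vec\theta)$, hence $g(1)=\KL{h(\vec\theta)}{h(\vec\theta)}=0$; and $g(\alpha)=\KL{h(\vec\theta)}{h^{(\alpha)}(\vec\theta)}\ge 0$ for every $\alpha$ by nonnegativity of KL. A convex function that attains its global minimum value ($0$ here) at the interior point $\alpha=1$ is necessarily nonincreasing on $[0,1]$ and nondecreasing on $[1,\infty)$, which is exactly the claimed behaviour. For strictness (as in Theorem \ref{thm:alpha-scale-ent-monotonicity}, the uniform case being trivial since then $h^{(\alpha)}\equiv h$ and $g\equiv 0$): if $h$ is not uniform, the H\"older inequality above is strict whenever $\alpha_1\ne\alpha_2$ --- equality would force $h^{\alpha_1-\alpha_2}$ to be constant a.e.\ on $\Theta$ --- so $\psi$ and $g$ are strictly convex, $\alpha=1$ is the unique minimizer, and $g$ is strictly decreasing on $[0,1)$ and strictly increasing on $(1,\infty)$.

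The only real obstacle is regularity bookkeeping rather than any substantive difficulty: one needs $Z(\alpha)<\infty$ on the range of interest (already assumed in Definition \ref{def:alpha-scale}) and $\int h\,|\ln h|\,\d\vec\theta<\infty$ so that $\Ent h$ and all integrals defining $g$ are finite and the rearrangement $\int h\ln h^{(\alpha)}\,\d\vec\theta=\alpha\int h\ln h\,\d\vec\theta-\ln Z(\alpha)$ is valid; these hold under the paper's standing assumptions (e.g.\ compact $\Theta$ with continuous, strictly positive $h$). If one instead takes the derivative route to convexity, the analogous care goes into justifying differentiation under the integral sign, which is again routine on compact $\Theta$.
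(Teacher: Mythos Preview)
Your proof is correct and takes a genuinely different route from the paper's. Both start from the same closed form $g(\alpha)=(\alpha-1)\Ent h(\vec\theta)+\ln C_\alpha$, but diverge from there. The paper differentiates: it computes $g'(\alpha)=\int h^{(\alpha)}\ln h\,\d\vec\theta-\int h\ln h\,\d\vec\theta$ and then invokes a separate preparatory lemma (proved via nonnegativity of Jeffrey's divergence) to determine the sign of this difference on either side of $\alpha=1$; convexity is handled afterward by computing $g''$ and applying Cauchy--Schwarz. You instead establish convexity first --- directly from H\"older applied to $h^{\alpha}=(h^{\alpha_1})^{\lambda}(h^{\alpha_2})^{1-\lambda}$, which is the standard log-convexity of the Laplace transform --- and then obtain monotonicity for free from $g\ge 0$, $g(1)=0$, and convexity, with no auxiliary lemma needed. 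Your argument is shorter and more structural; the paper's is more computational but yields the explicit derivative formula along the way, which could be of independent interest. Your treatment of strictness via the equality case of H\"older mirrors the paper's equality analysis in Cauchy--Schwarz and reaches the same conclusion (strictness exactly when $h$ is not uniform).
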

\begin{proof}
    See Appendix \ref{append:closeness-from-alpha-scaled} in supplementary materials.
\end{proof}

An illustration for Theorem \ref{thm:closeness-from-alpha-scaled} is below; for a visualization, see Appendix \ref{append:illu-closeness} in supplementary materials.
\begin{example}[Continued from Example \ref{ex:alpha-scaled-Gaussian}]
We have
$
\KL{h(\vec \theta)}{h^{(\alpha)}(\vec \theta)} = -\frac{1}{2}\ln\left(\alpha\right) + \frac{\alpha}{2} - \frac{1}{2},
$ for $\alpha > 0$. 
The derivative of $\KL{h(\vec \theta)}{h^{(\alpha)}(\vec \theta)}$ with respect to $\alpha$ is $\frac{1}{2} (1 - \frac{1}{\alpha})$, which is positive if $\alpha > 1$ and negative if $0 < \alpha < 1$; the second-order derivative is $\frac{1}{2\alpha^2} > 0$. Therefore, $\KL{h(\vec \theta)}{h^{(\alpha)}(\vec \theta)}$ is monotonically decreasing when $0< \alpha < 1$ and monotonically increasing when $\alpha > 1$. In addition, $\KL{h(\vec \theta)}{h^{(\alpha)}(\vec \theta)}$ is convex.
\stp
\end{example}

Theorem \ref{thm:closeness-from-alpha-scaled} implies the following useful insight.
\begin{insight}[Level of Uncertainty]\label{insight:2}
The more $\alpha$ deviates from $1$, the farther to $h(\vec \theta)$ from $h^{(\alpha)}(\vec \theta)$. Hence, in the $(\alpha, \beta)$-posterior, the more we trust the prior $p(\vec \theta)$ [resp. the likelihood $l(\vec \theta)$], the closer the value of $\beta$ (resp. $\alpha$) should be to $1$.
\stp
\end{insight}

Insights \ref{insight:1} and \ref{insight:2} collectively suggest the \bfit{empirical} usage of the $(\alpha, \beta)$-posterior in balancing the relative importance between the prior knowledge and the data evidence; for visual illustrations, see Appendix \ref{ex:illu-alpha-beta-posterior} in supplementary materials. Next, we show another property of the $(\alpha, \beta)$-posterior that enables its practical usefulness. Let $h_0(\vec \theta)$ be the true distribution and $h(\vec \theta)$ the nominal distribution serving as an estimate of $h_0(\vec \theta)$.  The theorem below states that there exists some $\alpha \ge 0$ such that the $\alpha$-scaled nominal distribution $h^{(\alpha)}(\vec \theta)$ can be closer to the true distribution $h_0(\vec \theta)$ than the original nominal distribution $h(\vec \theta)$.
\begin{theorem}\label{thm:benefits-alpha-distributions}
Given distributions $h_0$ and $h$ on $\Theta$, if $h$ is not a uniform distribution and 
\begin{equation}\label{eq:benefits-alpha-distributions-condition}
\int_{\Theta} [h_0(\vec \theta) - h(\vec \theta)] \ln h(\vec \theta) \d \vec \theta \neq 0,
\end{equation}
there exists $\alpha \ge 0$ such that
\begin{equation}\label{eq:benefits-alpha-distributions}
    \KL{h_0(\vec \theta)}{h^{(\alpha)}(\vec \theta)} < \KL{h_0(\vec \theta)}{h(\vec \theta)}.
\end{equation}
\end{theorem}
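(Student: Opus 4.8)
The plan is to regard the left-hand side of \eqref{eq:benefits-alpha-distributions} as a one-variable function of the scaling exponent $\alpha$ and to show it is strictly smaller than its value at $\alpha=1$ on a one-sided neighbourhood of $1$. Write $Z(\alpha) \defeq \int_{\Theta} h^{\alpha}(\vec\theta)\,\d\vec\theta$, so that $h^{(\alpha)}(\vec\theta) = h^{\alpha}(\vec\theta)/Z(\alpha)$ and $\ln h^{(\alpha)}(\vec\theta) = \alpha\ln h(\vec\theta) - \ln Z(\alpha)$. Substituting into the definition of the KL divergence gives, for every $\alpha$ with $Z(\alpha)<\infty$,
\begin{equation*}
\phi(\alpha) \defeq \KL{h_0(\vec\theta)}{h^{(\alpha)}(\vec\theta)} = -\Ent h_0(\vec\theta) - \alpha\int_{\Theta} h_0(\vec\theta)\ln h(\vec\theta)\,\d\vec\theta + \ln Z(\alpha),
\end{equation*}
and in particular $\phi(1) = \KL{h_0(\vec\theta)}{h(\vec\theta)}$ since $Z(1)=1$. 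Under the standing setting motivated after Definition \ref{def:likelihood-dist} (compact $\Theta$, bounded positive densities), $Z(\alpha)$ is finite and smooth on a neighbourhood of $1$ and the integrals appearing in \eqref{eq:benefits-alpha-distributions-condition} are finite, so $\phi$ is well defined and differentiable there.

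Next I would differentiate $\phi$. Since $Z'(\alpha) = \int_{\Theta} h^{\alpha}(\vec\theta)\ln h(\vec\theta)\,\d\vec\theta$ — differentiation under the integral sign being legitimate because $|h^{\alpha}\ln h|$ is bounded on the compact support, using $t^{\alpha}|\ln t|\to 0$ as $t\downarrow 0$ — we have $Z'(\alpha)/Z(\alpha) = \E_{\vec\theta\sim h^{(\alpha)}}[\ln h(\vec\theta)]$, hence
\begin{equation*}
\phi'(\alpha) = \E_{\vec\theta\sim h^{(\alpha)}}[\ln h(\vec\theta)] - \E_{\vec\theta\sim h_0}[\ln h(\vec\theta)].
\end{equation*}
Evaluating at $\alpha=1$, where $h^{(1)}=h$, yields
\begin{equation*}
\phi'(1) = \int_{\Theta} h(\vec\theta)\ln h(\vec\theta)\,\d\vec\theta - \int_{\Theta} h_0(\vec\theta)\ln h(\vec\theta)\,\d\vec\theta = -\int_{\Theta}[h_0(\vec\theta)-h(\vec\theta)]\ln h(\vec\theta)\,\d\vec\theta,
\end{equation*}
which is nonzero precisely by hypothesis \eqref{eq:benefits-alpha-distributions-condition}. (This also explains the non-uniformity requirement: if $h$ were uniform, $\ln h$ would be constant and the integral in \eqref{eq:benefits-alpha-distributions-condition} would vanish, so the hypothesis would be unsatisfiable.)

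Finally I would close with a first-order argument. If $\phi'(1)>0$, then from $\lim_{\alpha\to1}(\phi(\alpha)-\phi(1))/(\alpha-1) = \phi'(1) > 0$ there is $\delta\in(0,1)$ with $\phi(\alpha)<\phi(1)$ for all $\alpha\in(1-\delta,1)$; if $\phi'(1)<0$, the same limit gives $\phi(\alpha)<\phi(1)$ for all $\alpha\in(1,1+\delta)$. In either case the chosen $\alpha$ lies in $[0,\infty)$ and satisfies $\KL{h_0(\vec\theta)}{h^{(\alpha)}(\vec\theta)} = \phi(\alpha) < \phi(1) = \KL{h_0(\vec\theta)}{h(\vec\theta)}$, which is \eqref{eq:benefits-alpha-distributions}. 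The main point requiring care is the regularity of $\phi$ near $\alpha=1$ — finiteness of $Z(\alpha)$ and validity of differentiating under the integral — which is exactly where the compactness/boundedness conventions of the paper (and the finiteness implicitly needed for \eqref{eq:benefits-alpha-distributions-condition} to be meaningful) enter; alternatively, one can note that $\ln Z$ is convex in $\alpha$, so $\phi$ is convex and $\alpha=1$ fails to be its minimizer once $\phi'(1)\neq0$, giving the same conclusion.
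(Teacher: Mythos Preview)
Your proof is correct and follows essentially the same route as the paper: both reduce the claim to showing that the derivative of $\alpha\mapsto\KL{h_0}{h^{(\alpha)}}$ at $\alpha=1$ equals $-\int(h_0-h)\ln h$ and is therefore nonzero by hypothesis. The only difference is that the paper additionally computes the second derivative and invokes Cauchy--Schwarz to establish strict concavity of the difference function, whereas your first-order argument (nonzero derivative $\Rightarrow$ strict decrease on one side of $1$) is more economical and makes the concavity step unnecessary; your parenthetical observation that non-uniformity is already forced by \eqref{eq:benefits-alpha-distributions-condition} is also a nice clarification absent from the paper.
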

\begin{proof}
    See Appendix \ref{append:benefits-alpha-distributions} in supplementary materials.
\end{proof}

Theorem \ref{thm:benefits-alpha-distributions} indicates that, if a proper $\alpha$ is given, $h^{(\alpha)}$ can be a better (i.e., a more accurate) surrogate of $h_0$ than $h$. However, the parameter $\alpha$ cannot be theoretically specified because it depends on the underlying true but unknown distribution $h_0(\vec \theta)$; to be specific, the best value $\alpha^*$ of $\alpha$ solves 
\begin{equation}
    \alpha^* \in \argmin_{\alpha \ge 0} \KL{h_0(\vec \theta)}{h^{(\alpha)}(\vec \theta)},
\end{equation}
which however cannot be conducted in practice. The example below provides an intuitive demonstration of Theorem \ref{thm:benefits-alpha-distributions}.

\begin{example}\label{ex:benefits-alpha-distributions}
    Consider $h_0 \defeq [0.2, 0.8]$, $h \defeq [0.4, 0.6]$, and $\alpha \defeq 2$. We have $h^{(2)} = [0.3, 0.7]$, which is closer to $h_0$ than $h$. Then, consider $h_0 \defeq [0.4, 0.6]$, $h \defeq [0.2, 0.8]$, and $\alpha \defeq 0.6$. We have $h^{(0.6)} = [0.3, 0.7]$, which is closer to $h_0$ than $h$. Third, consider a continuous case where $h_0(\theta) \defeq \cal N(\theta; 0, \sigma^2_0)$ and $h(\theta) \defeq \cal N(\theta; 0, \sigma^2)$. For any $\alpha > 0$, we have $h^{(\alpha)}(\theta) = \cal N(\theta; 0, \sigma^2/\alpha)$. Letting $\alpha \defeq \sigma^2/\sigma^2_0$, we have $h^{(\alpha)}(\theta) = h_0(\theta)$; NB: $\alpha$ can be either larger or smaller than one. \stp
\end{example}

Theorem \ref{thm:benefits-alpha-distributions} motivates the potential that the proposed $(\alpha, \beta)$-posterior can outperform the conventional Bayes' rule in practice: Scaled nominal prior and likelihood distributions can be closer to the true prior and likelihood distributions, respectively, than the original unscaled nominal ones; see Philosophy \ref{phi:robustness}. Specific investigations on this point, under stochastic and deterministic interpretations of Bayes' rule, are as follows. 

\textbf{Stochastic Interpretation.} Under the stochastic interpretation of Bayes' rule (i.e., there physically exists a joint data-generating distribution), given the true joint distribution $p_0(\vec \theta, \vec y)$ and its surrogate $p(\vec \theta, \vec y)$, for a measurable bounded cost function $u: \Theta \times \cal Y \to [-M, M]$ where $0 \le M < \infty$, we have 
\begin{equation}\label{eq:bayesian-cost}
\begin{array}{l}
    \displaystyle \left| \int_{\Theta} \int_{\cal Y} u(\vec \theta, \vec y) p_0(\vec \theta, \vec y) \d \vec \theta \d \vec y - \int_{\Theta} \int_{\cal Y} u(\vec \theta, \vec y) p(\vec \theta, \vec y) \d \vec \theta  \d \vec y \right| \\
    \quad \quad \le \displaystyle M \int_{\Theta} \int_{\cal Y} \Big| p_0(\vec \theta, \vec y) -  p(\vec \theta, \vec y) \Big| \d \vec \theta \d \vec y
    \\
    \quad \quad \le \displaystyle 2M \sqrt{1- \exp({-\KL{p_0(\vec \theta, \vec y)}{p(\vec \theta, \vec y)}})}.
\end{array}
\end{equation}
The first inequality is due to the definition of the total variation distance, while the second is due to the Bretagnolle–Huber inequality. As a result, at the given observation $\vec y$, we have
\begin{equation}\label{eq:bayesian-cost-posterior}
\begin{array}{l}
    \displaystyle \left| \int_{\Theta} u(\vec \theta, \vec y) p_0(\vec \theta | \vec y) \d \vec \theta - \int_{\Theta} u(\vec \theta, \vec y) p(\vec \theta | \vec y) \d \vec \theta \right| \le \displaystyle 2M \sqrt{1- \exp({-\KL{p_0(\vec \theta | \vec y)}{p(\vec \theta | \vec y)}})}.
\end{array}
\end{equation}
Therefore, finding a better surrogate of $p_0(\vec \theta | \vec y)$ is vital in controlling the deviation from the optimal cost, which can be realized by employing $(\alpha, \beta)$-scaled uncertainty-aware Bayes' rule $p_g(\vec \theta | \vec y)$ in \eqref{eq:alpha-beta-posterior}. However, the optimal parameters $(\alpha, \beta)$ cannot be theoretically specified because they depend on the true posterior distribution, and therefore on the true prior and likelihood distributions, which are unknown in practice; similar parameter-tuning problems also exist in using the $\alpha$-posterior \citep[p.~26,~Lemma~1]{medina2022robustness}; \citep[Section~3]{wu2023comparison}. 

When the cost function $u(\bm \theta, \bm y)$ is specified by the outer product of the estimation error, i.e., $u(\bm \theta, \bm y) \defeq [\hat{\bm \theta} (\bm y) - \bm \theta][\hat{\bm \theta} (\bm y) - \bm \theta]^\top$, where $\hat{\bm \theta} (\bm y)$ is an estimate of $\bm \theta$, the expected error covariance matrix is given by
\begin{equation}\label{eq:bayesian-cost-MSE}
\begin{array}{l}
    \displaystyle \int_{\Theta} \int_{\cal Y} [\hat{\bm \theta} (\bm y) - \bm \theta][\hat{\bm \theta} (\bm y) - \bm \theta]^\top p_0(\vec \theta, \vec y) \d \vec \theta \d \vec y.
\end{array}
\end{equation}
In this case, we are particularly interested in the performance upper bound of the estimator $\hat{\bm \theta} (\bm y)$, that is, the Bayesian Cramér--Rao lower bound of \eqref{eq:bayesian-cost-MSE}. When the underlying true distribution $p_0(\vec \theta, \vec y)$ is known, the optimal estimator minimizing the expected cost in \eqref{eq:bayesian-cost-MSE} is the posterior mean: i.e., $\hat{\bm \theta} (\bm y) \defeq \int \bm \theta p_0(\bm \theta | \bm y) \d \bm \theta$. In real-world operation, however, $\hat{\bm \theta} (\bm y)$ is defined through the misspecified surrogate distribution $p(\vec \theta, \vec y)$. Consequently, as per \citet[Eq.~(5),~Thm.~1]{tang2023parametric}, the corresponding Bayesian Cramér--Rao lower bound of \eqref{eq:bayesian-cost-MSE} depends on $p(\vec \theta, \vec y)$, and this bound can be straightforwardly improved by Theorem \ref{thm:benefits-alpha-distributions} because the definition of the pseudo-true parameter can be refined; for technical details, just compare \citet[Eq.~(5)]{tang2023parametric} with \eqref{eq:benefits-alpha-distributions}.

\textbf{Deterministic Interpretation.} Under the deterministic interpretation of Bayes' rule, the true prior $p_0(\vec \theta)$ is understood as the delta distribution $\delta_{\vec \theta_0}(\vec \theta)$ concentrated at the true value $\vec \theta_0$, and $p_0(\vec y | \vec \theta)$ models the true data-generating law; hence, $p_0(\vec \theta, \vec y) \defeq p_0(\vec y | \vec \theta) p_0(\vec \theta)$ and $p_0(\vec \theta | \vec y) = \delta_{\vec \theta_0}(\vec \theta)$. In this case, the KL divergence cannot be well-defined as in \eqref{eq:bayesian-cost} and \eqref{eq:bayesian-cost-posterior}. However, provided that the cost function $u$ is Lipschitz continuous, \eqref{eq:bayesian-cost} and \eqref{eq:bayesian-cost-posterior} can be re-stated under the order-1 Wasserstein distance. To be specific,  if $|u(\vec \theta, \vec y) - u(\vec \theta', \vec y')| \le M \cdot d\big((\vec \theta, \vec y), (\vec \theta', \vec y')\big)$ for some metrics $d$ on $\Theta \times \cal Y$, we have
\begin{equation}\label{eq:bayesian-cost-W}
\begin{array}{l}
    \displaystyle \left| \int_{\Theta} \int_{\cal Y} u(\vec \theta, \vec y) p_0(\vec \theta, \vec y) \d \vec \theta \d \vec y - \int_{\Theta} \int_{\cal Y} u(\vec \theta, \vec y) p(\vec \theta, \vec y) \d \vec \theta  \d \vec y \right| \le \displaystyle M {d_{\text{W}_1}({p_0(\vec \theta, \vec y)},{p(\vec \theta, \vec y)})},
\end{array}
\end{equation}
where $d_{\text{W}_1}$ denotes the order-1 Wasserstein distance. 
Similarly, given the sample $\vec y$, we have
\begin{equation}\label{eq:bayesian-cost-W-conditional}
\begin{array}{l}
    \displaystyle \left| \int_{\Theta} u(\vec \theta, \vec y) p_0(\vec \theta | \vec y) \d \vec \theta - \int_{\Theta}  u(\vec \theta, \vec y) p(\vec \theta | \vec y) \d \vec \theta  \right| \le \displaystyle M {d_{\text{W}_1}({p_0(\vec \theta | \vec y)},{p(\vec \theta | \vec y)})}.
\end{array}
\end{equation}
Hence, the closeness between ${p_0(\vec \theta | \vec y)} = \delta_{\vec \theta_0} (\vec \theta)$ and ${p(\vec \theta | \vec y)}$ is still important. For an illustrative example, see Figure \ref{fig:goodness-example}, where we let the true value $\theta_0 = 0$ and suppose that the nominal posterior ${p(\theta | y)}$ is a Gaussian distribution $\cal N(\theta; m, \sigma^2)$ with mean $m$ and variance $\sigma^2$. If $\sigma^2 \to 0$, $\cal N(\theta; m, \sigma^2)$ degenerates to the delta distribution $\delta_{m}(\theta)$ concentrated at $m$.

\begin{figure}[!htbp]
    \centering
    \includegraphics[width=4.5cm]{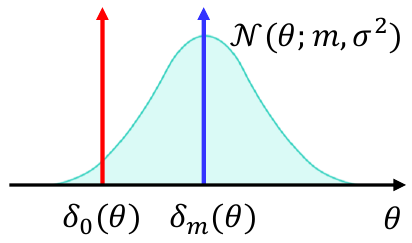}
    \caption{An $\alpha$-scaled version $\cal N(\theta; m, \sigma^2/\alpha)$ can be closer to the ground truth $\delta_0(\theta)$ than the unscaled distribution $\cal N(\theta; m, \sigma^2)$. Note that $\cal N(\theta; m, \sigma^2/\alpha)$ degenerates to $\delta_m(\theta)$ if $\alpha \to \infty$. Note also that for mean $m > 0$ and variance $\sigma^2 > 0$, under the order-$1$ Wasserstein distance, $\delta_m(\theta)$ is closer to $\delta_0(\theta)$ than the Gaussian distribution $\cal N(\theta; m, \sigma^2)$. However, in terms of probabilistic coverage, $\cal N(\theta; m, \sigma^2)$ is better than $\delta_m(\theta)$ because the true value $\theta_0 = 0$ can be included in the support of $\cal N(\theta; m, \sigma^2)$ but cannot be included in that of $\delta_m(\theta)$. Hence, whether $\cal N(\theta; m, \sigma^2/\alpha)$ is favored over $\cal N(\theta; m, \sigma^2)$, for some $\alpha \ge 0$, depends on a specific performance measure; recall Philosophy \ref{phi:robustness}.}
    \label{fig:goodness-example}
\end{figure}

\subsection{Uncertainty-Aware Bayes' Rule for Multiple Samples}
So far, it is implicitly assumed that only one sample $\vec y$ is directly observed for a hidden quantity $\vec \theta$ via $p(\vec y | \vec \theta)$. A typical real-world application is as follows. 

\textbf{Sequential Data.} 
Let $i = 1, 2, \ldots$ denote the discrete time index at which the observation $\vec y_i$ is obtained; let the measurement set, up to and including $i$, be $\cal Y_i \defeq \{\vec y_1, \vec y_2, \ldots, \vec y_i\}$, for all $i \ge 1$; let $\vec \theta_i$ denote the hidden quantity at time $i$, which can differ from $\vec \theta_k$ for $k \le i - 1$. At each $i \ge 2$, the posterior distribution can be written as
\begin{equation}\label{eq:sequential-data}
    p(\vec \theta_i | \cal Y_i) \propto p(\vec y_i | \vec \theta_i) \cdot p(\vec \theta_i | \cal Y_{i-1}),
\end{equation}
where $p(\vec \theta_i | \cal Y_{i-1})$ is determined by the Markov transition kernel $p(\vec \theta_{i} | \vec \theta_{i-1})$ through $p(\vec \theta_i | \cal Y_{i-1}) = \int_{\vec \theta_{i-1}} p(\vec \theta_{i} | \vec \theta_{i-1}) p(\vec \theta_{i-1} | \cal Y_{i-1}) \d \vec \theta_{i-1}$. 
That is, the $\cal Y_{i-1}$-conditional distribution $p(\vec \theta_i | \cal Y_{i-1})$ of $\vec \theta_i$ serves as the prior distribution at time $i$, and the one-measurement likelihood function $\vec \theta_i \mapsto p(\vec y_i | \vec \theta_i)$ induces the likelihood distribution $l(\vec \theta_i)$. Note that for each $\vec \theta_i$, there is only one \bfit{direct} observation $\vec y_i$ via $p(\vec y_i | \vec \theta_i)$. For a visual illustration and more details, see Appendix \ref{append:hidden-Markov-process} in supplementary materials. This time-sequential Bayesian inference scheme is particularly useful in real-time signal processing (e.g., Kalman and particle filters) and machine learning (e.g., sequential-data models); see, e.g., \citet{simon2006optimal} and \citet[Chapter~13]{bishop2006pattern}.

Below, we study the multi-sample cases.

\textbf{Block Data.} In non-sequential cases, when we have more than one sample, e.g., $n$ i.i.d. samples $\{\vec y_1, \vec y_2, \ldots, \vec y_n\}$, for an unknown quantity $\vec \theta$, where $\vec y_i \sim p(\vec y | \vec \theta)$, we can straightforwardly generalize \eqref{eq:opt-gen-posterior-dist} to 
\begin{equation}\label{eq:opt-gen-posterior-dist-multi-sample}
\begin{array}{cl}
\displaystyle \min_{q(\vec \theta)} \alpha_3 \Ent q(\vec \theta) 
 + \alpha_1 \KL{q(\vec \theta)}{p(\vec \theta)} +
 \displaystyle \frac{\alpha_2}{n} \sum^{n}_{i = 1} \KL{q(\vec \theta)}{l_{\vec y_i}(\vec \theta)}.
\end{array}
\end{equation}
The solution of \eqref{eq:opt-gen-posterior-dist-multi-sample} is given in the corollary below.
\begin{corollary}[Multi-Sample Uncertainty-Aware Bayes' Rule]
The uncertainty-aware Bayesian posterior solving \eqref{eq:opt-gen-posterior-dist-multi-sample} for $n$ i.i.d. samples is given by
\begin{equation}\label{eq:gen-bayes-rule-multi-sample}
    p_g(\vec \theta | \vec y_1,\ldots,\vec y_n) \propto p^{\frac{\alpha_1}{\alpha_1 + \alpha_2 - \alpha_3}}(\vec \theta) \cdot \left[\prod^{n}_{i = 1} l^{\frac{1}{n}}_{\vec y_i}(\vec \theta)\right]^{\frac{\alpha_2}{\alpha_1 + \alpha_2 - \alpha_3}},
\end{equation}
if $\alpha_1 + \alpha_2 > \alpha_3$, provided that the right-hand-side term is integrable on $\Theta$. When $\alpha_1 + \alpha_2 = \alpha_3$, $p_g(\vec \theta | \vec y_1,\ldots,\vec y_n)$ is an arbitrary distribution supported on the set $\Theta^*$ where
$
\Theta^* \defeq \argmax_{\vec \theta} \alpha_1 \ln p(\vec \theta) + \alpha_2 \cdot \frac{1}{n} \sum^n_{i = 1} \ln l_{\vec y_i}(\vec \theta)
$ 
contains all weighted maximum posterior estimates; if further $\alpha_1 = 0$, it reduces to maximum likelihood estimation.
\stp
\end{corollary}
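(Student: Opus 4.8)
\section*{Proof proposal}

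The plan is to collapse the $n$-sample objective \eqref{eq:opt-gen-posterior-dist-multi-sample} into the single-sample objective \eqref{eq:opt-gen-posterior-dist} — already solved in Theorem \ref{thm:gen-posterior-dist} — by absorbing the averaged likelihood term into one KL divergence against a \emph{pooled likelihood distribution}. Expanding each summand as $\KL{q(\vec \theta)}{l_{\vec y_i}(\vec \theta)} = \int q \ln q \, \d \vec \theta - \int q \ln l_{\vec y_i} \, \d \vec \theta$ gives
\[
\frac{\alpha_2}{n}\sum_{i=1}^{n}\KL{q(\vec \theta)}{l_{\vec y_i}(\vec \theta)} = \alpha_2\!\int\! q \ln q \, \d \vec \theta - \alpha_2 \!\int\! q(\vec \theta)\, \ln\!\Big(\textstyle\prod_{i=1}^{n} l^{1/n}_{\vec y_i}(\vec \theta)\Big)\,\d \vec \theta .
\]
Set $Z \defeq \int_{\Theta}\prod_{i=1}^{n} l^{1/n}_{\vec y_i}(\vec \theta)\,\d \vec \theta$ and $\bar l(\vec \theta) \defeq Z^{-1}\prod_{i=1}^{n} l^{1/n}_{\vec y_i}(\vec \theta)$. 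Then $\ln \prod_i l^{1/n}_{\vec y_i} = \ln \bar l + \ln Z$, so the displayed quantity equals $\alpha_2\KL{q(\vec \theta)}{\bar l(\vec \theta)} - \alpha_2 \ln Z$. Since the term $-\alpha_2 \ln Z$ is independent of $q$, problem \eqref{eq:opt-gen-posterior-dist-multi-sample} has exactly the same set of minimizers as $\min_q \alpha_1 \KL{q(\vec \theta)}{p(\vec \theta)} + \alpha_2 \KL{q(\vec \theta)}{\bar l(\vec \theta)} + \alpha_3 \Ent q(\vec \theta)$, which is \eqref{eq:opt-gen-posterior-dist} with $l$ replaced by $\bar l$.

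Before invoking Theorem \ref{thm:gen-posterior-dist} I must check that $\bar l$ is a genuine probability density, i.e. $0 < Z < \infty$. Positivity is clear since each $l_{\vec y_i}$ is a density. Finiteness is automatic from the AM--GM inequality: pointwise $\prod_{i=1}^{n} l^{1/n}_{\vec y_i}(\vec \theta) \le \tfrac1n\sum_{i=1}^{n} l_{\vec y_i}(\vec \theta)$, hence $Z \le \tfrac1n\sum_{i=1}^{n}\int_\Theta l_{\vec y_i}\,\d\vec \theta = 1$. The integrability hypothesis stated in the corollary is then precisely the hypothesis needed for the single-sample rule applied to the pair $(p,\bar l)$.

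It remains to transcribe Theorem \ref{thm:gen-posterior-dist}. When $\alpha_3 < \alpha_1 + \alpha_2$, the theorem (with $l\mapsto\bar l$) gives $p_g(\vec \theta\mid\vec y_1,\dots,\vec y_n)\propto p^{\alpha_1/(\alpha_1+\alpha_2-\alpha_3)}(\vec \theta)\cdot \bar l^{\,\alpha_2/(\alpha_1+\alpha_2-\alpha_3)}(\vec \theta)$; substituting $\bar l \propto \prod_i l^{1/n}_{\vec y_i}$ and folding the constant power of $Z$ into the normalizer yields \eqref{eq:gen-bayes-rule-multi-sample}. When $\alpha_3 = \alpha_1 + \alpha_2$, the theorem says the minimizer is any distribution supported on $\argmax_{\vec \theta}\{\alpha_1 \ln p(\vec \theta) + \alpha_2 \ln \bar l(\vec \theta)\}$; since $\ln \bar l(\vec \theta) = \tfrac1n\sum_i \ln l_{\vec y_i}(\vec \theta) - \ln Z$ and the additive constant $-\ln Z$ does not move the $\argmax$, this set equals $\Theta^{*} = \argmax_{\vec \theta}\{\alpha_1 \ln p(\vec \theta) + \tfrac{\alpha_2}{n}\sum_i \ln l_{\vec y_i}(\vec \theta)\}$. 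Finally, if in addition $\alpha_1 = 0$, then $\Theta^{*} = \argmax_{\vec \theta}\tfrac1n\sum_i \ln l_{\vec y_i}(\vec \theta) = \argmax_{\vec \theta}\sum_i \ln p(\vec y_i\mid\vec \theta)$, because $l_{\vec y_i}(\vec \theta) = p(\vec y_i\mid\vec \theta)/\!\int p(\vec y_i\mid\vec \theta)\,\d\vec \theta$ differs from the likelihood only by a $\vec \theta$-free factor — i.e. $\Theta^{*}$ is the maximum-likelihood estimate. The only delicate point is the bookkeeping that $\ln Z$ is truly independent of $q$ (and of $\vec \theta$ in the $\argmax$ step), so that it may be dropped; once that is observed, the result is a direct corollary of Theorem \ref{thm:gen-posterior-dist}.
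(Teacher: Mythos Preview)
Your proposal is correct and is precisely the argument a ``corollary'' of Theorem~\ref{thm:gen-posterior-dist} calls for: the paper states this result without proof (it ends with the \stp{} marker), and your reduction --- pooling the $n$ likelihood terms into the single density $\bar l \propto \prod_i l_{\vec y_i}^{1/n}$ and then invoking Theorem~\ref{thm:gen-posterior-dist} verbatim --- is exactly the intended route. The AM--GM bound $Z\le 1$ is a pleasant extra that the paper does not mention; the only (very minor) imprecision is that ``positivity is clear since each $l_{\vec y_i}$ is a density'' does not by itself guarantee $Z>0$ unless the supports of the $l_{\vec y_i}$ overlap on a set of positive measure, but the paper's standing assumptions (common $\Theta$, integrable right-hand side) cover this.
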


The $(\alpha, \beta)$-posterior in Definition \ref{def:alpha-beta-posterior} under $n$ i.i.d. samples, induced by \eqref{eq:gen-bayes-rule-multi-sample}, can be rewritten as
\begin{equation}\label{eq:alpha-beta-posterior-multi-sample}
p_g(\vec \theta | \vec y_1,\ldots,\vec y_n) \propto p^{\beta}(\vec \theta) \cdot \prod^{n}_{i = 1} l^{\frac{\alpha}{n}}_{\vec y_i}(\vec \theta),~~~0 \le \alpha, \beta \le \infty.
\end{equation}

Table \ref{tab:examples} can be restated accordingly; we do not repeat here. Note that $\alpha$ may absorb the sample size $n$. Specifically, if we replace $\alpha$ with $n \alpha$, \eqref{eq:alpha-beta-posterior-multi-sample} becomes 
\begin{equation}\label{eq:alpha-beta-posterior-multi-sample-2}
    p_g(\vec \theta | \vec y_1,\ldots,\vec y_n) \propto p^{\beta}(\vec \theta) \cdot \left[ \prod^{n}_{i = 1} l_{\vec y_i}(\vec \theta) \right]^{{\alpha}},~~~0 \le \alpha, \beta \le \infty,
\end{equation}
which reduces to the conventional multi-sample Bayes' rule if $\alpha = \beta = 1$. 

Below, we explain the deficiency of the $\alpha$-posterior under multiple observations, supposing that the true value $\vec \theta_0$ of the unknown quantity is deterministic and fixed. As the sample size $n$ tends to infinity, the likelihood distribution collapses to a delta distribution $\delta_{\vec \theta_{\text{MLE}}}(\vec \theta)$ concentrated at the maximum likelihood estimate (MLE) $\vec \theta_{\text{MLE}}$, leading \eqref{eq:alpha-beta-posterior-multi-sample-2} to
\begin{equation}\label{eq:alpha-beta-posterior-multi-sample-limit-n}
    p_g(\vec \theta | \vec y_1,\ldots,\vec y_\infty) \propto p^{\beta}(\vec \theta) \cdot \left[ \delta_{\vec \theta_{\text{MLE}}}(\vec \theta) \right]^{{\alpha}},~~~0 \le \alpha, \beta \le \infty.
\end{equation}
If $\alpha \ne 0$ and the support set of $p^\beta(\vec \theta)$ contains $\vec \theta_{\text{MLE}}$, the above posterior distribution is again $\delta_{\vec \theta_{\text{MLE}}}(\vec \theta)$. However, when the true likelihood model is not exactly known, the MLE $\vec \theta_{\text{MLE}}$ under the nominal likelihood model does not necessarily coincide with the true value $\vec \theta_0$. Consequently, the posterior $\delta_{\vec \theta_{\text{MLE}}}(\vec \theta)$ loses the uncertainty quantification ability (i.e., posterior's frequentist coverage). Therefore, in the $\alpha$-posterior, a \bfit{small} $\alpha \ge 0$ serves to prevent the likelihood distribution from collapsing into a delta distribution, if treating $\delta_{\vec \theta_{\text{MLE}}}(\vec \theta)$ as $\cal N(\vec \theta_{\text{MLE}}, \epsilon \mat I_d)$ for $\epsilon \to 0$. Nevertheless, by reducing $\alpha$, the prior distribution $p(\vec \theta)$ becomes dominating in calculating the posterior. Hence, another independent parameter to control the contribution of the prior distribution $p(\vec \theta)$ in calculating the posterior is mandated, giving rise to the $\beta$-parameterization of the prior: using a small $\beta$ if $p(\vec \theta)$ is unreliable, e.g., overly concentrated on a wrong point; use a large $\beta$ if $p(\vec \theta)$ is trustworthy, e.g., centered close to the correct point $\vec \theta_0$. 

In summary, considering Examples \ref{ex:alpha-post-deficiency} and \ref{ex:alpha-post-deficiency-2}, and Remark \ref{rem:deficiency-alpha-overconcentration}, the deficiencies of the $\alpha$-posterior are highlighted below.
\begin{remark}[Deficiency of $\alpha$-Posterior]\label{rem:deficiency-alpha-posterior}
The deficiency of the $\alpha$-posterior lies in its limited ability to control the influence of the prior distribution, relative to the likelihood distribution, while simultaneously avoiding posterior collapse. Specifically, when a large $\alpha$ is used, the likelihood is upweighted and the prior is relatively downweighted, but the resulting posterior tends to degenerate to a wrongly concentrated distribution (e.g., a misplaced delta distribution, due to the concentration of the likelihood distribution under large $\alpha$'s; see Remark \ref{rem:deficiency-alpha-overconcentration}), thereby losing its uncertainty quantification capability; recall Example \ref{ex:alpha-post-deficiency}. In contrast, when a small $\alpha$ is used, as in the infinite-sample situation \eqref{eq:alpha-beta-posterior-multi-sample-limit-n}, the likelihood is downweighted and the prior is relatively upweighted, which can be problematic when the prior is unreliable, e.g., in Example \ref{ex:alpha-post-deficiency-2} where the prior is a misplaced delta distribution. Therefore, for both cases, the independent $\beta$-parameterization for the prior is required.
\stp
\end{remark}

\subsection{Uncertainty-Aware Bayes' Rule for Multiple Priors and Samples}
We can further extend \eqref{eq:opt-gen-posterior-dist-multi-sample} when multiple priors are present:
\begin{equation}\label{eq:opt-gen-posterior-dist-multi-sample-multi-prior}
\begin{array}{l}
\displaystyle \min_{q(\vec \theta)} \alpha_3 \Ent q(\vec \theta) 
 + \alpha_1 \left[\sum^m_{i = 1} \beta_i \cdot \KL{q(\vec \theta)}{p_i(\vec \theta)} \right] +
 \displaystyle \alpha_2 \left[ \frac{1}{n} \sum^{n}_{i = 1} \KL{q(\vec \theta)}{l_{\vec y_i}(\vec \theta)} \right],
\end{array}
\end{equation}
where $m$ priors $p_1({\vec \theta}), p_2({\vec \theta}), \ldots, p_m({\vec \theta})$ are available with weights $\beta_1, \beta_2, \ldots, \beta_m$, respectively; $\beta_i \in [0, 1]$ for every $i \in [m]$ and $\sum^m_{i=1} \beta_i = 1$. For a concrete engineering example where multiple priors apply, see Appendix \ref{append:multi-priors} in supplementary materials.

The solution of \eqref{eq:opt-gen-posterior-dist-multi-sample-multi-prior} is given in the corollary below.
\begin{corollary}[Multi-Prior-Multi-Sample Uncertainty-Aware Bayes' Rule]
The uncertainty-aware Bayesian posterior solving \eqref{eq:opt-gen-posterior-dist-multi-sample-multi-prior} for $m$ priors and $n$ i.i.d. samples is given by
\begin{equation}\label{eq:gen-bayes-rule-multi-prior}
\begin{array}{l}
    p_g(\vec \theta | \vec y_1,\ldots,\vec y_n) \propto 
    \displaystyle \left[\prod^{m}_{i = 1} p^{\beta_i}_i(\vec \theta) \right]^{\frac{\alpha_1}{\alpha_1 + \alpha_2 - \alpha_3}} \cdot \left[\prod^{n}_{i = 1} l^{\frac{1}{n}}_{\vec y_i}(\vec \theta)\right]^{\frac{\alpha_2}{\alpha_1 + \alpha_2 - \alpha_3}},
\end{array}
\end{equation}
if $\alpha_1 + \alpha_2 > \alpha_3$, provided that the right-hand-side term is integrable on $\Theta$. When $\alpha_1 + \alpha_2 = \alpha_3$, $p_g(\vec \theta | \vec y_1,\ldots,\vec y_n)$ is an arbitrary distribution supported on the set $\Theta^*$ where
$
\Theta^* \defeq \argmax_{\vec \theta} \alpha_1 \sum^m_{i = 1} \beta_i \cdot \ln p_i(\vec \theta) + \alpha_2 \sum^n_{i = 1} \frac{1}{n} \cdot \ln l_{\vec y_i}(\vec \theta)
$ 
contains all weighted maximum posterior estimates.
\stp
\end{corollary}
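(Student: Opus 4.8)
The plan is to reduce the multi-prior, multi-sample problem \eqref{eq:opt-gen-posterior-dist-multi-sample-multi-prior} to the basic single-prior, single-sample problem \eqref{eq:opt-gen-posterior-dist}, and then invoke Theorem \ref{thm:gen-posterior-dist} verbatim. The reduction is effected by collapsing the finite family of priors $\{p_i\}_{i=1}^m$ into one \emph{effective prior} and the finite family of likelihood distributions $\{l_{\vec y_i}\}_{i=1}^n$ into one \emph{effective likelihood distribution}, both obtained by log-linear (geometric) pooling of the members with the given weights.

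Concretely, I would introduce
\[
\bar p(\vec \theta) \defeq \frac{\prod_{i=1}^m p_i^{\beta_i}(\vec \theta)}{Z_p}, \quad Z_p \defeq \int_{\Theta} \prod_{i=1}^m p_i^{\beta_i}(\vec \theta)\, \d \vec \theta, \qquad
\bar l(\vec \theta) \defeq \frac{\prod_{i=1}^n l_{\vec y_i}^{1/n}(\vec \theta)}{Z_l}, \quad Z_l \defeq \int_{\Theta} \prod_{i=1}^n l_{\vec y_i}^{1/n}(\vec \theta)\, \d \vec \theta,
\]
and first check these are genuine densities (terms with $\beta_i=0$ being simply dropped from both the sum and the product). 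Since $\sum_i \beta_i = 1$, the weighted AM--GM inequality gives $\prod_i p_i^{\beta_i}(\vec\theta) \le \sum_i \beta_i p_i(\vec\theta)$ pointwise, whence $Z_p \le \sum_i \beta_i \int_\Theta p_i = 1 < \infty$; the identical argument with weights $1/n$ gives $Z_l \le 1 < \infty$. The core computation is then the elementary identity, valid for every candidate density $q$ (in the extended reals),
\[
\sum_{i=1}^m \beta_i \KL{q(\vec \theta)}{p_i(\vec \theta)} = \int_\Theta q \ln q\,\d\vec\theta - \int_\Theta q \ln\!\Big(\prod_{i=1}^m p_i^{\beta_i}\Big)\d\vec\theta = \KL{q(\vec \theta)}{\bar p(\vec \theta)} - \ln Z_p,
\]
using $\ln \prod_i p_i^{\beta_i} = \ln \bar p + \ln Z_p$ and $\int q = 1$; likewise $\tfrac1n\sum_{i=1}^n \KL{q}{l_{\vec y_i}} = \KL{q}{\bar l} - \ln Z_l$. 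As $\ln Z_p$ and $\ln Z_l$ are constants independent of $q$, problem \eqref{eq:opt-gen-posterior-dist-multi-sample-multi-prior} has exactly the same set of minimizers as $\min_{q} \alpha_3 \Ent q(\vec\theta) + \alpha_1 \KL{q(\vec\theta)}{\bar p(\vec\theta)} + \alpha_2 \KL{q(\vec\theta)}{\bar l(\vec\theta)}$, which is \eqref{eq:opt-gen-posterior-dist} with $p\mapsto\bar p$, $l\mapsto\bar l$.

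It then remains to quote Theorem \ref{thm:gen-posterior-dist}. For $\alpha_3 < \alpha_1+\alpha_2$ it gives $p_g(\vec\theta|\vec y_1,\ldots,\vec y_n) \propto \bar p^{\,\alpha_1/(\alpha_1+\alpha_2-\alpha_3)}\cdot \bar l^{\,\alpha_2/(\alpha_1+\alpha_2-\alpha_3)}$; substituting the definitions of $\bar p$ and $\bar l$ and absorbing the fixed powers of $Z_p$ and $Z_l$ into the proportionality sign yields \eqref{eq:gen-bayes-rule-multi-prior}, with the integrability proviso carried over unchanged from Theorem \ref{thm:gen-posterior-dist}. For the boundary case $\alpha_3 = \alpha_1+\alpha_2$, the same theorem says any distribution supported on $\argmax_{\vec\theta}[\alpha_1 \ln \bar p + \alpha_2 \ln \bar l]$ is a minimizer; since $\ln\bar p$ and $\ln\bar l$ differ from $\sum_i \beta_i \ln p_i$ and $\tfrac1n\sum_i \ln l_{\vec y_i}$ only by the additive constants $-\ln Z_p$ and $-\ln Z_l$, this maximizer set coincides with the stated $\Theta^*$. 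The only real obstacle is bookkeeping: one must verify that every suppressed normalizer ($Z_p$, $Z_l$, and the partition function of $q$) is finite and independent of $q$, so that the reduction is an equivalence of optimization problems rather than merely an equality of objective values up to a $q$-dependent term; the well-definedness of $\bar p,\bar l$ established above plus the integrability hypothesis inherited from Theorem \ref{thm:gen-posterior-dist} close this gap, and all substantive content is already in that theorem.
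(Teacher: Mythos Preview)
Your proposal is correct and follows the natural route that the paper implicitly intends: the corollary is stated without proof, the tacit argument being a direct reduction to Theorem~\ref{thm:gen-posterior-dist}. Your explicit construction of the pooled densities $\bar p$ and $\bar l$, together with the AM--GM check that $Z_p, Z_l \le 1$, is a nice addition that the paper omits; otherwise the two approaches coincide.
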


The $(\alpha, \beta)$-posterior in Definition \ref{def:alpha-beta-posterior} under $m$ priors and $n$ i.i.d. samples, induced by \eqref{eq:gen-bayes-rule-multi-prior}, can be rewritten as
\begin{equation}\label{eq:alpha-beta-posterior-multi-prior}
\begin{array}{ll}
p_g(\vec \theta | \vec y_1,\ldots,\vec y_n) \propto \displaystyle \prod^{m}_{i = 1} p^{\beta \cdot \beta_i}_i(\vec \theta)\cdot \prod^{n}_{i = 1} l^{\frac{\alpha}{n}}_{\vec y_i}(\vec \theta),& 0 \le \alpha, \beta \le \infty,
\end{array}
\end{equation}
where $\alpha$ can depend on $n$. 
Table \ref{tab:examples} can be restated accordingly; we do not repeat here.

\subsection{Examples of Application}\label{subsec:applications}
The uncertainty-aware Bayes' rule \eqref{eq:alpha-beta-posterior}, i.e., the $(\alpha,\beta)$-posterior, has several potential applications in statistical machine learning and statistical signal processing. We specifically discuss classification and estimation problems. Suppose that the true joint distribution is $p_0(\vec y, \vec \theta)$, the true prior distribution is $p_0(\vec \theta)$, and the true likelihood distribution is $l_{0, \vec y}(\vec \theta)$ under the measurement $\vec y$. For both classification and estimation tasks, from the perspective of the Bayesian decision theory, we aim to minimize the following population risk \citep[Sections~5.1~and~5.4]{murphy2022probabilistic}
\begin{equation}\label{eq:cost-minimization}
    \min_{\vech \theta(\cdot)} \int_{\Theta} \int_{\cal Y}  c[\vech \theta(\vec y), \vec \theta] p_0(\vec y, \vec \theta) \d \vec y \d \vec \theta,
\end{equation}
where $\vech \theta: \cal Y \to \Theta$ is a classifier, if $\Theta$ is discrete, or an estimator, if $\Theta$ is continuous, of the ground truth $\vec \theta$, and $c: \Theta \times \Theta \to \R_+$ is the cost function; see also \citet[Chapter~1]{vapnik2000nature} and \citet[Section~1.5]{bishop2006pattern} for details on statistical learning. For classification problems, $c$ can typically be the $0/1$ loss function (i.e., indicator loss function), under which the optimal classifier is the MAP classifier
\begin{equation}\label{eq:MAP-classifier}
    \vech \theta_0(\vec y) \defeq \max_{\vec \theta \in \Theta} p_0(\vec \theta | \vec y).
\end{equation} 
For estimation problems, $c$ can typically be the quadratic loss function (i.e., mean-squared error function), under which the optimal estimator is the posterior mean 
\begin{equation}\label{eq:MMSE-estimator}
    \vech \theta_0(\vec y) \defeq \int_{\Theta} \vec \theta p_0(\vec \theta | \vec y) \d \vec \theta,
\end{equation} 
which is also known as the minimum mean-squared error (MMSE) estimator.  As we can see from \eqref{eq:cost-minimization}, \eqref{eq:MAP-classifier}, and \eqref{eq:MMSE-estimator}, the true joint distribution $p_0(\vec y, \vec \theta)$ plays a central role, as the corresponding true posterior distribution $p_0(\vec \theta | \vec y)$ does. 

In practice, when $p_0(\vec y, \vec \theta)$ is not accessible, we need a high-quality surrogate $p(\vec y, \vec \theta)$; recall \eqref{eq:bayesian-cost} and \eqref{eq:bayesian-cost-W}. Let the nominal prior distribution and the nominal likelihood distribution be $p(\vec \theta)$ and $l_{\vec y}(\vec \theta)$, respectively. 

\textbf{Specification of Robustness}. 
As per the high-level concept in Philosophy \ref{phi:robustness}, robustness can be specifically defined for Bayesian statistical learning \eqref{eq:cost-minimization}. Let $\vech \theta_a(\cdot)$ solve \eqref{eq:cost-minimization} under the assumed nominal distribution $p(\vec y, \vec \theta)$, and $\vech \theta_g(\cdot)$ solve \eqref{eq:cost-minimization} under the generalized distribution $p_g(\vec y, \vec \theta)$. Then, $\vech \theta_g(\cdot)$ is a robust solution if
\begin{equation}\label{eq:bayes-decision-robustness}
    \int_{\Theta} \int_{\cal Y}  c[\vech \theta_g(\vec y), \vec \theta] p_0(\vec y, \vec \theta) \d \vec y \d \vec \theta < \int_{\Theta} \int_{\cal Y}  c[\vech \theta_a(\vec y), \vec \theta] p_0(\vec y, \vec \theta) \d \vec y \d \vec \theta.
\end{equation}
That is, if testing under the true distribution $p_0(\vec y, \vec \theta)$, the uncertainty-aware solution $\vech \theta_g(\cdot)$ would lead to a smaller testing error than the nominal solution $\vech \theta_a(\cdot)$. We adopt this definition of robustness because the left-hand-side term in \eqref{eq:bayes-decision-robustness} is closer to the optimal risk \eqref{eq:cost-minimization} than the right-hand-side term. Hence, to find a robust solution, using a better surrogate for $p_0(\vec y, \vec \theta)$ is crucial, as shown in \eqref{eq:bayesian-cost}, \eqref{eq:bayesian-cost-posterior}, \eqref{eq:bayesian-cost-W}, and \eqref{eq:bayesian-cost-W-conditional}.

Below, we specifically discuss uncertainty-aware Bayesian MAP classification and Bayesian MMSE estimation.

\subsubsection{Bayesian MAP Classification}\label{subsec:bayes-classification}
Classification is a fundamental component of statistical machine learning. Let $\Theta \defeq \{1,2,3,\ldots,r\}$ be a categorical set where the integer $r$ is finite. The Bayes MAP classifier based on the conventional Bayes' rule is
$
    \hat \theta_0 (\vec y) \defeq \argmax_{\theta \in \Theta} \log p_0(\theta) + \log l_{0, \vec y} (\theta);
$ 
see \eqref{eq:MAP-classifier}. 
In practice, however, the true distributions $p_0(\theta)$ and $l_{0, \vec y}(\theta)$ are unknown. By employing the nominal distributions $p(\theta)$ and $l_{\vec y}(\theta)$, the uncertainty-aware Bayes MAP classifier based on the $(\alpha,\beta)$-posterior \eqref{eq:alpha-beta-posterior} can be obtained as
\begin{equation}\label{eq:generalized-bayes-classifier}
    \hat \theta_g (\vec y) \defeq \argmax_{\theta \in \Theta}  \beta \log p(\theta) + \alpha \log l_{\vec y} (\theta),
\end{equation} 
which is equivalent, in the sense of the same MAP classifier, to
\begin{equation}\label{eq:generalized-bayes-classifier-elite}
    \hat \theta_g (\vec y) \defeq \argmax_{\theta \in \Theta} (1-\lambda) \log p(\theta) + \lambda \log l_{\vec y} (\theta),
\end{equation} 
for $\lambda \defeq \alpha/(\alpha + \beta)$. As we can see, for Bayesian MAP classification problems, the $(\alpha,\beta)$-posterior with $\alpha, \beta \in [0, \infty)$ has the same effect as the $(\lambda, 1-\lambda)$-posterior with $\lambda \in [0, 1]$; that is, only the ratio $\alpha/\beta$ matters. The same effect can also be achieved by the $\frac{\alpha}{\beta}$-posterior in \eqref{eq:alpha-posterior-intro}. However, when the classification method is not a Bayes MAP classifier (e.g., when the cost function is non-$0/1$), such equivalence between the $(\alpha,\beta)$-posterior and the $\alpha$-posterior cannot be guaranteed. Hence, studying the $(\alpha,\beta)$-posterior is still necessary for generic classification problems under generic cost functions $c$.

\subsubsection{Bayesian MMSE Estimation}\label{subsec:bayes-estimation}
Estimation aims to infer an unknown quantity $\vec \theta$ based on observed data $\vec y$. It is of particular importance in statistical signal processing. Upon collecting $\vec y$, the Bayes estimator in the MMSE sense based on the conventional Bayes' rule is
$
    \vech \theta_0 (\vec y) \defeq \int \vec \theta p_0(\vec \theta | \vec y)  \d \vec \theta,
$ 
where $p_0(\vec \theta | \vec y) \propto p_0(\vec \theta, \vec y)$ is the true Bayesian posterior distribution; see \eqref{eq:MMSE-estimator}. In practice, however, the true distributions $p_0(\vec \theta)$ and $l_{0, \vec y}(\vec \theta)$ are unknown. By employing the nominal distributions $p(\vec \theta)$ and $l_{\vec y}(\vec \theta)$, the uncertainty-aware Bayes MMSE estimator based on the $(\alpha, \beta)$-posterior \eqref{eq:alpha-beta-posterior} can be obtained as
\begin{equation}\label{eq:generalized-bayes-estimator}
    \vech \theta_g (\vec y) \defeq \int_{\Theta} \vec \theta p_{g}(\vec \theta | \vec y)  \d \vec \theta.
\end{equation}

\subsubsection{Other Examples}
In this subsection, we discuss other applications of the $(\alpha, \beta)$-posterior. First, we give an example of application in statistical signal processing.
\begin{example}[Uncertainty-Aware Particle Filter]\label{ex:UA-PF}
Suppose that $\Theta$ is a finite set containing $r$ points (each point is known as a particle): i.e., $\Theta \defeq \{\vec \theta_1, \vec \theta_2, \ldots, \vec \theta_r\}$. For each particle $\vec \theta_i$, $i \in [r]$, the likelihood evaluated at the measurement $\vec y$ is assumed to be $p(\bm y|\vec \theta_i)$. Further, we suppose that the prior distribution is $\vec p \defeq [p_1,p_2,\ldots,p_r]$ and the $\vec y$-likelihood distribution (recall Definition \ref{def:likelihood-dist}) is $\vec l \defeq [l_1,l_2,\ldots,l_r]$. Then the uncertainty-aware Bayesian posterior, i.e., the $(\alpha,\beta)$-posterior, of particles equals 
$
p_g(\vec \theta_i | \vec y) \propto p^\beta_i \cdot l^\alpha_i
$, for every $i \in [r]$. 
The above formula leads to the uncertainty-aware particle filter for dynamic stochastic nonlinear systems.
\stp
\end{example}

Second, we give an example of application in statistical machine learning. It is shown that the uncertainty-aware MAP estimation can lead to the popular ridge regression.
\begin{example}[Uncertainty-Aware MAP Estimation]\label{ex:Ridge}
We consider a nonlinear regression model $\rscl y = f(\rvec x; \vec \theta) + \rscl v$ where $\rscl y \in \R$ is the response, $\rvec x \in \R^l$ is the feature vector, $\rscl v \sim \cal N(0,1)$ is the regression error, and $\vec \theta$ is the parameter vector. Supposing the prior distribution of $\vec \theta$ is $\cal N(\vec 0, \mat I_d)$, upon the collection of the data $(y, \vec x)$, the MAP estimator of $\vec \theta$ is 
$
\min_{\vec \theta \in \Theta} [y - f(\vec x; \vec \theta)]^2 + \vec \theta^\top \vec \theta.
$ 
By using the $(\alpha,\beta)$-posterior rule, the uncertainty-aware MAP estimator of $\vec \theta$ can be written as 
$
\min_{\vec \theta \in \Theta} {\alpha} [y - f(\vec x; \vec \theta)]^2 + {\beta} \vec \theta^\top \vec \theta.
$ 
By letting $\lambda \defeq \beta/\alpha$, we have the $\lambda$-ridge regression
$
\min_{\vec \theta \in \Theta} [y - f(\vec x; \vec \theta)]^2 + \lambda \vec \theta^\top \vec \theta.
$ 
Therefore, the popular ridge regression method in statistical machine learning can be interpreted as an uncertainty-aware MAP estimation.
\stp
\end{example}

Note that the MAP estimator (which is optimal under the $0/1$ loss) is not necessarily optimal under the quadratic loss, in the sense of \eqref{eq:cost-minimization}, but the MMSE estimator \eqref{eq:MMSE-estimator} is. When both the true prior $p_0(\vec \theta)$ and the true data-generating law $p_0[(\vec x, y) | \vec \theta]$ are Gaussian, the MAP and MMSE estimators coincide; for non-Gaussian distributions, the two usually differ. Third, we discuss an example of application that is popular in both statistical signal processing and statistical machine learning.
\begin{example}[Uncertainty-Aware Bayesian Model Averaging]\label{ex:Bayesian-model-averaging}
Suppose that we have $r$ models indexed by $\{\theta_1, \theta_2, \ldots, \theta_r\}$ to account for a signal processing or a machine learning problem. Let the prior distribution of models be $\vec p \defeq [p_1,p_2,\ldots,p_r]$ and the likelihood distribution (recall Definition \ref{def:likelihood-dist}) be $\vec l \defeq [l_1,l_2,\ldots,l_r]$. Then the uncertainty-aware Bayesian posterior, i.e., the $(\alpha,\beta)$-posterior, of models equals 
$
p_g(\theta_i | \vec y) \propto p^\beta_i \cdot l^\alpha_i,
$ 
for every $i \in [r]$. 
The above formula induces the uncertainty-aware Bayesian model averaging method.
\stp
\end{example}

\subsection{Parameter Tuning}\label{subsec:parameter-tuning}
This subsection discusses the tuning methods for the parameters $(\alpha, \beta)$ in practice. The main purpose is to find some $(\alpha, \beta)$ such that the $(\alpha, \beta)$-posterior can outperform the conventional Bayesian posterior and the $\alpha$-posterior; recall Philosophy \ref{phi:robustness}. 

Let $\vec \omega \defeq [\alpha, \beta]^\top$ denote the parameter vector of an $(\alpha,\beta)$-posterior and $f(\vec \omega)$ the loss function of using the $(\alpha,\beta)$-posterior for a specific application, where $f: \R^2_{+} \to \R_+$.  We assume that $\vec \omega$ takes values on $[0, \tau]^2$ where $\tau > 1$ is a bounded positive real number (for example, $\tau = 5$). The optimal parameter tuning for $\vec \omega$ can be formulated as a minimization problem
\begin{equation}\label{eq:parameter-tuning}
    \min_{\vec \omega \in [0, \tau]^2} f(\vec \omega).
\end{equation}
Examples of $f(\vec \omega)$ are given below, under the Bayesian decision framework \eqref{eq:cost-minimization}.

\begin{example}[MAP Classification]\label{ex:classification}
For classification problems, the loss function is the misclassification probability, i.e.,
\begin{equation}\label{eq:bayes-classifier-cost}
    f(\vec \omega) \defeq \int \int \bb I_{\{\theta \neq \hat \theta_{g, \vec \omega} (\vec y)\}} \cdot p_0(\vec y, \vec \theta) \d \vec y \d \vec \theta,
\end{equation}
where $\bb I_{\{\cdot\}}$ denotes the indicator function (i.e., $0/1$ loss) and the uncertainty-aware MAP classifier $\hat \theta_{g, \vec \omega} (\vec y)$ is defined in \eqref{eq:generalized-bayes-classifier}. Note that $\hat \theta_{g} (\vec y)$ in \eqref{eq:generalized-bayes-classifier} depends on $\vec \omega$. If the classification method in \eqref{eq:generalized-bayes-classifier-elite} is employed, we have the cost function 
\[
f(\lambda) \defeq \int \int \bb I_{\{\theta \neq \hat \theta_{g, \lambda} (\vec y)\}} \cdot p_0(\vec y, \vec \theta) \d \vec y \d \vec \theta,~\lambda \in [0, 1],
\] 
since the MAP classifier defined in \eqref{eq:generalized-bayes-classifier-elite} is parameterized by $\lambda$. \stp
\end{example}

\begin{example}[MMSE Estimation]\label{ex:estimation}
For estimation problems, the loss function in the mean-squared error sense can be written as
\begin{equation}\label{eq:bayes-estimation-cost}
    f(\vec \omega) \defeq \int \int [\vec \theta - \vech \theta_{g, \vec \omega} (\vec y)]^\top [\vec \theta - \vech \theta_{g, \vec \omega} (\vec y)] \cdot p_0(\vec y, \vec \theta) \d \vec y \d \vec \theta,
\end{equation}
where the uncertainty-aware MMSE estimator $\vech \theta_{g, \vec \omega} (\vec y)$ is defined in \eqref{eq:generalized-bayes-estimator}. Note that $\vech \theta_{g} (\vec y)$ in \eqref{eq:generalized-bayes-estimator} depends on $\vec \omega$.
\stp
\end{example}

\textbf{Data-Driven Case}. In practice, however, the loss function $f(\vec \omega)$ is unknown due to the unavailability of $p_0(\vec y, \vec \theta)$. Using historical data $\cal D_t \defeq \{(\vec \theta_1, \vec y_1), (\vec \theta_2, \vec y_2), \ldots, (\vec \theta_t, \vec y_t)\}$, we can use data-driven estimate $\hat f(\vec \omega)$ as a surrogate of $f(\vec \omega)$. For instance, 
\begin{equation}\label{eq:bayes-classifier-cost-empirical}
\hat f(\vec \omega) = \frac{1}{t} \sum^t_{i = 1} \bb I_{\{\theta_i \neq \hat \theta_{g, \vec \omega} (\vec y_i)\}}
\end{equation} 
in Example \ref{ex:classification} and 
\begin{equation}\label{eq:bayes-estimation-cost-empirical}
\hat f(\vec \omega) = \frac{1}{t} \sum^t_{i = 1} [\vec \theta_i - \vech \theta_{g, \vec \omega} (\vec y_i)]^\top [\vec \theta_i - \vech \theta_{g, \vec \omega} (\vec y_i)]
\end{equation} 
in Example \ref{ex:estimation}. When the training data set $\cal D_t$ is sufficiently large, $\hat f(\vec \omega)$ can be a point-wisely good estimate of $f(\vec \omega)$. For real-world statistical machine learning and signal processing tasks, using data-driven estimates \eqref{eq:bayes-classifier-cost-empirical} and \eqref{eq:bayes-estimation-cost-empirical} for \eqref{eq:parameter-tuning} is standard.

In the following, we suggest two empirical methods for tuning $\vec \omega$. Since \eqref{eq:parameter-tuning} is a low-dimensional optimization problem with at most two variables on a hyper-cube domain [cf. \eqref{eq:generalized-bayes-classifier-elite} and \eqref{eq:generalized-bayes-estimator}], both methods can be statistically and computationally efficient; for empirical validation, see experiments in Subsection \ref{sec:applications}. 

\subsubsection{Grid Search}
For simplicity in operation, we can generate a two-dimensional uniform grid $\Omega_{\text{grid}}$ on $[0, \tau]^2$ to obtain discrete empirical evaluations $\hat f(\vec \omega)$ for $\vec \omega \in \Omega_{\text{grid}} \subset [0, \tau]^2$; note that $(1, 1)$ should be included in $\Omega_{\text{grid}}$. Then, we solve 
$
    (\alpha^*, \beta^*) = \argmin_{\vec \omega \in \Omega_{\text{grid}}} \hat f(\vec \omega)
$ 
to obtain the best parameter pair $(\alpha^*, \beta^*)$. Since $(1, 1) \in \Omega_{\text{grid}}$, we have $\hat f([\alpha^*, \beta^*]^\top) \le \hat f([1, 1]^\top)$. Namely, the loss under the proposed $(\alpha, \beta)$-posterior is no larger than that under the usual Bayesian posterior. In addition, if $(\alpha, 1) \in \Omega_{\text{grid}}$, we have $\hat f([\alpha^*, \beta^*]^\top) \le \hat f([\alpha, 1]^\top)$. Therefore, the loss under the proposed $(\alpha, \beta)$-posterior is no larger than that under the existing $\alpha$-posterior. These claims can empirically but sufficiently validate the advantage of the $(\alpha, \beta)$-posterior over the conventional Bayesian posterior and the $\alpha$-posterior. Note that the popular SafeBayes algorithm to tune $\alpha$ in the $\alpha$-posterior is also operationally and essentially a grid search method, but on $[0, 1]$ instead of $[0, \tau]$ \citep[Algorithm~1]{grunwald2012safe}. 


\subsubsection{Surrogate Optimization}
The grid search method cannot pursue the global optimality on $[0, \tau]^2$ due to the discretization. Considering the optimization problem \eqref{eq:parameter-tuning} and its characteristics (i.e., $f$ is unknown but some noisy evaluations $\hat f$ are available), we can leverage the surrogate optimization frameworks, for example, radial-basis-function surrogate optimization \citep{shen2020global}, Gaussian-process surrogate optimization (i.e., Bayesian optimization) \citep{wang2023recent,garnett2023bayesian}.
To put it simply, surrogate optimization uses as few as possible noisy evaluations $\hat f(\vec \omega)$ to learn the true function $f(\vec \omega)$ in an online manner, and simultaneously search for the globally optimal minimizer(s) $\vec \omega^*$ such that $f(\vec \omega^*) \le f(\vec \omega)$ for all $\vec \omega$ on $[0, \tau]^2$.

\subsection{Concrete Applications and Experiments}\label{sec:applications}
This subsection presents experimental results to show the necessity and usefulness of the $(\alpha, \beta)$-posterior, compared to the existing $\alpha$-posterior and Bayesian posterior. All the source data and codes are available online at GitHub: \url{https://github.com/Spratm-Asleaf/Bayes-Rule}. 
The experimental results are obtained by a Lenovo laptop with 16G RAM and 11th Gen Intel(R) Core(TM) i5-11300H CPU @ 3.10GHz.

\subsubsection{Illustrating Examples: Hidden-Quantity Estimation}\label{subsec:experiment-hidden-quantity}
We study a hidden-quantity estimation problem subject to a linear data-generating process, under the deterministic and stochastic interpretations of Bayes' rule, respectively corresponding to the estimation of a constant and a random variable. This is a simplified but motivating example of a large body of real-world statistical signal processing problems, e.g., sensor localization, target tracking, and state estimation (also known as state filtering). Well-established algorithms in this domain include the Kalman and particle filters, both of which fall under the framework of sequential Bayesian inference \eqref{eq:sequential-data}; see \citet{simon2006optimal}; \citet[Chapter~13]{bishop2006pattern}. Experiments on real-world state estimation and target tracking problems are presented later in this subsection. Note that in such signal-processing applications, only one observation $y$ is available for statistical tasks; recall \eqref{eq:sequential-data}.

\textbf{Deterministic Interpretation.} We consider a data-generating process 
\[
\rscl y = \theta_0 + \rscl v
\] 
where $\theta_0$ denotes the fixed hidden quantity to be estimated using the observable quantity $\rscl y$, and $\rscl v$ the noise that obeys $\cal N(v; v_0, r_0)$. For the experimental purpose, we let the true values be $\theta_0 = 1$, $v_0 = 0.1$, and $r_0 = 1.1$. Hence, the true noise distribution is $\cal N(v; 0.1, 1.1)$ and the true likelihood function is $\theta \mapsto \cal N(y; \theta + 0.1, 1.1)$. Since the true prior distribution is $\delta_{1}(\theta)$, treated as $\cal N(\theta; 1, \epsilon)$ for $\epsilon \to 0$, the true posterior distribution is $\delta_{1}(\theta)$ as well.

Under model misspecifications, the nominal noise distribution is assumed to be $\cal N(v; 0, 1)$, so the nominal likelihood function is $\theta \mapsto \cal N(y; \theta, 1)$. In addition, we suppose the nominal prior distribution (i.e., normalized subjective belief of $\theta_0$) is $\cal N(\theta; 0.8, 0.1)$.

At the measurement $y = 1.2$, the nominal likelihood distribution is $\cal N(\theta; 1.2, 1.0)$. By incorporating the nominal prior distribution $\cal N(\theta; 0.8, 0.1)$, we have
\begin{itemize}
    \item \textit{Bayesian Posterior}: $\cal N \left(\theta; 0.84, 0.09 \right)$, which significantly deviates from the true posterior $\delta_{1}(\theta)$.
    
    \item \textit{The $\alpha$-Posterior}: $\cal N \left(\theta; \frac{1.2 \alpha + 8}{\alpha + 10}, \frac{1}{\alpha + 10} \right)$. To match the true posterior mean, $\alpha$ must equal $10$, which cannot be obtained by SafeBayes because SafeBayes requires that $\alpha \in [0, 1]$ \citep{grunwald2012safe,grunwald2017inconsistency}. Consequently, the posterior variance is $\frac{1}{20}$. Note that in existing literature, $\alpha \in [0, 1]$ is largely suggested for some favored theoretical properties under specific settings \citep{miller2019robust,medina2022robustness}. However, this paper echoes \citet{wu2023comparison} that $\alpha \ge 1$ should also be actively considered in practice; recall Example \ref{ex:benefits-alpha-distributions}.
    
    \item \textit{The $(\alpha, \beta)$-Posterior}: $\cal N \left(\theta; \frac{1.2 \alpha + 8 \beta}{\alpha + 10 \beta}, \frac{1}{\alpha + 10 \beta} \right)$. To math the true posterior mean, we must have $\alpha = 10 \beta$. Consequently, the posterior variance is $\frac{1}{20\beta}$. For a sufficiently large $\beta$, the $(\alpha, \beta)$-posterior can match the true posterior $\delta_{1}(\theta)$ well.
\end{itemize}
As indicated, in this studied example, both the $(\alpha, \beta)$-posterior and the $\alpha$-posterior can well match the true posterior mean if the involved parameters can be satisfactorily tuned. However, only the proposed $(\alpha, \beta)$-posterior can simultaneously well match the posterior variance, while the existing $\alpha$-posterior (even under best tuning) and Bayesian posterior cannot.

\textbf{Stochastic Interpretation.} We consider a data-generating process 
\[
\rscl y = \rscl \theta + \rscl v
\] 
where $\rscl \theta$ denotes the hidden unobservable quantity to be estimated using the observable quantity $\rscl y$, and $\rscl v$ the noise. The true prior distribution of $\rscl \theta$, from which $\rscl \theta$ is physically sampled, is $\cal N(\theta; \theta_0, q_0)$. The true noise distribution of $\rscl v$ is $\cal N(v; v_0, r_0)$. For the experimental purpose, we let the true values be $\theta_0 = 1.0$, $q_0 = 0.5$, $v_0 = 0.1$, and $r_0 = 1.1$. Hence, the true noise distribution is $\cal N(v; 0.1, 1.1)$, the true likelihood function is $\theta \mapsto \cal N(y; \theta + 0.1, 1.1)$, and the true prior distribution is $\cal N(\theta; 1.0, 0.5)$.

Under model misspecifications, the nominal noise distribution is assumed to be $\cal N(v; 0, 1)$, so the nominal likelihood function is $\theta \mapsto \cal N(y; \theta, 1)$. In addition, we suppose the nominal prior distribution is $\cal N(\theta; 1.0, 1.0)$.

At the measurement $y = 1.2$, the true likelihood distribution is $\cal N(\theta; 1.1, 1.1)$. By incorporating the true prior distribution $\cal N(\theta; 1.0, 0.5)$, the true posterior distribution is $\cal N(\theta; 1.03, 0.34)$. In addition, the nominal likelihood distribution is $\cal N(\theta; 1.2, 1.0)$. By incorporating the nominal prior distribution $\cal N(\theta; 1.0, 1.0)$, we have
\begin{itemize}
    \item \textit{Bayesian Posterior}: $\cal N \left(\theta; 1.1, 0.5 \right)$, which deviates from the true posterior distribution $\cal N(\theta; 1.03, 0.34)$ in both mean and variance.
    
    \item \textit{The $\alpha$-Posterior}: $\cal N \left(\theta; \frac{1.2 \alpha + 1}{\alpha + 1}, \frac{1}{\alpha + 1} \right)$. To exactly match the true posterior mean, $\alpha$ must equal $0.18$. As a result, the posterior variance is $0.85$, which is significantly different from the ground truth $0.34$. Hence, the $\alpha$-posterior cannot exactly match the true posterior distribution $\cal N(\theta; 1.03, 0.34)$ for any $\alpha \ge 0$.
    
    \item \textit{The $(\alpha, \beta)$-Posterior}: $\cal N \left(\theta; \frac{1.2 \alpha + \beta}{\alpha + \beta}, \frac{1}{\alpha + \beta} \right)$. To exactly match the true posterior mean, we have $\alpha = 0.18 \beta$. To further exactly match the posterior variance, we can let $\beta = 2.49$. Hence, the $(\alpha, \beta)$-posterior can exactly match the true posterior distribution $\cal N(\theta; 1.03, 0.34)$ for $\alpha = 0.45$ and $\beta = 2.49$.
\end{itemize}
As indicated, in this studied example, only the proposed $(\alpha, \beta)$-posterior can perfectly match the true posterior distribution if the involved parameters can be well tuned, while the existing $\alpha$-posterior (even under best tuning) and Bayesian posterior cannot.

\subsubsection{Illustrating Example: Bayesian Linear Regression With Fixed Design}\label{subsebsec:linear-regression}
In this subsection, we study a Bayesian linear regression problem, where $n$ i.i.d. samples $\{(\vec x_i, y_i)\}_{i \in [n]}$ are collected; $\{\vec x_i\}$ denote the feature vectors and $\{y_i\}$ the responses. Suppose the true data-generating process is
\begin{equation}\label{eq:true-data-generating}
    \rscl y_i = \rvec x^\top_i \vec \theta + \rscl w_i + \rscl v_i,~~~i = 1, 2, \ldots, n,
\end{equation}
and the nominal (i.e., assumed) data-generating model is
\begin{equation}\label{eq:nominal-data-generating}
    \rscl y_i = \rvec x^\top_i \vec \theta + \rscl v_i,~~~i = 1, 2, \ldots, n,
\end{equation}
where the true value of $\vec \theta \in \R^d$ is $\vec \theta_0$; $\rvec x_i \sim \cal N(\vec 0, \mat I_d)$, for every $i \in [n]$; $\rscl w_i \sim \cal N(\mu_w, \sigma^2_w)$ denotes an \bfit{unmodeled} term and $\rscl v_i \sim \cal N(0, \sigma^2)$ denotes the noise term. We assume that the true values of $\mu_w$, $\sigma^2_w$, and $\sigma^2$ are exactly known, and the quantity to estimate is only $\vec \theta \in \R^d$; note that we treat $\{\vec x_i\}_{i \in [n]}$ as fixed in the statistical inference of $\vec \theta$.

\textit{True Prior and Posterior}: Both the true prior and true posterior are $\delta_{\vec \theta_0}(\vec \theta)$, treated as $\cal N(\vec \theta_0, \epsilon \mat I_d)$ for $\epsilon \to 0$.

\textit{Nominal Prior Distribution}: We employ a Gaussian prior $\vec \theta \sim \cal N(\vec \mu_0, \mat \Sigma_0)$.

\textit{Nominal Likelihood Function}: Let $\mat X \defeq [\vec x^\top_1; \vec x^\top_2; \ldots; \vec x^\top_n]$ and $\vec y \defeq [y_1; y_2; \ldots; y_n]$;\footnote{MATLAB's stacking notation for matrix and vector is used in this subsection.} i.e., $\mat X \in \R^{n \times d}$ and $\vec y \in \R^n$. The likelihood function is $p(\vec y \mid \vec \theta) = \cal N(\vec y - \mat X \vec \theta, \sigma^2 \mat I_n)$.


\textit{Bayesian Posterior}: The conventional Bayesian posterior is $\cal N(\vec \mu_n, \mat \Sigma_n)$, where 
\begin{equation}\label{eq:BLR-Bayesian}
\mat \Sigma_n = \left(\mat \Sigma_0^{-1}+\frac{1}{\sigma^2} \mat X^{\top} \mat X\right)^{-1},~~~ \vec \mu_n  = \mat \Sigma_n \cdot \left(\mat \Sigma_0^{-1} \vec \mu_0 + \frac{1}{\sigma^2} \mat X^{\top} \vec y\right).
\end{equation}

\textit{The $\alpha$-Posterior}: The $\alpha$-posterior is $\cal N(\vec \mu_{n, \alpha}, \mat \Sigma_{n, \alpha})$, where 
\begin{equation}\label{eq:BLR-alpha}
\mat \Sigma_{n, \alpha} = \left(\mat \Sigma_0^{-1}+\frac{\alpha}{\sigma^2} \mat X^{\top} \mat X\right)^{-1},~~~ \vec \mu_{n, \alpha}  = \mat \Sigma_{n, \alpha} \cdot \left(\mat \Sigma_0^{-1} \vec \mu_0 + \frac{\alpha}{\sigma^2} \mat X^{\top} \vec y\right).
\end{equation}

\textit{The $(\alpha, \beta)$-Posterior}: The $(\alpha, \beta)$-posterior is $\cal N(\vec \mu_{n, \alpha, \beta}, \mat \Sigma_{n, \alpha, \beta})$, where 
\begin{equation}\label{eq:BLR-alpha-beta}
\mat \Sigma_{n, \alpha, \beta} = \left(\beta \mat \Sigma_0^{-1}+\frac{\alpha}{\sigma^2} \mat X^{\top} \mat X\right)^{-1},~~~ \vec \mu_{n, \alpha, \beta}  = \mat \Sigma_{n, \alpha, \beta} \cdot \left(\beta \mat \Sigma_0^{-1} \vec \mu_0 + \frac{\alpha}{\sigma^2} \mat X^{\top} \vec y\right).
\end{equation}

\textit{Theoretical Insights}: A posterior mean is a weighted sum of the prior mean $\vec \mu_0$ and the MLE $(\mat X^\top \mat X)^{-1} \mat X^{\top} \vec y$, where the weight matrices are determined by their corresponding covariances $\mat \Sigma_0$ and $\sigma^2 \cdot (\mat X^\top \mat X)^{-1}$. In the following, we discuss two specific cases to show the advantages of the $(\alpha, \beta)$-posterior: the key message is that only the $(\alpha, \beta)$-posterior allows independent tuning of the posterior mean and covariance, while the Bayesian posterior and the $\alpha$-posterior do not. 
\begin{itemize}[label=\scriptsize$\bullet$]
    \item Suppose that the prior mean $\vec \mu_0$ exactly equals the true regression parameter $\vec \theta_0$. Meanwhile, assume that the nominal data-generating model \eqref{eq:nominal-data-generating} is highly unreliable due to the unmodeled term $\rscl w_i$, relative to the true process \eqref{eq:true-data-generating}, so that the MLE $(\mat X^\top \mat X)^{-1}\mat X^\top \vec y$ is extremely uninformative. In this case, to center the posterior at $\vec \theta_0$ requires suppressing the influence of the likelihood. For the $\alpha$-posterior, the only way to achieve this is to set $\alpha \defeq 0$, which removes the MLE contribution entirely. However, the resulting posterior covariance coincides with the prior covariance $\mat \Sigma_0$, which is fixed and cannot be further adjusted to reflect increased confidence in the correctness of the posterior mean. In contrast, under the $(\alpha,\beta)$-posterior, setting $\alpha \defeq 0$ again lets the posterior mean exactly match $\vec \theta_0$, but the posterior covariance becomes $\mat \Sigma_0/\beta$. By tuning $\beta$, the uncertainty around the posterior mean can be continuously controlled, and in the limit $\beta \to \infty$, the posterior covariance shrinks to $\mat 0$. This demonstrates a key advantage of the $(\alpha,\beta)$-posterior: it allows one to encode absolute confidence in a correct prior mean without being constrained by the fixed scale of the prior covariance. Note that, in this experiment, the true prior and true posterior are $\delta_{\vec \theta_0}(\vec \theta)$, a zero-covariance distribution.

    \item Suppose that the prior knowledge is strongly misleading, in the sense that the prior mean $\vec \mu_0$ deviates significantly from the true parameter $\vec \theta_0$, while the prior covariance $\mat \Sigma_0$ is small and thus overconfident. In this case, it is desirable for the posterior mean to be dominated by the data and centered at the MLE. For the $\alpha$-posterior, to completely remove the influence of the prior, we must take $\alpha \defeq \infty$. However, this operation also forces the posterior covariance to collapse to zero, implying absolute certainty in the MLE and eliminating any meaningful uncertainty quantification. In contrast, by setting $\beta \defeq 0$, the contribution of the prior is entirely removed regardless of how small $\mat \Sigma_0$ is, and the posterior mean coincides with the MLE. At the same time, the posterior covariance remains nonzero and equals $\sigma^2(\mat X^\top \mat X)^{-1}/\alpha$, allowing uncertainty to be calibrated independently through $\alpha$. This highlights a fundamental advantage of the $(\alpha,\beta)$-posterior: it permits independent tuning of posterior mean and covariance, which is not possible under either the Bayesian posterior or the $\alpha$-posterior.
\end{itemize}

As indicated, fixing the posterior mean, the employment of the $\beta$ parameter in the $(\alpha,\beta)$-posterior allows a new freedom to arbitrarily tune the posterior covariance. However, in the $\alpha$-posterior, the tuning of the posterior mean and covariance is coupled by $\alpha$: given the posterior mean, the posterior covariance cannot be arbitrarily tuned. Indeed, for a given $(\alpha_1, \beta_1)$-posterior, by letting $\alpha \defeq \frac{\alpha_1}{\beta_1}$ in the $\alpha$-posterior, we can always have $\vec \mu_{n, \frac{\alpha_1}{\beta_1}} = \vec \mu_{n, \alpha_1, \beta_1}$ because the MMSE and MAP estimates coincides for Gaussian prior and likelihood distributions; cf. \eqref{eq:BLR-alpha} and \eqref{eq:BLR-alpha-beta}. That is, the posterior mean of the $\frac{\alpha_1}{\beta_1}$-posterior matches that of the $(\alpha_1, \beta_1)$-posterior; only the ratio $\frac{\alpha_1}{\beta_1}$ matters. However, unless $\beta_1 \equiv 1$, it holds that
\[
    \mat \Sigma_{n, \frac{\alpha_1}{\beta_1}} \overset{\eqref{eq:BLR-alpha}}{=} \left(\mat \Sigma_0^{-1}+\frac{\alpha_1}{\sigma^2 \beta_1} \mat X^{\top} \mat X\right)^{-1} \ne \left(\beta_1 \mat \Sigma_0^{-1}+\frac{\alpha_1}{\sigma^2} \mat X^{\top} \mat X\right)^{-1} \overset{\eqref{eq:BLR-alpha-beta}}{=} \mat \Sigma_{n, \alpha_1, \beta_1}.
\]
That is, the posterior covariance of the $\frac{\alpha_1}{\beta_1}$-posterior cannot always coincide with that of the $(\alpha_1, \beta_1)$-posterior. 

The above properties and advantages of the $(\alpha, \beta)$-posterior have also been noted in Examples \ref{ex:alpha-post-deficiency} and \ref{ex:alpha-post-deficiency-2}, Remarks \ref{rem:deficiency-alpha-overconcentration} and \ref{rem:deficiency-alpha-posterior}, and the experiment \quotemark{\textit{Illustrating Examples: Hidden-Quantity Estimation}} before.

\textit{Performance Measure}: Let $\cal N(\vech \mu, \hat{\mat \Sigma})$ be a generic posterior, among the above three types of posteriors, which serves as an estimate of the true posterior $\cal N(\vec \theta_0, \epsilon \mat I_d)$. In this experiment, we use the Wasserstein distance $d_{\text{W}}$ between $\cal N(\vech \mu, \hat{\mat \Sigma})$ and $\cal N(\vec \theta_0, \epsilon \mat I_d)$ to define the the estimation error (EstiError) of $\cal N(\vech \mu, \hat{\mat \Sigma})$, which, due to $\epsilon \to 0$, is 
\begin{equation}\label{eq:error-linear-regression}
    \text{EstiError}[\cal N(\vech \mu, \hat{\mat \Sigma})] \defeq d_{\text{W}}[\cal N(\vech \mu, \hat{\mat \Sigma}), \cal N(\vec \theta_0, \epsilon \mat I_d)] = \sqrt{\|\vech \mu - \vec \theta_0\|^2_2 + \Tr(\hat{\mat \Sigma})}.
\end{equation}

\textit{Parameter Tuning}: The $\alpha$ parameter in the $\alpha$-posterior is tuned using the popular SafeBayes algorithm, where the search grid for $\alpha$ is set to $\Lambda \defeq \{\frac{1}{2^\kappa}, \frac{1}{2^{\kappa-1}}, \frac{1}{2^{\kappa-2}}, \ldots, \frac{1}{2}, 1\}$ and $\kappa \defeq 10$; see \citet[Algorithm~1]{grunwald2012safe}; \citet[Algorithm~1]{grunwald2017inconsistency}. Note that the SafeBayes algorithm in \citet[Algorithm~1]{grunwald2012safe}; \citet[Algorithm~1]{grunwald2017inconsistency} is essentially a grid search method on $[0, 1]$. Since the predictive distribution of the response $y_t$ given a new test feature vector $\vec x_t$ is $\cal N(\vec x^\top_t \vec \mu_{n, \alpha}, \sigma^2 + \vec x^\top_t \mat \Sigma_{n, \alpha} \vec x_t)$, SafeBayes finds the empirically best value $\hat \alpha$ of $\alpha$ via the following optimization \citep[Eq.~(3.1)]{wu2023comparison}
\begin{equation}\label{eq:SafeBayes-cost}
\hat \alpha \defeq \argmin_{\alpha \in \Lambda} \sum^n_{i = 2} \frac{\left(y_i-\mu_i\right)^2}{2 v_i} + \frac{1}{2} \log \left(2 \pi v_i\right),
\end{equation}
where $\mu_i \defeq \vec x^\top_i \vec \mu_{i-1, \alpha}$ and $v_i \defeq \sigma^2 + \vec x^\top_{i} \mat \Sigma_{i-1, \alpha} \vec x_{i}$. In addition, we also employ the surrogate optimization method on $[0, 10]$ to tune the $\alpha$ parameter. For the proposed $(\alpha, \beta)$-posterior, we tune the parameters $(\alpha, \beta)$ using the surrogate optimization method on $[0, 10] \times [0, 10]$. For technical details of the surrogate optimization method, see Subsection \ref{subsec:parameter-tuning}; for implementation, see MATLAB's $\mathsf{surrogateopt}$ function:  \url{www.mathworks.com/help/gads/surrogateopt.html}.

\textit{Operational Setups}: For the experimental purpose, we set $\mu_w \defeq 2.5$, $\sigma^2_w \defeq 2.5$, $\sigma^2 \defeq 0.25$, and $\vec \theta_0 \defeq [2.5; -2.5; 1; -1; 0.05; -0.05]$ (i.e., $d = 6$). Namely, a large modeling error $\rscl w_i$ exists in the nominal model \eqref{eq:nominal-data-generating}, relative to the true process \eqref{eq:true-data-generating}, so that the likelihood function is highly uncertain. We conduct 1000 independent Monte Carlo (MC) episodes. In each episode, the data set $\{(\vec x_i, y_i)\}_{i \in [n]}$ is generated according to the true data-generating law \eqref{eq:true-data-generating} with the true parameter $\vec \theta_0$, and two independent realizations of this data set are generated, yielding $2n$ observations in total. The first realization is used for training, and the second is used for testing. 
To tune the $\alpha$-posterior and the $(\alpha, \beta)$-posterior using the surrogate optimization method, the cost function \eqref{eq:error-linear-regression} is evaluated on the training data set, where the ground truth $\vec \theta_0$ is labeled for the experimental comparison. 
For each posterior, the overall performance is measured by the average estimation error over 1000 independent MC trials; for each MC, the estimation error \eqref{eq:error-linear-regression} is evaluated on the testing data set.

\textit{Experimental Results}: We report experimental results for the following two cases, across which the prior distribution $\cal N(\vec \mu_0, \mat \Sigma_0)$ varies. Case 1: $\vec \mu_0 \defeq \vec 0$ and $\mat \Sigma_0 \defeq 2.5 \mat I_d$. Case 2: $\vec \mu_0 \defeq \vec \theta_0$ and $\mat \Sigma_0 \defeq 2.5 \mat I_d$. Namely, the prior in Case 1 is miscalibrated (i.e., the center is significantly away from $\vec \theta_0$), while that in Case 2 is well-calibrated. The training and testing errors are displayed in Tables \ref{tab:Bayes-linear-regression-training} and \ref{tab:Bayes-linear-regression-testing}, respectively.

\begin{table}[!htbp]
\centering
\caption{Average Estimation Errors (Training) of Different Posteriors}
\label{tab:Bayes-linear-regression-training}
\begin{tabular}{l|l|ccccc}
\bottomrule[0.8pt]
\multicolumn{1}{l}{}  &          & \tabincell{c}{Bayesian \\ Posterior}  & \tabincell{c}{$\alpha$-Posterior \\ (SafeBayes)}   &   \tabincell{c}{$\alpha$-Posterior \\ (Surrogate)}     &   \tabincell{c}{$(\alpha, \beta)$-Posterior \\ (Surrogate)}  \\ 
\hline 
\multirow{2}{*}{$n=10$} & Case 1   & 4.2049       & 4.3380      & 3.3360    & \textbf{2.5319} \\ 
                      & Case 2   & 4.2202       & 3.5406      & 2.9677    & \textbf{1.2167} \\ 
\hline 
\multirow{2}{*}{$n=20$} & Case 1   & 2.2929       & 4.1306      & 2.2275    & \textbf{1.8705} \\ 
                      & Case 2   & 2.2869       & 3.4586      & 2.2039    & \textbf{1.1854} \\ 
\hline 
\multirow{2}{*}{$n=50$} & Case 1   & 1.2932       & 3.7973      & 1.2822    & \textbf{1.2143} \\ 
                      & Case 2   & 1.2841       & 3.2125      & 1.2734    & \textbf{1.0479} \\ 
\toprule[0.8pt]
\end{tabular}
\end{table}

\begin{table}[!htbp]
\centering
\caption{Average Estimation Errors (Testing) of Different Posteriors}
\label{tab:Bayes-linear-regression-testing}
\begin{tabular}{l|l|ccccc}
\bottomrule[0.8pt]
\multicolumn{1}{l}{}  &          & \tabincell{c}{Bayesian \\ Posterior}  & \tabincell{c}{$\alpha$-Posterior \\ (SafeBayes)}   &   \tabincell{c}{$\alpha$-Posterior \\ (Surrogate)}     &   \tabincell{c}{$(\alpha, \beta)$-Posterior \\ (Surrogate)}  \\ 
\hline 
\multirow{2}{*}{$n=10$} & Case 1   & 4.2058       & 4.4192      & 3.7848    & \textbf{2.8890} \\ 
                      & Case 2   & 4.1314       & 3.5268      & 3.4523    & \textbf{1.2521} \\ 
\hline 
\multirow{2}{*}{$n=20$} & Case 1   & 2.3042       & 4.1778      & 2.3164    & \textbf{2.0968} \\ 
                      & Case 2   & 2.2674       & 3.4505      & 2.2899    & \textbf{1.2826} \\ 
\hline 
\multirow{2}{*}{$n=50$} & Case 1   & 1.2968       & 3.8224      & 1.2907    & \textbf{1.2812} \\ 
                      & Case 2   & 1.3116       & 3.2112      & 1.3022    & \textbf{1.2094} \\ 
\toprule[0.8pt]
\end{tabular}
\end{table}

As we can see from Tables \ref{tab:Bayes-linear-regression-training} and \ref{tab:Bayes-linear-regression-testing}, the $\alpha$-posterior tuned by SafeBayes may contrarily worsen the performance, compared to the conventional Bayesian posterior, because in our experiments, the performance measure is the distributional estimation error defined in \eqref{eq:error-linear-regression}, while in SafeBayes, the cost function to minimize is the cumulative prediction error of $\{y_i\}_{i \in [n]}$ defined in \eqref{eq:SafeBayes-cost}. This observation echoes our previous claim about parameter tuning: that is, the optimal parameter in one sense does not necessarily guarantee satisfactory performance in another. However, the $\alpha$-posterior tuned by the surrogate method consistently outperforms the conventional Bayesian posterior. In addition, the $(\alpha, \beta)$-posterior tuned by the surrogate optimization method can outperform both the conventional Bayesian posterior and the $\alpha$-posterior. This is because both the conventional Bayesian posterior and the $\alpha$-posterior are special cases of the $(\alpha, \beta)$-posterior, under parameter pairs $(1, 1)$ and $(\alpha, 1)$, respectively. Hence, in tuning $(\alpha, \beta)$ on $[0, 10] \times [0, 10]$ using the surrogate optimization method, the special points $(1, 1)$ and $(\alpha, 1)$ have already been implicitly evaluated and compared.

    
    

\subsubsection{Real-World Applications: Text/Image Classification and State Estimation}\label{subsebsec:real-world-applications}
This subsection presents simulated and real-data experiments on classification and estimation tasks, in Bayesian machine learning and signal processing, to show the superiority of the $(\alpha, \beta)$-posterior \eqref{eq:generalized-bayes-rule}, referred to as the UA posterior, over the conventional Bayesian posterior \eqref{eq:bayes-rule} and the $\alpha$-posterior \eqref{eq:alpha-posterior-intro}. 
For technical basics on problem formulation and parameter tuning, see Subsections \ref{subsec:applications} and \ref{subsec:parameter-tuning}, respectively. 
In particular, the UA naive Bayes MAP classifier, the UA Kalman filter, the UA particle filter, and the UA interactive-multiple-model filter are suggested and experimentally validated. Since the $\alpha$-posterior is a special case of the UA posterior under the parameter pair $(\alpha, 1)$, when empirically tuning $(\alpha, \beta)$ on $[0, \tau] \times [0, \tau]$ for $\tau > 1$, the performance of the UA posterior is guaranteed to be no worse than that of the $\alpha$-posterior. Therefore, in these experiments, we do not explicitly take into consideration the $\alpha$-posterior for performance comparison, unless it is necessary. For detailed experimental setups, results, discussions, and insights, see Appendix \ref{sec:append-applications} in supplementary materials, due to page limit.

\section{Conclusions}\label{sec:conclusion}
To combat model misspecifications in prior distributions and/or data distributions, we propose to generalize the conventional Bayes' rule to the uncertainty-aware Bayes' rule. The uncertainty-aware Bayes' rule balances the relative importance of the prior distribution and the likelihood distribution by simply taking the exponentiation of the prior distribution and the likelihood distribution. We show that this exponentiation operator essentially adjusts the entropy (i.e., the concentration, the spread) of the prior distribution and the likelihood distribution. Therefore, with different exponents, the prior distribution and/or the likelihood distribution can be upweighted (i.e., by reducing the entropy) or downweighted (i.e., by inflating the entropy) in computing the posterior. Compared to the existing maximum entropy scheme and the Rule-of-Three framework, the uncertainty-aware Bayes' rule does not introduce much additional computational burden because the exponentiation operation is computationally lightweight. In addition, compared to the existing maximum entropy scheme, the uncertainty-aware Bayes' rule is able to combat the conservativeness (i.e., upweight the useful distributional information) of the employed prior distributions and the likelihood distributions. Moreover, compared to the existing $\alpha$-posterior scheme, the overconcentration issue can be avoided (see Remarks \ref{rem:deficiency-alpha-overconcentration} and \ref{rem:deficiency-alpha-posterior}). Simulated and real-world applications further demonstrate that both aggressiveness and conservativeness can be beneficial in practice, indicating that, when tuning $\alpha$ and $\beta$, the focus should \bfit{not} be restricted to $[0,1]$. However, the optimal parameters $(\alpha, \beta)$ cannot be theoretically specified because they depend on the true prior distribution and data-generating distribution, which are unknown in practice. Therefore, given an application-specific performance measure (see Philosophy \ref{phi:robustness}), the grid search (as the SafeBayes approach does for the $\alpha$-posterior) and surrogate-optimization-based methods are largely suggested to find the best values empirically.

\begin{supplement}
\stitle{Supplementary Materials.pdf}
\sdescription{All appendices are placed in the supplementary materials, including 1) the technical proofs of theorems and lemmas, 2) additional clarifications, illustrative examples, and visual arts for better readability, and 3) extensive experiments.}
\end{supplement}

\bibliographystyle{ba}
\bibliography{References}

\newpage

\begin{figure*}
    \textbf{\Large Supplementary Materials} \\~\\
    \textbf{Uncertainty-Aware Bayes’ Rule and Its Applications} \\
    \textit{By Shixiong Wang}
\end{figure*}

\setcounter{section}{0}
\setcounter{page}{1}

\appendix

\section{Additional Clarifications}
\subsection{Interpretations of Bayes' Rule \captext{\eqref{eq:bayes-rule}} and the \captext{$(\alpha, \beta)$}-Posterior \captext{\eqref{eq:generalized-bayes-rule}}}\label{append:interpretation-bayes-rule}
\textbf{Deterministic and Stochastic Interpretations of Bayes' Rule \eqref{eq:bayes-rule}}. 
In the main body of the paper, we introduced two different interpretations of Bayes' rule \eqref{eq:bayes-rule}. The main difference lies in whether there \bfit{physically} exists a joint data-generating distribution $p(\vec y, \vec \theta)$. Below, we provide specific engineering examples for the two interpretations.

\begin{example}[Deterministic Interpretation in Wireless Sensing]\label{eaxm:deterministic-sensing}
    Suppose we aim to estimate the position of a static target using a radar (Radar A). In this case, the actual position $\vec \theta_0$ of the target is deterministic, fixed, but unknown. The radar's measurement model is $p_{\vec \theta_0}(\vec y)$, where $\vec y$ is the received radio signal that is reflected off the target. In addition to relying on Radar A, we may also consult with an experienced expert, whose prior probabilistic belief of the target's position is $p(\vec \theta)$; this expert's prior information $p(\vec \theta)$ may be from another independent radar (Radar B). Hence, $\vec \theta_0$ is not physically sampled from $p(\vec \theta)$, and no joint data-generating distribution $p(\vec y, \vec \theta)$ can be justified. When the expert is unreliable (e.g., Radar B may be dysfunctional), the prior belief $p(\vec \theta)$ might be uncertain (e.g., the center is far away from $\vec \theta_0$ with small variance). Similarly, when Radar A is dysfunctional, the data distribution $p(\vec y | \vec \theta)$ might be uncertain.
    \stp
\end{example}

\begin{example}[Stochastic Interpretation in Wireless Communications]
    Suppose we aim to estimate the transmitted signal $\vec \theta_0$ of a wireless communications system (e.g., a mobile phone), whose probabilistic channel model (i.e., the signal transmission law) is $p_{\vec \theta_0}(\vec y)$, where $\vec y$ is the received signal. From information transmission theory, we know that the transmitted signal $\vec \theta_0$ is a realization of a random signal source $p(\vec \theta)$ \citep[Chapters~7 and 9]{cover2006elements}. Hence, in this case, there indeed physically exists a joint data-generating distribution $p(\vec y, \vec \theta)$, and the data-generating process can be mathematically stated as follows: $\vec \theta_0 \sim p(\vec \theta)$ and $\vec y \sim p_{\vec \theta_0}(\vec y)$. Because the assumed source law $p(\vec \theta)$ can differ from the actual source law $p_0(\vec \theta)$, the uncertainty in $p(\vec \theta)$ should be considered. Likewise, uncertainties may exist in the channel model $p(\vec y | \vec \theta)$ due to humans' inexact knowledge of the propagation law of radio signals.
    \stp
\end{example}

\textbf{Uncertainty Awareness in the $(\alpha, \beta)$-Posterior \eqref{eq:generalized-bayes-rule}}. As indicated, both the prior distribution $p(\vec \theta)$ and the data-generating distribution $p(\vec y | \vec \theta)$ can be \bfit{independently} uncertain, compared to their ground truths $p_0(\vec \theta)$ and $p_0(\vec y | \vec \theta)$, respectively. Note that, in the deterministic interpretation, the true prior distribution $p_0(\vec \theta)$ should be understood as the delta distribution $\delta_{\vec \theta_0}(\vec \theta)$ concentrated at the true value $\vec \theta_0$. Therefore, the uncertainties in the two distributions should be independently treated in calculating the posterior distribution, which raises the conceptual insufficiency of the $\alpha$-posterior. This reasoning gives a direct motivation for why we should independently modify the two distributions in the $(\alpha, \beta)$-posterior \eqref{eq:generalized-bayes-rule}; see \eqref{eq:opt-gen-posterior-dist}, Theorem \ref{thm:gen-posterior-dist}, and Definition \ref{def:alpha-beta-posterior}.

\subsection{Fusion of Multiple Priors}\label{append:multi-priors}
In this appendix, we present an engineering example where multiple priors may exist.
\begin{example}[Continued from Example \ref{eaxm:deterministic-sensing}]
Suppose we have multiple experts with whom we consult. In addition to the measurement model $p_{\vec \theta_0}(\vec y)$, we can have $m$ priors $p_1(\vec \theta)$, $p_2(\vec \theta)$, $\ldots$, $p_m(\vec \theta)$. In this case, the fusion rule is given in \eqref{eq:opt-gen-posterior-dist-multi-sample-multi-prior} and \eqref{eq:gen-bayes-rule-multi-prior}.
\stp
\end{example}

Using federated-learning language, the above example can be stated as follows. Let $m$ statisticians collaboratively estimate the same unknown quantity, and the $i^\th$ statistician independently observes the data set $\cal Y_i$, where $i = 1, 2, \ldots, m$. However, due to data privacy considerations, statisticians do not share their local raw data with others. Hence, taking the $m^\th$ statistician as an example, other $m-1$ statisticians can send their local posterior distributions $\{p(\vec \theta | \cal Y_i)\}_{i = 1, 2, \ldots, m-1}$ for information fusion: Specifically, for the $m^\th$ statistician, $\{p_i(\vec \theta) \defeq p(\vec \theta | \cal Y_i)\}_{i = 1, 2, \ldots, m-1}$ act as $m-1$ priors, while $p(\cal Y_m | \vec \theta)$ serves as the likelihood model. In such a situation, if the $m^\th$ statistician does not entirely trust the $m-1$ priors, the $\beta$-parameterization is needed, as in \eqref{eq:alpha-beta-posterior-multi-prior}.

\section{Proof of Lemma \captext{\ref{lem:posterior-dist}}}\label{append:posterior-dist}
\begin{proof}
According to the optimization-centric interpretation of the Bayesian posterior distribution $p(\vec \theta | \vec y)$ in \eqref{eq:bayes-rule}, $p(\vec \theta | \vec y)$ solves
\[
\min_{q(\bm \theta)} \KL{q(\bm \theta)}{p(\bm \theta)} + \E_{\vec \theta \sim q(\vec \theta)} [-\ln{p(\bm y |\bm \theta)}].
\]
Let $C_{\vec y} \defeq \int p(\bm y| \vec \theta) \d \vec \theta$. The above optimization problem is equivalent, in the sense of the same minimizer, to
\[
\begin{array}{l}
\displaystyle \min_{q(\bm \theta)} \KL{q(\bm \theta)}{p(\bm \theta)} + \E_{\vec \theta \sim q(\vec \theta)}\left[-\ln\frac{p(\bm y |\bm \theta)}{C_{\vec y}}\right], \\
\quad = \displaystyle \min_{q(\bm \theta)} \KL{q(\bm \theta)}{p(\bm \theta)} + \E_{\vec \theta \sim q(\vec \theta)}[-\ln l(\bm \theta)], \\
\quad = \displaystyle \min_{q(\bm \theta)} \KL{q(\bm \theta)}{p(\bm \theta)} + \KL{q(\vec \theta)}{l(\vec \theta)} + \Ent q(\vec \theta).
\end{array}
\]
This completes the proof.
\end{proof}

\section{Proof of Lemma \captext{\ref{lemma:problem-finiteness}}}\label{append:problem-finiteness}
\begin{proof}
Problem \eqref{eq:opt-gen-posterior-dist} can be rewritten as
\[
\min_{q(\vec \theta)} \quad (\alpha_3 - \alpha_1 - \alpha_2) \Ent q(\vec \theta)
- \alpha_1 \E_{\vec \theta \sim q(\vec \theta)}[\ln p(\bm \theta)]
- \alpha_2 \E_{\vec \theta \sim q(\vec \theta)}[\ln l(\bm \theta)].
\]
If $\alpha_3 > \alpha_1 + \alpha_2$, any delta distributions on $\Theta$ can minimize the above objective to negative infinity because the differential entropy of a delta distribution is negative infinity. This completes the proof.
\end{proof}

\section{Proof of Theorem \captext{\ref{thm:gen-posterior-dist}}}\label{append:gen-posterior-dist}
\begin{proof}
We have
\[
\begin{array}{l}
\displaystyle \min_{q(\vec \theta)} \alpha_1 \KL{q(\vec \theta)}{p(\vec \theta)} + \alpha_2 \KL{q(\vec \theta)}{l(\vec \theta)} + \alpha_3 \Ent q(\vec \theta) \\ 
\quad  = \displaystyle \min_{q(\vec \theta)} \int q(\vec \theta) \ln\frac{q^{\alpha_1 + \alpha_2}(\vec \theta)}{p^{\alpha_1}(\vec \theta)l^{\alpha_2}(\vec \theta)q^{\alpha_3}(\vec \theta)} \d \vec \theta \\
\quad = \displaystyle \min_{q(\vec \theta)} \int q(\vec \theta) \ln\frac{q^{\alpha_1 + \alpha_2-\alpha_3}(\vec \theta)}{p^{\alpha_1}(\vec \theta)l^{\alpha_2}(\vec \theta)} \d \vec \theta.
\end{array}
\]

When $\alpha_3 = \alpha_1 + \alpha_2$, any distribution supported on $\Theta^*$ solves the above optimization problem, where $\Theta^*$ contains all maximizers of $\ln[p^{\alpha_1}(\vec \theta)l^{\alpha_2}(\vec \theta)]$.

When $\alpha_3 < \alpha_1 + \alpha_2$, the above optimization problem is equivalent to
\[
(\alpha_1 + \alpha_2 - \alpha_3) \cdot \displaystyle \min_{q(\vec \theta)} \int q(\vec \theta) \ln\frac{q(\vec \theta)}{p^{\frac{\alpha_1}{\alpha_1 + \alpha_2 - \alpha_3}}(\vec \theta) \cdot l^{\frac{\alpha_2}{\alpha_1 + \alpha_2 - \alpha_3}}(\vec \theta)} \d \vec \theta,
\]
which is further equivalent, in the sense of the same minimizer, to
\[
\displaystyle \min_{q(\vec \theta)} \KL{q(\vec \theta)}{\frac{p^{\frac{\alpha_1}{\alpha_1 + \alpha_2 - \alpha_3}}(\vec \theta) \cdot l^{\frac{\alpha_2}{\alpha_1 + \alpha_2 - \alpha_3}}(\vec \theta)}{C}},
\]
where $C$ is the normalizer. This completes the proof.
\end{proof}

\section{Examples of Uncertainty-Aware Bayes' Rule}\label{append:examples-bayes-rule}
Below we give several motivational examples of the uncertainty-aware Bayes' rule \eqref{eq:gen-bayes-rule} or \eqref{eq:alpha-beta-posterior}; the first two are well-established in existing literature of Bayesian statistics.

\begin{example}[Conventional Bayesian Posterior]\label{ex:conventional-Bayes-posterior}
The conventional Bayes' rule \eqref{eq:bayes-rule} [resp. \eqref{eq:bayes-rule-2}] is a special case of the uncertainty-aware Bayes' rule \eqref{eq:gen-bayes-rule-2} [resp. \eqref{eq:gen-bayes-rule} or \eqref{eq:alpha-beta-posterior}] when $\alpha_1 = \alpha_2 = \alpha_3$ or $\alpha = \beta = 1$.
\stp
\end{example}

\begin{example}[$\alpha$-Posterior]\label{ex:alpha-posterior}
When $\alpha_2 = \alpha_3$ and $\alpha_1 > 0$, \eqref{eq:gen-bayes-rule} reduces to
\begin{equation}\label{eq:alpha-posterior-origin}
    p_g(\vec \theta | \vec y) \propto p(\vec \theta) \cdot l^{\frac{\alpha_2}{\alpha_1}}(\vec \theta).
\end{equation}
By letting $\alpha \defeq \frac{\alpha_2}{\alpha_1}$, we obtain the $\alpha$-posterior
\begin{equation}\label{eq:alpha-posterior}
    p_g(\vec \theta | \vec y) \propto p(\vec \theta) \cdot l^{\alpha}(\vec \theta),
\end{equation}
where $0 \le \alpha < \infty$. When $\alpha_1 = 0$, $p_g(\vec \theta | \vec y)$ is an arbitrary distribution supported on $\Theta^*$ where $\Theta^* \defeq \argmax_{\vec \theta} \ln l(\vec \theta)$ contains all maximum likelihood estimates.
\stp
\end{example}

The $\alpha$-posterior in \eqref{eq:alpha-posterior} is a well-established proposal in Bayesian statistics; see, e.g., \citet[Sec.~6.8.5 and Sec.~8.6]{ghosal2017fundamentals}; \citet{medina2022robustness,alquier2020concentration}. Given $0 < \alpha < 1$ and several other technical regularity conditions (e.g., $\Theta^*$ is a singleton), the $\alpha$-posteriors can be shown to have posterior consistency [see \citet[Thm.~6.54,~Ex.~8.44]{ghosal2017fundamentals} and \citet{alquier2020concentration}], asymptotic normality [see \citet[Thm.~1]{medina2022robustness}], and robustness against likelihood-model misspecifications [see \citet[Sec.~4]{medina2022robustness}]. However, these conditions might be practically restrictive; see Example \ref{ex:alpha-post-deficiency} where $\alpha > 1$ is preferred; see also \citet[Tables~1-4]{wu2023comparison} where $\alpha > 1$ is suggested by some parameter-tuning methods for a given performance measure.

\begin{example}[$\beta$-Posterior]\label{ex:beta-posterior}
When $\alpha_1 = \alpha_3$ and $\alpha_2 > 0$, \eqref{eq:gen-bayes-rule} reduces to
\begin{equation}\label{eq:beta-posterior-origin}
    p_g(\vec \theta | \vec y) \propto p^{\frac{\alpha_1}{\alpha_2}}(\vec \theta) \cdot l(\vec \theta).
\end{equation}
By letting $\beta \defeq \frac{\alpha_1}{\alpha_2}$, we obtain the $\beta$-posterior\footnote{To differentiate with the $\alpha$-posterior, we name it using $\beta$.}
\begin{equation}\label{eq:beta-posterior}
    p_g(\vec \theta | \vec y) \propto p^{\beta}(\vec \theta) \cdot l(\vec \theta),
\end{equation}
where $0 \le \beta < \infty$. When $\alpha_2 = 0$, $p_g(\vec \theta | \vec y)$ is an arbitrary distribution supported on $\Theta^*$ where $\Theta^* \defeq \argmax_{\vec \theta} \ln p(\vec \theta)$ contains all maximum prior estimates.
\stp
\end{example}

Compared with the $\alpha$-posterior in Example \ref{ex:alpha-posterior} that modifies the likelihood distribution $l(\vec \theta)$, the $\beta$-posterior modifies the prior distribution $p(\vec \theta)$.

\begin{example}[$\gamma$-Posterior]\label{ex:gamma-posterior}
When $\alpha_1 = \alpha_2$ and $2 \alpha_1 > \alpha_3$, \eqref{eq:gen-bayes-rule} reduces to
\begin{equation}\label{eq:gamma-posterior-origin}
    p_g(\vec \theta | \vec y) \propto p^{\frac{\alpha_1}{2\alpha_1 - \alpha_3}}(\vec \theta) \cdot l^{\frac{\alpha_1}{2\alpha_1 - \alpha_3}}(\vec \theta),
\end{equation}
By letting $\gamma \defeq \frac{\alpha_1}{2\alpha_1 - \alpha_3}$, we obtain the $\gamma$-posterior\footnote{To differentiate with the $\alpha$- and $\beta$-posterior, we name it using $\gamma$.}
\begin{equation}\label{eq:gamma-posterior}
    p_g(\vec \theta | \vec y) \propto p^{\gamma}(\vec \theta) \cdot l^{\gamma}(\vec \theta),
\end{equation}
where $0 \le \gamma < \infty$. When $2 \alpha_1 = \alpha_3$, $p_g(\vec \theta | \vec y)$ is an arbitrary distribution supported on $\Theta^*$ where $\Theta^* \defeq \argmax_{\vec \theta} \ln p(\vec \theta) + \ln l(\vec \theta)$ contains all usual maximum \textit{a-posteriori} estimates.
\stp
\end{example}

Compared with the $\alpha$-posterior and the $\beta$-posterior, the $\gamma$-posterior modifies both the prior distribution and the likelihood distribution. The $\beta$- and $\gamma$-posteriors are new proposals of this paper and have not been studied in the literature.

\begin{example}[$\alpha$-Likelihood]\label{ex:alpha-likelihood}
When $\alpha_1 = 0$ and $\alpha_2 > \alpha_3$, \eqref{eq:gen-bayes-rule} reduces to
\begin{equation}\label{eq:alpha-likelihood-origin}
    p_g(\vec \theta | \vec y) \propto l^{\frac{\alpha_2}{\alpha_2 - \alpha_3}}(\vec \theta).
\end{equation}
By letting $\alpha \defeq \frac{\alpha_2}{\alpha_2 - \alpha_3}$, we obtain the $\alpha$-likelihood
\begin{equation}\label{eq:alpha-likelihood}
    p_g(\vec \theta | \vec y) \propto l^{\alpha}(\vec \theta),
\end{equation}
where $0 \le \alpha < \infty$. When $\alpha_2 = \alpha_3$, $p_g(\vec \theta | \vec y)$ is an arbitrary distribution supported on $\Theta^*$ where $\Theta^* \defeq \argmax_{\vec \theta} \ln l(\vec \theta)$ contains all maximum likelihood estimates.
\stp
\end{example}

In the uncertainty-aware Bayes' posterior \eqref{eq:alpha-likelihood} induced by the $\alpha$-likelihood $l^{\alpha}(\vec \theta)$, the prior distribution $p(\vec \theta)$ is completely ignored. To clarify further, we only trust the likelihood distribution with the confidence level $\alpha$ and absolutely distrust the prior distribution.

\begin{example}[$\alpha$-Prior]\label{ex:alpha-prior}
When $\alpha_2 = 0$ and $\alpha_1 > \alpha_3$, \eqref{eq:gen-bayes-rule} reduces to
\begin{equation}\label{eq:alpha-prior-origin}
    p_g(\vec \theta | \vec y) \propto p^{\frac{\alpha_1}{\alpha_1 - \alpha_3}}(\vec \theta).
\end{equation}
By letting $\alpha \defeq \frac{\alpha_1}{\alpha_1 - \alpha_3}$, we obtain the $\alpha$-prior
\begin{equation}\label{eq:alpha-prior}
    p_g(\vec \theta | \vec y) \propto p^{\alpha}(\vec \theta),
\end{equation}
where $0 \le \alpha < \infty$. When $\alpha_1 = \alpha_3$, $p_g(\vec \theta | \vec y)$ is an arbitrary distribution supported on $\Theta^*$ where $\Theta^* \in \argmax_{\vec \theta} \ln p(\vec \theta)$ contains all maximum prior estimates.
\stp
\end{example}

In the uncertainty-aware Bayes' posterior \eqref{eq:alpha-prior} induced by the $\alpha$-prior $p^{\alpha}(\vec \theta)$, the likelihood distribution $l(\vec \theta)$ is completely ignored. To clarify further, we only trust the prior distribution with the confidence level $\alpha$ and absolutely distrust the likelihood distribution.

\begin{example}[$\alpha$-Pooled Posterior]\label{ex:alpha-pooled-posterior}
When $\alpha_3 = 0$, \eqref{eq:gen-bayes-rule} reduces to
\begin{equation}\label{eq:alpha-pooled-posterior-origin}
    p_g(\vec \theta | \vec y) \propto p^{\frac{\alpha_1}{\alpha_1 + \alpha_2}}(\vec \theta) \cdot l^{\frac{\alpha_2}{\alpha_1 + \alpha_2}}(\vec \theta).
\end{equation}
By letting $\alpha \defeq \alpha_1/(\alpha_1 + \alpha_2)$, we obtain the $\alpha$-pooled-posterior
\begin{equation}\label{eq:alpha-pooled-posterior}
    p_g(\vec \theta | \vec y) \propto p^{\alpha}(\vec \theta) \cdot l^{1 - \alpha}(\vec \theta),
\end{equation}
where $0 \le \alpha \le 1$.
\stp
\end{example}

An engineering application of \eqref{eq:alpha-pooled-posterior} is reported in social learning \citep[Eq.~(7)]{bordignon2021adaptive}; \citep[Eq.~(10)]{hu2023optimal}. The $\alpha$-pooled-posterior in \eqref{eq:alpha-pooled-posterior} is a $\alpha$-log-linear pooling \citep{rufo2012loglinear,koliander2022fusion} of the prior distribution $p(\vec \theta)$ and the likelihood distribution $l(\vec \theta)$, which is an alternative of the linear pooling rule ${\alpha} p(\vec \theta) + {(1 - \alpha)} l(\vec \theta)$. The log-linear pooling rule is standard in Bayesian statistics for fusing multiple priors (from multiple experts) to obtain an integrated prior \citep{rufo2012loglinear}. However, the $\alpha$-pooled-posterior in \eqref{eq:alpha-pooled-posterior} claims that the likelihood distribution can also be used as a \quotemark{prior} where the collected data $\vec y$ serve as an expert. (But this data-driven expert does not insist on his/her opinion due to randomness of $\vec y$; he/she is a flexible expert; instead, conventional non-data-driven experts are stubborn.) In addition, from \eqref{eq:opt-gen-posterior-dist}, we can see that when $\alpha_3 = 0$, the entropy regularizer is removed. Therefore, it is natural to imagine that the $\alpha$-pooled posterior in \eqref{eq:alpha-pooled-posterior} has larger entropy than the conventional Bayes' posterior $p(\vec \theta | \vec y)$.

From Examples \ref{ex:conventional-Bayes-posterior}-\ref{ex:alpha-pooled-posterior}, we can see that the $(\alpha,\beta)$-posterior, compared to the conventional Bayes' posterior (when $\alpha=\beta=1$), defines general fusing rules of the prior distribution $p(\vec \theta)$ and the likelihood distribution $l(\vec \theta)$; we refer to the law of defining the $(\alpha,\beta)$-posterior as the \textit{uncertainty-aware Bayes' rule}.

\section{Proof of Theorem \captext{\ref{thm:alpha-scale-ent-monotonicity}}}\label{append:alpha-scale-ent-monotonicity}
\begin{proof}
Let $C_\alpha \defeq \int h^{\alpha}(\vec \theta) \d \vec \theta$.
We have
\begin{equation*}
    \begin{array}{cl}
        \Ent h^{(\alpha)}(\vec \theta) &= \displaystyle \int \frac{h^{\alpha}(\vec \theta)}{\int h^{\alpha}(\vec \theta) \d \vec \theta} \cdot -\ln \frac{h^{\alpha}(\vec \theta)}{\int h^{\alpha}(\vec \theta) \d \vec \theta} \d \vec \theta \\

        &= \displaystyle \frac{1}{C_\alpha} \int h^{\alpha}(\vec \theta) \cdot \Big[-\ln h^{\alpha}(\vec \theta) + \ln C_\alpha \Big] \d \vec \theta \\

        &= \ln C_\alpha + \displaystyle \frac{\alpha}{C_\alpha} \int h^{\alpha}(\vec \theta) \cdot -\ln h(\vec \theta) \d \vec \theta.
    \end{array}
\end{equation*}
Therefore,
\begin{equation*}
    \begin{array}{l}
        \displaystyle \frac{\d \Ent h^{(\alpha)}(\vec \theta)}{\d \alpha} = \frac{\alpha}{C^2_{\alpha}} \cdot \Bigg\{ \left[ \displaystyle \int h^\alpha(\vec \theta) \ln h(\vec \theta) \d \vec \theta \right]^2 - \\
        \quad \quad \quad \quad \quad \quad \displaystyle \int h^\alpha (\vec \theta) \d \vec \theta \cdot \int h^\alpha (\vec \theta) \ln h(\vec \theta) \ln h(\vec \theta) \d \vec \theta
        \Bigg\}.
    \end{array}
\end{equation*}
Since $h^\alpha (\vec \theta) \ge 0$ for every $\alpha$ and $\vec \theta$, according to the Cauchy--Schwarz inequality, we have
\begin{equation*}
    \begin{array}{l}
        \displaystyle \int \left[\sqrt{h^\alpha (\vec \theta)}\right]^2 \d \vec \theta \int \left[\sqrt{ h^\alpha (\vec \theta) \cdot \ln h(\vec \theta) \cdot \ln h(\vec \theta)}\right]^2 \d \vec \theta \\
        \quad \quad \ge \left[ \displaystyle \int \sqrt{ h^\alpha (\vec \theta) \cdot h^\alpha (\vec \theta) \cdot \ln h(\vec \theta) \cdot \ln h(\vec \theta)} \d \vec \theta \right]^2 \\
        \quad \quad = \left[ \displaystyle \int h^\alpha (\vec \theta) \cdot \big| \ln h(\vec \theta) \big| \d \vec \theta \right]^2 \\
        \quad \quad \ge \left[ \displaystyle \int h^\alpha  (\vec \theta) \cdot \ln h(\vec \theta) \d \vec \theta \right]^2.
    \end{array}
\end{equation*}
As a result,
\[
\displaystyle \frac{\d \Ent h^{(\alpha)}(\vec \theta)}{\d \alpha} \le 0.
\]
If $\alpha \neq 0$, the equality holds if and only if $h(\vec \theta)$ is a uniform distribution. This completes the proof.
\end{proof}

\section{Illustrative Examples}\label{append:illustration-examples}

\subsection{Illustration of \captext{$\alpha$}-Scaled Distribution}\label{append:illlu-alpha-scaling}
We visualize the $\alpha$-scaled distribution $h^{(\alpha)}(\vec \theta)$ when $h(\vec \theta)$ is discrete; see Figure \ref{fig:alpha_scaling}. As we can see, when $0 \le \alpha < 1$, the probabilities become more balanced, while when $\alpha > 1$, they become more unbalanced.
\begin{figure}[!htbp]
    \centering
    \includegraphics[height=3.5cm]{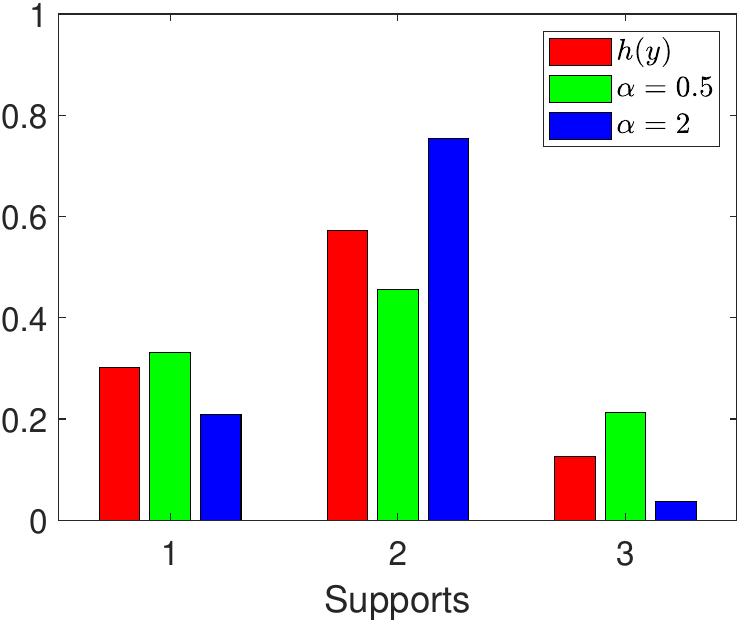}
    \caption{A $3$-atom discrete distribution $h(\theta)$ and induced $h^{(\alpha)}(\theta)$ with different $\alpha \in \{0.5, 2\}$. Under $\alpha = 0.5$, $\Ent h^{(\alpha)}(\theta) > \Ent h(\theta)$ (i.e., the former has more balanced masses, while the latter has more unbalanced masses). Under $\alpha = 2$, $\Ent h^{(\alpha)}(\theta) < \Ent h(\theta)$.}
    \label{fig:alpha_scaling}
\end{figure}

\subsection{Illustration of Entropy Difference \captext{$E(\alpha)$}}\label{append:illu-ent-diff}

A visual illustration of the entropy difference $E(\alpha)$ when $h(\vec \theta)$ is discrete is given in Figure \ref{fig:fig_ent_diff}.
\begin{figure}[!htbp]
    \centering
    \includegraphics[height=3.5cm]{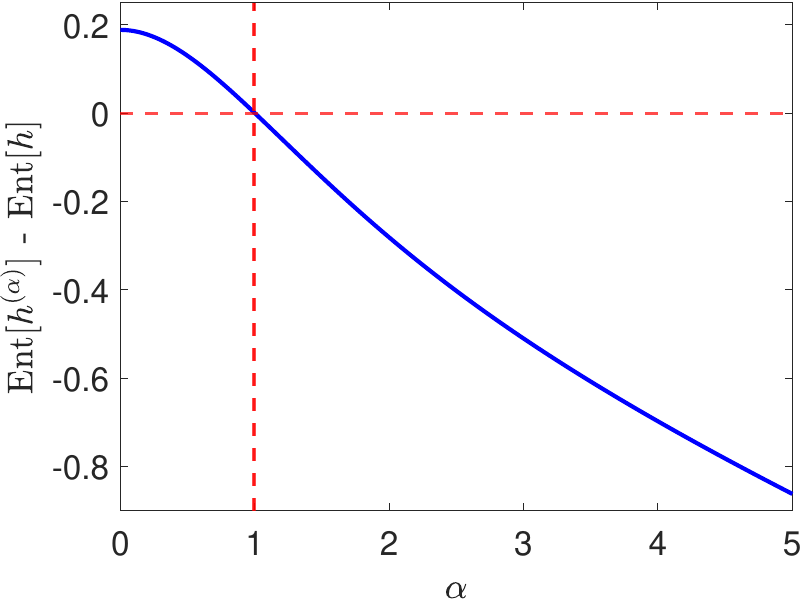}
    \caption{The entropy difference $E(\alpha) \defeq \Ent h^{(\alpha)}(\vec \theta) - \Ent h(\vec \theta)$ against $\alpha$; $h(\vec \theta)$ is a randomly generated $50$-atom discrete distribution.}
    \label{fig:fig_ent_diff}
\end{figure}

\subsection{Illustration of Closeness \captext{$\KL{h(\vec \theta)}{h^{(\alpha)}(\vec \theta)}$}}\label{append:illu-closeness}
A visual illustration of Theorem \ref{thm:closeness-from-alpha-scaled} when $h(\vec \theta)$ is discrete is given in Figure \ref{fig:fig_KL_closeness}.
\begin{figure}[!htbp]
    \centering
    \includegraphics[height=3.5cm]{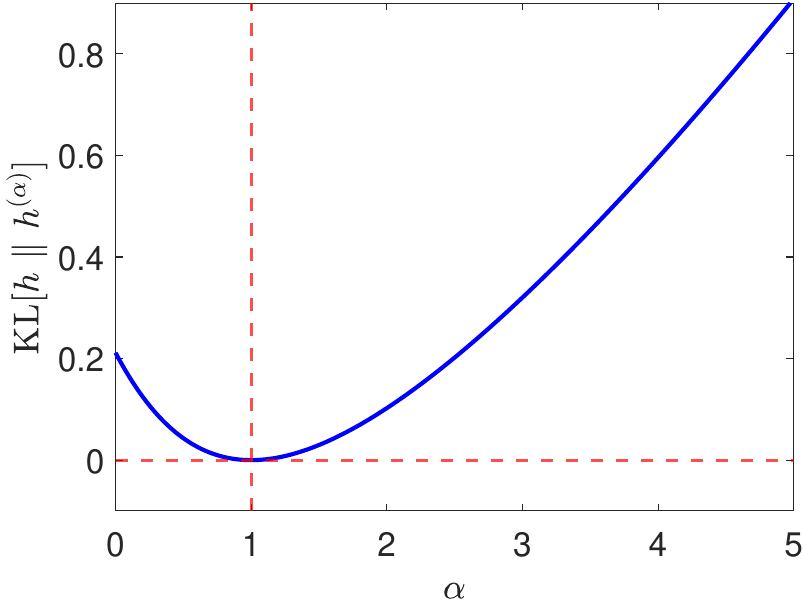}
    \caption{A visual illustration of the closeness $\KL{h(\vec \theta)}{h^{(\alpha)}(\vec \theta)}$ to $h(\vec \theta)$ from $h^{(\alpha)}(\vec \theta)$; $h(\vec \theta)$ is a randomly generated  $50$-atom discrete distribution.}
    \label{fig:fig_KL_closeness}
\end{figure}

\subsection{Illustration of \captext{$(\alpha, \beta)$}-Posterior}\label{ex:illu-alpha-beta-posterior}
We visualize the $(\alpha,\beta)$-posterior, compared with the conventional Bayes' posterior. We focus on a mean estimation problem where $\theta_0$ is the true mean of the random variable $\rscl y$. We suppose that the prior distribution is $p(\theta) \defeq \cal N(\theta; 0, 1)$. The likelihood function is $\theta \mapsto \cal N(y; \theta, 1)$; if we assume that the measurement is $y \defeq 5$, the likelihood distribution is therefore $l(\theta) \defeq \cal N(\theta; 5, 1)$. The visualizations of the $(\alpha,\beta)$-posterior, under different value pairs of $(\alpha,\beta)$, are given in Figure \ref{fig:alpha-beta-posterior-illustrations}. 

\begin{figure*}[!htbp]
    \centering
    \subfigure[$(0,2.5)$-posterior]{
        \begin{minipage}[htbp]{0.31\linewidth}
            \centering
            \includegraphics[height=3.2cm]{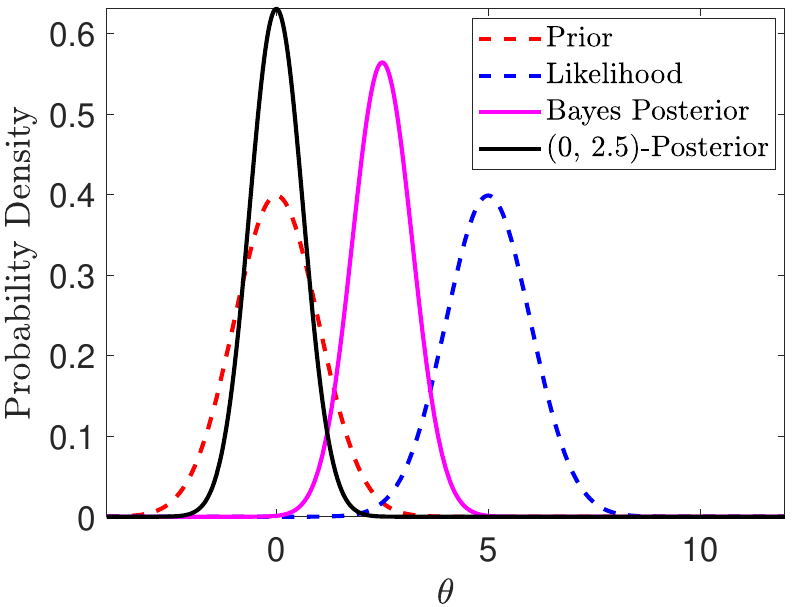}
        \end{minipage}
    }
    \subfigure[$(0.25,0.25)$-posterior]{
        \begin{minipage}[htbp]{0.31\linewidth}
            \centering
            \includegraphics[height=3.2cm]{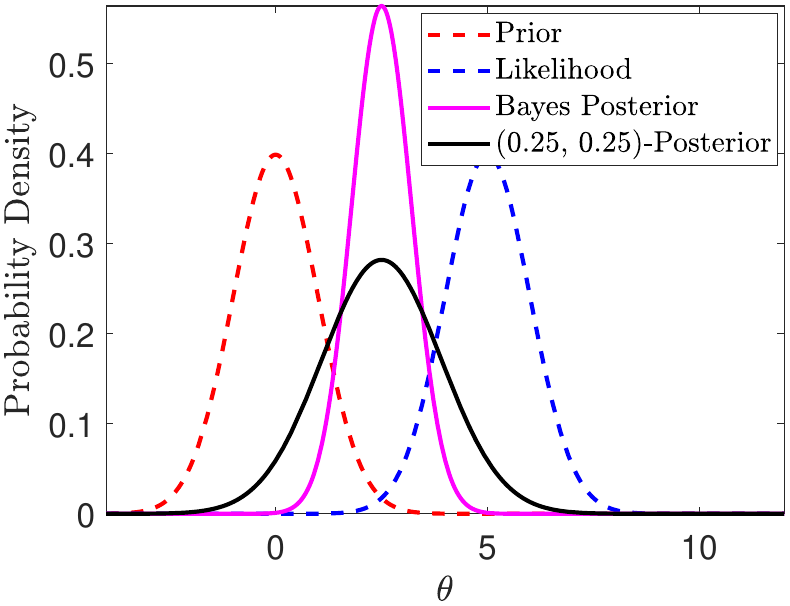}
        \end{minipage}
    }
    \subfigure[$(0.25,0.5)$-posterior]{
        \begin{minipage}[htbp]{0.31\linewidth}
            \centering
            \includegraphics[height=3.2cm]{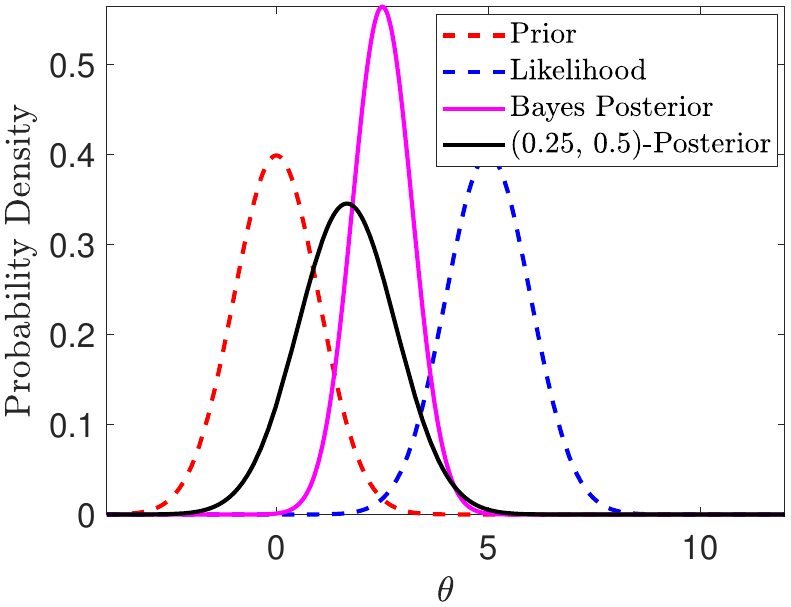}
        \end{minipage}
    }
    
    \subfigure[$(0.5,2)$-posterior]{
        \begin{minipage}[htbp]{0.31\linewidth}
            \centering
            \includegraphics[height=3.2cm]{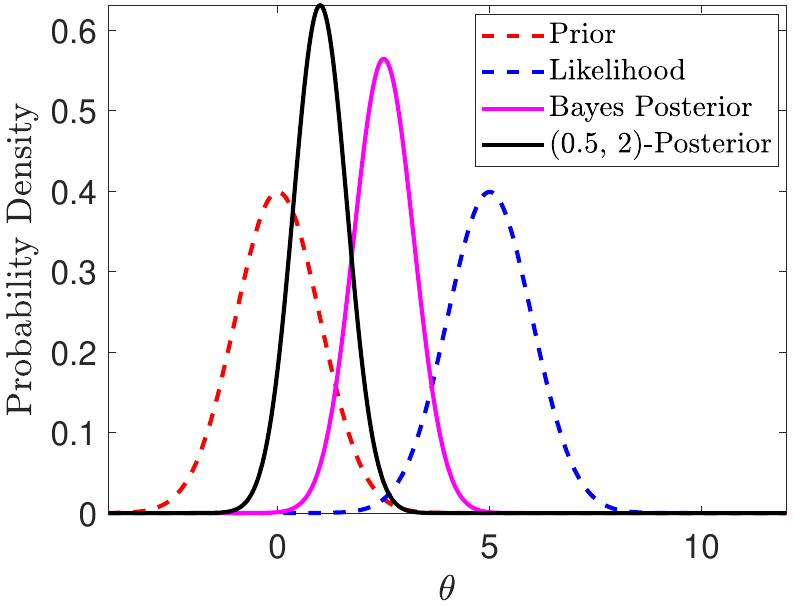}
        \end{minipage}
    }
    \subfigure[$(1.5,2)$-posterior]{
        \begin{minipage}[htbp]{0.31\linewidth}
            \centering
            \includegraphics[height=3.2cm]{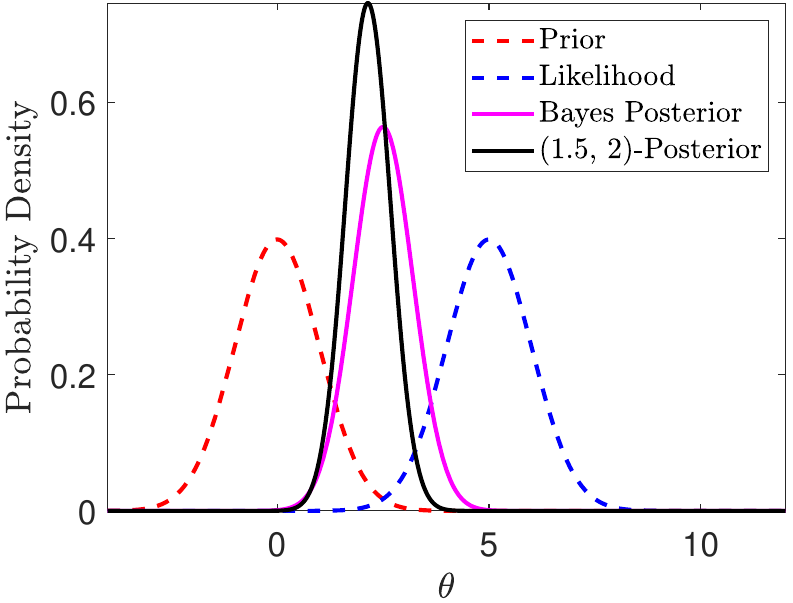}
        \end{minipage}
    }
    \subfigure[$(2,2)$-posterior]{
        \begin{minipage}[htbp]{0.31\linewidth}
            \centering
            \includegraphics[height=3.2cm]{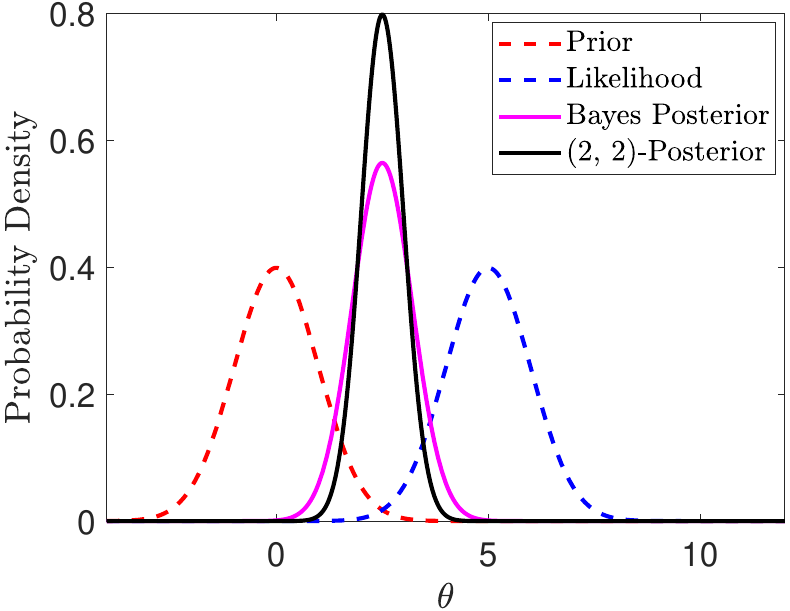}
        \end{minipage}
    }
    \caption{Illustrating examples of the $(\alpha,\beta)$-posterior. The prior distribution is $p(\theta) \defeq \cal N (\theta; 0, 1)$. The likelihood distribution is $l_{\vec y \defeq 5}(\theta) \defeq \cal N (\theta; 5, 1)$.}
    \label{fig:alpha-beta-posterior-illustrations}
\end{figure*}

As motivational examples, we only interpret the first three sub-figures. In Figure \ref{fig:alpha-beta-posterior-illustrations}(a), $\alpha = 0$ means that the likelihood distribution is absolutely unreliable and the data evidence is completely ignored. Therefore, the $(0,2.5)$-posterior distribution tends to overlap with the prior distribution. Moreover, because $\beta = 2.5 > 1$, the specified prior distribution is thought to be overly conservative and we propose to increase its concentration. In Figure \ref{fig:alpha-beta-posterior-illustrations}(b), $\alpha = \beta = 0.25$ means that both the prior distribution and the likelihood distribution are believed to be overly aggressive, and therefore, we propose to reduce their concentration. As a result, the $(0.25,0.25)$-posterior distribution has larger entropy than the conventional Bayes' posterior distribution. In Figure \ref{fig:alpha-beta-posterior-illustrations}(c), since $\alpha = 0.25 < 0.5 = \beta$, both the prior distribution and the likelihood distribution are thought to be aggressive, but the prior distribution is thought to be less aggressive than the likelihood distribution.

\section{Proof of Theorem \captext{\ref{thm:closeness-from-alpha-scaled}}}\label{append:closeness-from-alpha-scaled}
Before proving Theorem \ref{thm:closeness-from-alpha-scaled}, we prepare with the following lemma. 

\begin{lemma}\label{lem:jeffrey-divergence}
If $\alpha > 1$, then
\[
    \displaystyle \int h^{(\alpha)}(\vec \theta) \ln h(\vec \theta) \d \vec \theta - \int h(\vec \theta) \ln h(\vec \theta) \d \vec \theta >0;
\]
if $\alpha < 1$, then
\[
    \displaystyle \int h^{(\alpha)}(\vec \theta) \ln h(\vec \theta) \d \vec \theta - \int h(\vec \theta) \ln h(\vec \theta) \d \vec \theta < 0.
\]
\end{lemma}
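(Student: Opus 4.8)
The plan is to read the two claims as a single monotonicity statement. Define, for $\alpha\in(0,\infty)$,
\[
g(\alpha) \defeq \int h^{(\alpha)}(\vec\theta)\ln h(\vec\theta)\,\d\vec\theta .
\]
Since $h^{(1)}(\vec\theta)\equiv h(\vec\theta)$, we have $g(1)=\int h(\vec\theta)\ln h(\vec\theta)\,\d\vec\theta$, so Lemma~\ref{lem:jeffrey-divergence} asserts exactly that $g(\alpha)>g(1)$ for $\alpha>1$ and $g(\alpha)<g(1)$ for $\alpha<1$; hence it suffices to show that $g$ is strictly increasing on $(0,\infty)$.

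Write $C_\alpha\defeq\int h^\alpha(\vec\theta)\,\d\vec\theta$, which exists by the standing hypothesis of Definition~\ref{def:alpha-scale}. Using $\tfrac{\d}{\d\alpha}h^\alpha = h^\alpha\ln h$ and differentiating under the integral sign, $C_\alpha' = \int h^\alpha\ln h\,\d\vec\theta$ and $C_\alpha'' = \int h^\alpha(\ln h)^2\,\d\vec\theta$, so $g(\alpha) = C_\alpha'/C_\alpha = (\ln C_\alpha)'$ and
\[
g'(\alpha) = \frac{C_\alpha''\,C_\alpha - (C_\alpha')^2}{C_\alpha^2}
= \frac{\big(\int h^\alpha(\ln h)^2\,\d\vec\theta\big)\big(\int h^\alpha\,\d\vec\theta\big) - \big(\int h^\alpha\ln h\,\d\vec\theta\big)^2}{C_\alpha^2}
= \operatorname{Var}_{h^{(\alpha)}}[\ln h] \ge 0 .
\]
The nonnegativity of the numerator is precisely the Cauchy--Schwarz estimate applied to $\sqrt{h^\alpha}$ and $\sqrt{h^\alpha}\,\ln h$, i.e.\ the same inequality already invoked in the proof of Theorem~\ref{thm:alpha-scale-ent-monotonicity}. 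Hence $g$ is nondecreasing.

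For the \emph{strict} inequalities I would inspect the equality case: the numerator vanishes iff $\ln h(\vec\theta)$ is constant on the support of $h$ (i.e.\ $h$ is a uniform distribution), in which case $h^{(\alpha)}\equiv h$ and the two integrals in the statement coincide — the degenerate situation that is implicitly excluded here, consistently with the non-uniformity hypothesis of Theorem~\ref{thm:alpha-scale-ent-monotonicity}. Away from that case, $g'(\alpha)>0$ for every $\alpha>0$, so integrating gives $g(\alpha)-g(1)=\int_1^\alpha g'(s)\,\d s>0$ when $\alpha>1$ and $g(\alpha)-g(1)=-\int_\alpha^1 g'(s)\,\d s<0$ when $\alpha<1$, which is the claim. (A shorter variant avoiding the second derivative: Hölder's inequality shows $\alpha\mapsto\ln C_\alpha$ is convex, strictly so unless $h$ is uniform, whence $g=(\ln C_\alpha)'$ is strictly increasing.)

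The main obstacle is not the inequality but the regularity bookkeeping around it: one must verify finiteness of $C_\alpha$, $\int h^\alpha\ln h\,\d\vec\theta$ and $\int h^\alpha(\ln h)^2\,\d\vec\theta$ and the legitimacy of differentiating under the integral on the relevant range of $\alpha$ (this is where the compactness/boundedness assumptions on $\Theta$ and $h$ underlying Definition~\ref{def:alpha-scale} are used), together with the careful treatment of the Cauchy--Schwarz equality case needed to upgrade $g'\ge 0$ to $g'>0$. Everything else is the routine quotient computation displayed above.
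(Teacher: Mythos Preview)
Your argument is correct but follows a genuinely different route from the paper's. You recast the claim as strict monotonicity of $g(\alpha)=\int h^{(\alpha)}\ln h\,\d\vec\theta$ and verify $g'(\alpha)=\operatorname{Var}_{h^{(\alpha)}}[\ln h]>0$ via Cauchy--Schwarz (equivalently, log-convexity of $C_\alpha$). The paper instead computes the symmetrized Kullback--Leibler (Jeffrey's) divergence $\KL{h^{(\alpha)}}{h}+\KL{h}{h^{(\alpha)}}$ directly and shows, after the obvious cancellations, that it equals $(\alpha-1)\bigl[\int h^{(\alpha)}\ln h\,\d\vec\theta-\int h\ln h\,\d\vec\theta\bigr]$; nonnegativity of the divergence (strict unless $\alpha=1$) then gives the sign immediately. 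The paper's proof is slicker in that it avoids any differentiation under the integral and the attendant regularity bookkeeping you flag at the end; your approach, on the other hand, yields the stronger conclusion that $g$ is strictly increasing on all of $(0,\infty)$, not just the two-sided comparison with $g(1)$, and makes the variance/log-convexity structure transparent. Both arguments require the non-uniformity of $h$ for strictness, which you handle correctly.
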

\begin{proof}[Proof of Lemma \ref{lem:jeffrey-divergence}]
Let $C_{\alpha} \defeq \int h^{\alpha}(\vec \theta) \d \vec \theta$.
Jeffrey's divergence (NB: it is symmetric) between ${h(\vec \theta)}$ and ${h^{(\alpha)}(\vec \theta)}$ can be given as
\begin{equation*}
    \begin{array}{l}
      \KL{h^{(\alpha)}(\vec \theta)}{h(\vec \theta)} + \KL{h(\vec \theta)}{h^{(\alpha)}(\vec \theta)} \\
      \quad = \displaystyle \int \frac{h^\alpha (\vec \theta)}{C_\alpha} \ln \frac{h^\alpha (\vec \theta)}{C_\alpha \cdot h(\vec \theta)} \d \vec \theta + \displaystyle \int h(\vec \theta) \ln \frac{C_\alpha \cdot h(\vec \theta)}{h^{\alpha}(\vec \theta)} \d \vec \theta \\
      \quad = \displaystyle \frac{1}{C_\alpha} \Big[  \int h^\alpha (\vec \theta) \ln h^\alpha (\vec \theta) \d \vec \theta -

      \displaystyle \int h^\alpha (\vec \theta) \ln C_\alpha \d \vec \theta - \\

      \quad \quad \quad \displaystyle \int h^\alpha (\vec \theta) \ln h(\vec \theta) \d \vec \theta 
      
      \Big] + \Big[
      \displaystyle \int h(\vec \theta) \ln C_\alpha \d \vec \theta  + \\
      
      \quad \quad \quad \displaystyle \int h(\vec \theta) \ln h(\vec \theta)\d \vec \theta - \int h(\vec \theta) \ln {h^{\alpha}(\vec \theta)} \d \vec \theta
      \Big] \\

      \quad = \displaystyle \frac{\alpha - 1}{C_\alpha} \int h^\alpha (\vec \theta) \ln h (\vec \theta) \d \vec \theta - \ln C_{\alpha} + \ln C_{\alpha} +\\
      \quad \quad \quad \displaystyle (1-\alpha) \int h(\vec \theta) \ln h(\vec \theta) \d \vec \theta \\

      \quad = (\alpha - 1) \Big[\displaystyle \displaystyle \int h^{(\alpha)}(\vec \theta) \ln h(\vec \theta) \d \vec \theta - \int h(\vec \theta) \ln h(\vec \theta) \d \vec \theta \Big] \\

      \quad \ge 0.
    \end{array}
\end{equation*}
The equality holds if and only if $\alpha = 1$. This completes the proof.
\end{proof}

Now we proceed to the proof of Theorem \ref{thm:closeness-from-alpha-scaled}.
\begin{proof}[Proof of Theorem \ref{thm:closeness-from-alpha-scaled}]
Let $C_{\alpha} \defeq \int h^{\alpha}(\vec \theta) \d \vec \theta$.
We have 
\begin{equation*}
    \begin{array}{l}
      \KL{h(\vec \theta)}{h^{(\alpha)}(\vec \theta)} \\
      \quad \quad = \displaystyle \int h(\vec \theta) \ln \frac{h(\vec \theta)}{h^{(\alpha)}(\vec \theta)} \d \vec \theta \\
      
      \quad \quad = \displaystyle \int h(\vec \theta) \ln \frac{C_{\alpha} \cdot h(\vec \theta)}{h^{\alpha}(\vec \theta)} \d \vec \theta \\

      \quad \quad = \ln C_{\alpha} + (\alpha - 1) \Ent h(\vec \theta).
    \end{array}
\end{equation*}

Therefore,
\begin{equation*}
    \begin{array}{l}
        \displaystyle \frac{\d \KL{h(\vec \theta)}{h^{(\alpha)}(\vec \theta)}}{\d \alpha} \\
        \quad \quad = \displaystyle \int h^{(\alpha)}(\vec \theta) \ln h(\vec \theta) \d \vec \theta + \Ent h(\vec \theta) \\
        \quad \quad = \displaystyle \int h^{(\alpha)}(\vec \theta) \ln h(\vec \theta) \d \vec \theta - \int h(\vec \theta) \ln h(\vec \theta) \d \vec \theta.
    \end{array}
\end{equation*}
Then, by Lemma \ref{lem:jeffrey-divergence}, the monotonicity is immediate.

In addition,
\begin{equation*}
    \begin{array}{l}
        \displaystyle \frac{\d^2 \KL{h(\vec \theta)}{h^{(\alpha)}(\vec \theta)}}{\d^2 \alpha} \\
        
        \quad = \displaystyle \frac{1}{C^2_\alpha} \Bigg\{\displaystyle \int h^\alpha (\vec \theta) \d \vec \theta \cdot \int h^\alpha (\vec \theta) \ln h(\vec \theta) \ln h(\vec \theta) \d \vec \theta - \\
         
        \quad \quad \left[ \displaystyle \int h^\alpha(\vec \theta) \ln h(\vec \theta) \d \vec \theta \right]^2
        \Bigg\} \\

        \quad \ge 0.
    \end{array}
\end{equation*}
The last inequality is due to the Cauchy--Schwarz inequality; the equality holds if and only if $h(\vec \theta)$ is a uniform distribution; see Appendix \ref{append:alpha-scale-ent-monotonicity}.
This completes the proof.
\end{proof}

\section{Proof of Theorem \ref{thm:benefits-alpha-distributions}}\label{append:benefits-alpha-distributions}
\begin{proof}
Let $C_{\alpha} \defeq \int h^{\alpha}(\vec \theta) \d \vec \theta$. We have
    \begin{equation*}
        \begin{array}{cl}
            f(\alpha) &\defeq \KL{h_0(\vec \theta)}{h(\vec \theta)}  - \KL{h_0(\vec \theta)}{h^{(\alpha)}(\vec \theta)}\\
            &= \displaystyle \int h_0(\vec \theta) \ln \frac{h_0(\vec \theta)}{h(\vec \theta)} \d \vec \theta - \int h_0(\vec \theta) \ln \frac{h_0(\vec \theta) C_{\alpha}}{h^\alpha(\vec \theta)} \d \vec \theta \\
            &= \displaystyle \int h_0(\vec \theta) \ln \frac{h^\alpha(\vec \theta)}{h(\vec \theta) C_{\alpha}} \d \vec \theta \\
            &= (\alpha - 1) \displaystyle \int h_0(\vec \theta) \ln h(\vec \theta) \d \vec \theta - \ln \int h^{\alpha}(\vec \theta) \d \vec \theta.
        \end{array}
    \end{equation*}
    
    Therefore,
    \[
        \begin{array}{cl}
            \displaystyle \frac{\d f(\alpha)}{\d \alpha} &=\displaystyle \int h_0(\vec \theta) \ln h(\vec \theta) \d \vec \theta - \displaystyle \frac{\displaystyle \int h^{\alpha}(\vec \theta) \ln h(\vec \theta) \d \vec \theta}{\displaystyle \int h^{\alpha}(\vec \theta) \d \vec \theta} \\
            &=\displaystyle \int h_0(\vec \theta) \ln h(\vec \theta) \d \vec \theta - \displaystyle \int h^{(\alpha)}(\vec \theta) \ln h(\vec \theta) \d \vec \theta.
        \end{array}
    \]
    As a result,
    \[
        \displaystyle \left.\frac{\d f(\alpha)}{\d \alpha} \right|_{\alpha = 1} = \displaystyle \int [h_0(\vec \theta) - h(\vec \theta)] \ln h(\vec \theta) \d \vec \theta.
    \]
    
    In addition,
    \[
        \begin{array}{l}
        \displaystyle \frac{\d^2 f(\alpha)}{\d \alpha^2} \\
        
        \quad = - \displaystyle \frac{1}{C^2_\alpha} \Bigg\{\displaystyle \int h^\alpha (\vec \theta) \d \vec \theta \cdot \int h^\alpha (\vec \theta) \ln h(\vec \theta) \ln h(\vec \theta) \d \vec \theta - \\
        
        \quad \quad \left[ \displaystyle \int h^\alpha(\vec \theta) \ln h(\vec \theta) \d \vec \theta \right]^2
        \Bigg\} \\

        \quad \le 0,
        \end{array}
    \]
    where the last inequality is due to the Cauchy--Schwarz inequality; the equality holds if and only if $h(\vec \theta)$ is a uniform distribution; see Appendix \ref{append:alpha-scale-ent-monotonicity}.

    Since $h(\vec \theta)$ is not a uniform distribution, we have $\frac{\d^2 f(\alpha)}{\d \alpha^2} < 0$ for all $\alpha \ge 0$. That is, $f(\alpha)$ is strictly concave on $[0, \infty)$. In addition, we have $f(1) = 0$.  Therefore, as long as $\frac{\d f(\alpha)}{\d \alpha}|_{\alpha = 1} \neq 0$, there exists $\alpha \in [0, \infty)$ such that $f(\alpha) > 0$. Note that $f(\alpha)$ is continuous in $\alpha$. This completes the proof.
\end{proof}

\section{Hidden Markov Model}\label{append:hidden-Markov-process}
In this section, we explain the time-sequential Bayesian inference for a hidden Markov model \citep[Chapter~13]{bishop2006pattern}; for an illustration, see Figure \ref{fig:hidden-Markov-process}. 
\begin{figure}[!htbp]
    \centering
    \includegraphics[height=2.8cm]{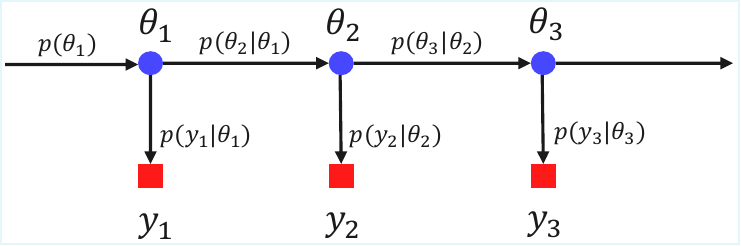}
    \caption{Bayesian inference for a Hidden Markov Model.}
    \label{fig:hidden-Markov-process}
\end{figure}

Let $\{\vec \theta_i\}$ and $\{\rvec y_i\}$ denote two stochastic processes indexed by the discrete time $i$. The process $\{\vec \theta_i\}$ is Markovian, however, not directly observable (i.e., hidden). The process $\{\rvec y_i\}$ is directly observable. At each time $i$, we aim to estimate the hidden quantity $\vec \theta_i$ using the historical observations $\{\vec y_i\}$, up to and including time $i$, and the transition kernel $p(\vec \theta_{i} | \vec \theta_{i-1})$.  

The process $\{\vec \theta_i\}$ is Markovian because at every time $i \ge 2$, the hidden quantity $\vec \theta_i$ is probabilistically determined only by $\vec \theta_{i-1}$ through $p(\vec \theta_{i} | \vec \theta_{i-1})$. The \bfit{direct} measurement model of $\vec \theta_i$ is $p(\vec y_i | \vec \theta_i)$. 

The hidden Markov model has wide applications in real-world statistical signal processing, for instance, state estimation for dynamical systems \citep{simon2006optimal}; representative algorithms encompass the Kalman filter, the particle filter, etc. In state estimation, a hidden Markov model in Figure \ref{fig:hidden-Markov-process} is mathematically described by a state-space model as follows:
\begin{equation}\label{eq:state-space-model}
\left\{\begin{array}{rllll}
\vec \theta_{i} &=  \vec f_{i-1} (\vec \theta_{i-1}) + \rvec w_{i-1},  \\
\rvec y_{i}   &=  \vec h_{i} (\vec \theta_{i}) + \rvec v_{i},
\end{array}\right.
\end{equation}
where $\vec \theta_i$ is the (hidden) state vector at time $i$ and $\vec y_i$ the measurement vector of $\vec \theta_i$; $\rvec w_{i-1}$ and $\rvec v_{i}$ are state transition and state measurement noises, respectively; $\vec f_{i-1}$ and $\vec h_i$ are deterministic functions that determine the state transition and state measurement relations, respectively. In typical settings of state estimation problems, we assume that $\rvec w_{i-1} \sim \cal N(\vec 0, \mat Q)$ and $\rvec v_{i} \sim \cal N(\vec 0, \mat R)$ for some known noise covariances $\mat Q$ and $\mat R$. When the functions $\vec f_{i-1}$ and $\vec h_i$ are linear, we have the linear state-space model
\begin{equation}\label{eq:linear-state-space-model}
\left\{\begin{array}{rllll}
\vec \theta_{i} &=  \mat F_{i-1} \vec \theta_{i-1} + \rvec w_{i-1},  \\
\rvec y_{i}   &=  \mat H_{i} \vec \theta_{i} + \rvec v_{i},
\end{array}\right.
\end{equation}
for some state transition and state measurement matrices $\mat F_{i-1}$ and $\mat H_i$, respectively.

The mathematical foundation of this time-sequential Bayesian inference problem is \eqref{eq:sequential-data}; see the texts around the equation for clarifications. The stochastic interpretation of Bayes' rule applies to the inference of this hidden Markov model since there physically exists a joint data-generating distribution. Specifically, the hidden quantity $\vec \theta_i$ is indeed sampled from the distribution $p(\vec \theta_{i} | \vec \theta_{i-1})$; for $i=1$, $\vec \theta_1$ is sampled from $p(\vec \theta_{1})$.

The optimal state estimation method for the general nonlinear state-space model \eqref{eq:state-space-model} is the particle filter, while for the linear state-space model \eqref{eq:linear-state-space-model} is the Kalman filter; see \cite{simon2006optimal}. Both the particle filter and the Kalman filter are derived from the generic principle \eqref{eq:sequential-data}. Certainly, such optimality depends on the exact knowledge of the state-space models \eqref{eq:state-space-model} and \eqref{eq:linear-state-space-model}.

When the state-space models \eqref{eq:state-space-model} and \eqref{eq:linear-state-space-model} are uncertain, experimental results of the proposed $(\alpha, \beta)$-posterior, against the existing $\alpha$-posterior and Bayesian posterior, are conducted in Appendix \ref{sec:append-applications} (see Appendix \ref{append:experiment-state-estimation}: Bayesian Estimation: State Estimation). Particularly, when the state-transition equation is uncertain, we modify the prior distribution $p(\vec \theta_i | \cal Y_{i-1})$; when the state-measurement equation is uncertain, we modify the likelihood distribution $l_{\vec y_i}(\vec \theta_i)$. As indicated, \bfit{independently} manipulating the prior and likelihood distributions is \bfit{physically natural and straightforward}.

\section{Concrete Applications and Experiments}\label{sec:append-applications}
The UA Bayes' rule \eqref{eq:alpha-beta-posterior} can give birth to several UA Bayesian machine learning methods and UA Bayesian signal processing methods. As demonstrations and without loss of generality, in this section, we focus on naive Bayes MAP classification problems in machine learning, and state estimation (i.e., state filtering) problems of dynamic stochastic state-space systems in signal processing. 
In particular, the UA Bayes MAP classifier, the UA Kalman filter, the UA particle filter, and the UA interactive-multiple-model filter are suggested and experimentally validated.
The main purpose is to show the existence of $(\alpha, \beta)$ such that the $(\alpha, \beta)$-posterior can outperform the conventional Bayesian posterior; cf. Philosophy \ref{phi:robustness}. In addition, we aim to show that the $\alpha$-posterior is not sufficient for some real-world applications, and therefore, the $(\alpha, \beta)$-posterior is necessary. The technical basics on problem formulation and parameter tuning of Bayesian classification and estimation can be seen in Subsections \ref{subsec:applications} and \ref{subsec:parameter-tuning}, respectively. 

This section builds on standard domain-specific setups and concepts, including priors and likelihoods in multinomial and Gaussian naive Bayes classifiers [see \citet{scikitlearnnaivebayes}], as well as those used in hidden-state estimation (see Appendix \ref{append:hidden-Markov-process}). For readers not interested in these application areas, this section may be skipped without missing the key points of the paper; focusing on illustrative examples and experiments in the main body of the paper is sufficient.

\subsection{Bayesian MAP Classification: Text and Image Classification}
We consider two real-world classification problems. One is a natural-language text classification problem, while the other is a medical image classification problem.
\subsubsection{IMDB Dataset}
We investigate the performance of the uncertainty-aware (UA) Bayes MAP classifier in \eqref{eq:generalized-bayes-classifier-elite} on the IMDB dataset containing 
movie reviews \cite[Section~4.3.2]{maas2011learning}. This problem is also known as text-based sentiment analysis where we determine whether a review of a movie is positive or negative. The multinomial naive Bayes MAP classification algorithm with Laplacian smoothing is employed; that is, the features are distributed according to multinomial distributions because movie reviews are typically represented as word vector counts, and the features are assumed to be conditionally independent given the target class;  see \cite{scikitlearnnaivebayes}. 

For every sample size $L$ in \{50, 100, 250, 500, 1000, 2000, 5000, 10000\}, we conduct $500$ independent Monte--Carlo tests. In detail, in each Monte--Carlo trial, we randomly select $L$ samples from the IMDB dataset, of which $80\%$ are training samples and $20\%$ are testing samples. For each Monte--Carlo test, the radial-basis-function surrogate optimization on $[0, 1]$ is employed to find the empirically optimal hyper-parameter $\lambda$ in \eqref{eq:generalized-bayes-classifier-elite};\footnote{See MATLAB's $\mathsf{surrogateopt}$ function:  \url{www.mathworks.com/help/gads/surrogateopt.html}.} the testing performance, which is a function of $\lambda$, serves as the loss function in \eqref{eq:parameter-tuning} to be minimized. The surrogate-optimization-based search process starts with $\lambda = 0.5$ (i.e., the conventional Bayes MAP classifier) so that the performance of the uncertainty-aware Bayes MAP classifier \eqref{eq:generalized-bayes-classifier-elite} is at least as good as that of the conventional Bayes MAP classifier. The experimental results are shown in Figure \ref{fig:IMDB-classification}.

\begin{figure}[!htbp]
    \centering
    \subfigure[Averaged Accuracy]{
        \begin{minipage}[htbp]{0.3\linewidth}
            \centering
            \includegraphics[height=3.4cm]{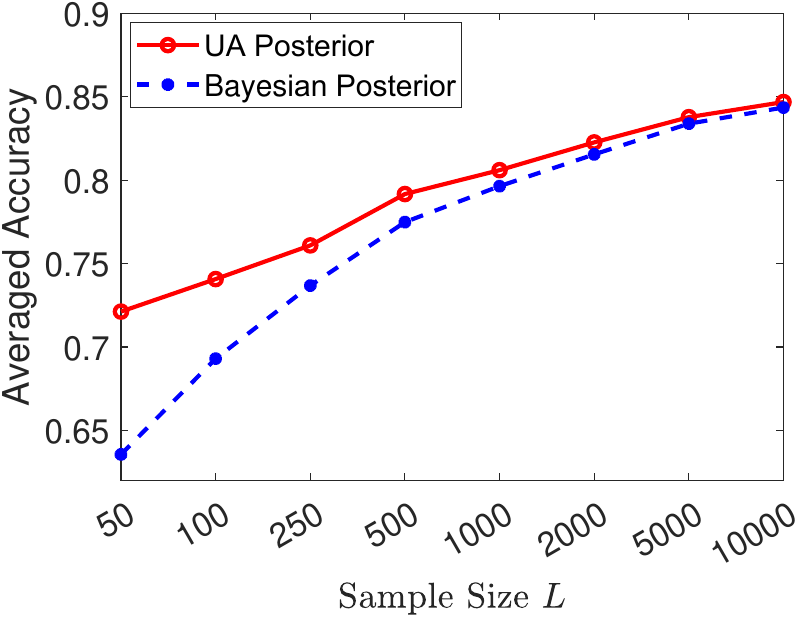}
        \end{minipage}
        \label{fig:IMDB-a}
    }
    \subfigure[Grid Search]{
        \begin{minipage}[htbp]{0.3\linewidth}
            \centering
            \includegraphics[height=3.4cm]{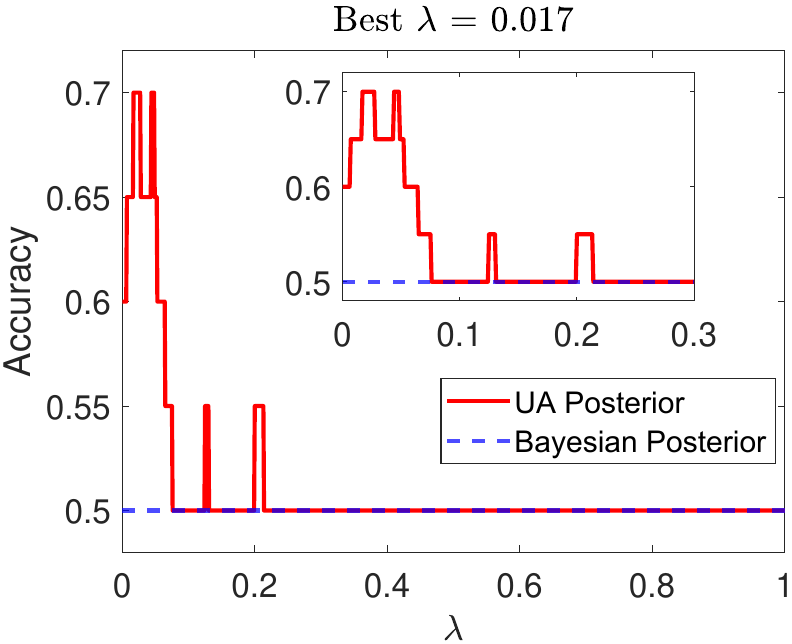}
        \end{minipage}
        \label{fig:IMDB-b}
    }
    \subfigure[Surrogate Optimization]{
        \begin{minipage}[htbp]{0.3\linewidth}
            \centering
            \includegraphics[height=3.4cm]{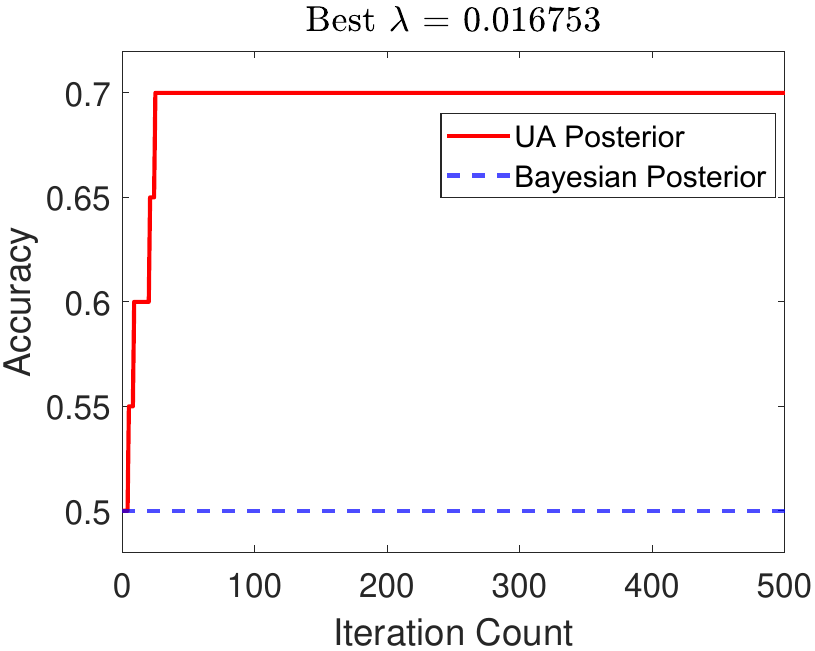}
        \end{minipage}
        \label{fig:IMDB-c}
    }

    \caption{Experimental results on the IMDB data set. (a): The classification accuracy against sample size $L$, averaged over $500$ Monte--Carlo episodes. (b): The grid search for $\lambda$ with step size of $0.001$ in a Monte--Carlo trial when $L = 100$. (c): The visual illustration of the surrogate optimization process for $\lambda$ on $[0, 1]$; the found optimal value is $\lambda = 0.016753$.}
    \label{fig:IMDB-classification}
\end{figure}

As sample size $L$ becomes larger, the nominal distributions $p(\theta)$ and $l_{\vec y}(\theta)$ get closer to the true distributions  $p_0(\theta)$ and $l_{0, \vec y}( \theta)$, respectively. Therefore, the UA Bayes MAP classifier tends to obtain the best classification accuracy as $L$ increases; cf. Figure \ref{fig:IMDB-a}. In Figures \ref{fig:IMDB-b} and \ref{fig:IMDB-c}, the parameter tuning process is visualized for a Monte--Carlo trial when $L = 100$. Since the found optimal $\lambda = 0.016753$, any hyper-parameter pair of $(\alpha, \beta)$ in \eqref{eq:generalized-bayes-classifier} such that $\alpha/(\alpha + \beta) = 0.016753$ is optimal. Both two tuning methods find the best accuracy of $0.7$. 
The average running times of the two tuning methods, against sample size $L$, are shown in Figure \ref{fig:IMDB-running-times}. Note that the larger the testing sample size, the more running time is needed to calculate the empirical performance \eqref{eq:bayes-classifier-cost-empirical}.

Since not all real-world applications allow sufficient sample sizes (due to, e.g., the difficulty in data collection), performance improvement under small sample sizes is still practically meaningful.

\begin{figure}[!htbp]
    \centering
    \includegraphics[height=3.4cm]{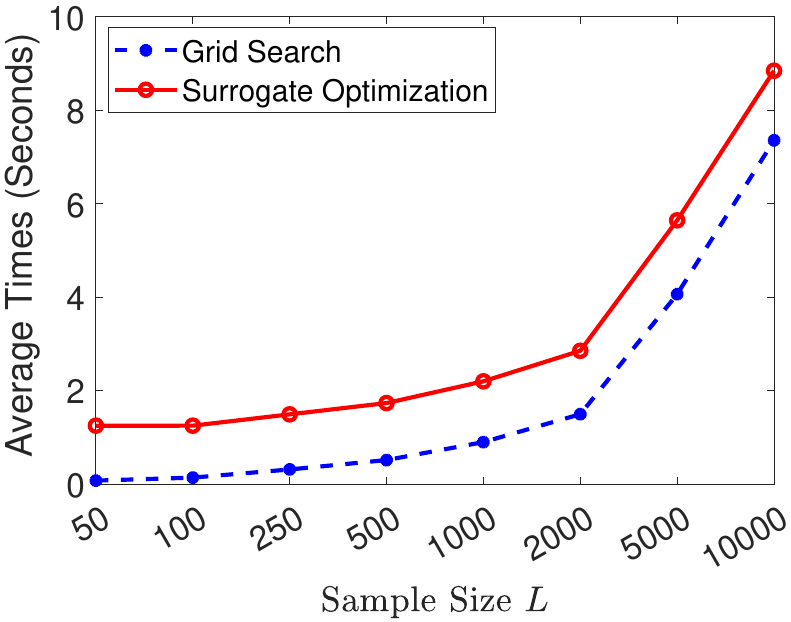}
    \caption{Average running times of the two tuning methods against sample size $L$. The time difference of the two tuning methods remains about 1.5 seconds for every $L$. The running time of grid search depends on the step size (NB: the smaller the step size, the larger the maximum accuracy, but the more the running times), while that of surrogate optimization depends on the maximum allowed iteration count (NB: the larger the iteration count, the larger the maximum accuracy, but the more the running times).}
    \label{fig:IMDB-running-times}
\end{figure}

\subsubsection{UCI Breast Cancer Dataset}
We examine the performance of the UA Bayes MAP classifier in \eqref{eq:generalized-bayes-classifier-elite} on the UCI Breast Cancer dataset \citep{wolberg1995breast}. This is an image-classification-based medical diagnosis problem. The Gaussian naive Bayes MAP classifier is employed; that is, the features are distributed according to Gaussian distributions, and the features are assumed to be conditionally independent given the target class;  see \cite{scikitlearnnaivebayes}. 

Since there are only $569$ instances in this data set, we use all of them in the experiment. We conduct $500$ independent Monte--Carlo tests. In each Monte--Carlo trial, $80\%$ of the $569$ instances are randomly drawn to serve as training samples, while the remaining $20\%$ are testing samples. The radial-basis-function surrogate optimization on $[0, 1]$ is employed to find the empirically optimal hyper-parameter $\lambda$ in \eqref{eq:generalized-bayes-classifier-elite}; the testing performance, which is a function of $\lambda$, serves as the loss function in \eqref{eq:parameter-tuning} to be minimized. The surrogate-optimization-based search process starts with $\lambda = 0.5$ (i.e., the conventional Bayes MAP classifier) so that the performance of the UA Bayes MAP classifier \eqref{eq:generalized-bayes-classifier-elite} is at least as good as that of the conventional Bayes MAP classifier. The experimental results are shown in Figure \ref{fig:UCI-a}. In a Monte--Carlo trial, both parameter tuning methods find the best accuracy of $0.9027$; see Figures \ref{fig:UCI-b} and \ref{fig:UCI-c}.

\textbf{Running Times}. The average running time for the grid search method is 0.2 seconds and for the surrogate optimization method is 1.5 seconds.

\begin{figure}[!htbp]
    \centering
    \subfigure[Boxplots of the Accuracies]{
        \begin{minipage}[htbp]{0.3\linewidth}
            \centering
            \includegraphics[height=3.2cm]{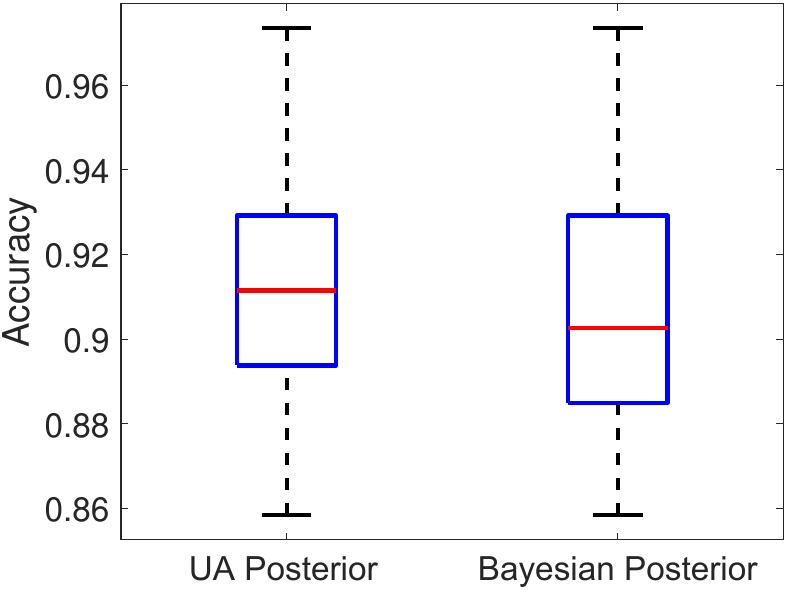}
        \end{minipage}
        \label{fig:UCI-a}
    }
    \subfigure[Grid Search]{
        \begin{minipage}[htbp]{0.3\linewidth}
            \centering
            \includegraphics[height=3.2cm]{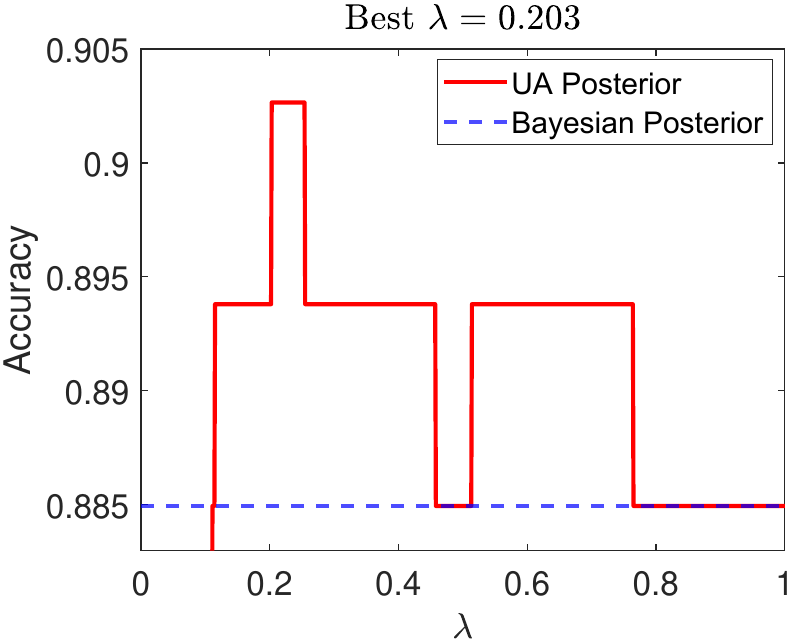}
        \end{minipage}
        \label{fig:UCI-b}
    }
    \subfigure[Surrogate Optimization]{
        \begin{minipage}[htbp]{0.3\linewidth}
            \centering
            \includegraphics[height=3.2cm]{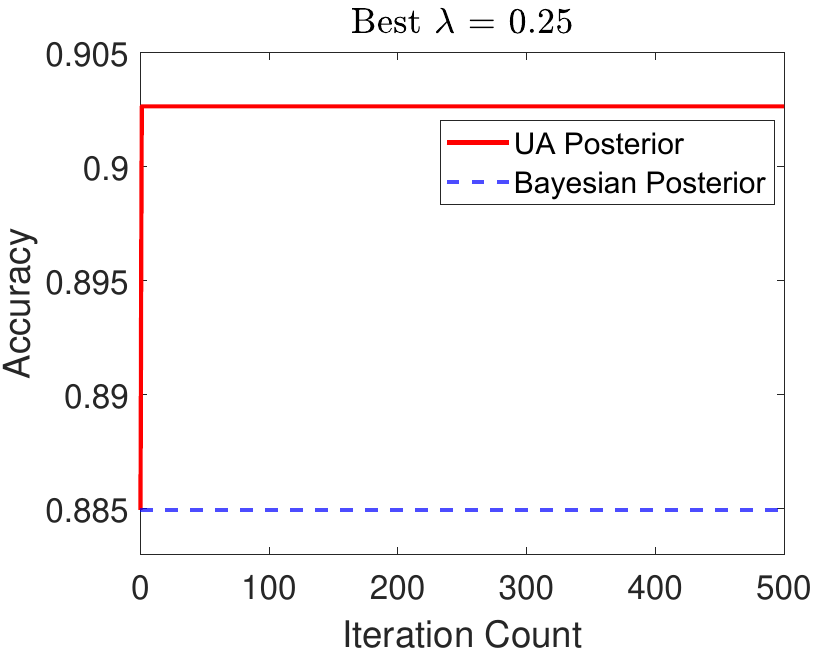}
        \end{minipage}
        \label{fig:UCI-c}
    }

    \caption{Experimental results on the UCI Breast Cancer data set. (a): The boxplots for the classification accuracies of $500$ Monte--Carlo episodes. (b): The grid search for $\lambda$ with step size of $0.001$ in a Monte--Carlo trial. (c): The visual illustration of the surrogate optimization process for $\lambda$ on $[0, 1]$; the found optimal value is $\lambda = 0.25$.}
    \label{fig:UCI-classification}
\end{figure}

\subsection{Bayesian Estimation: State Estimation}\label{append:experiment-state-estimation}
From the perspective of applied statistics, state estimation problems can be interpreted as sequential Bayesian inference. Briefly speaking, state estimation aims to estimate the unknown (i.e., unobservable) state vector $\rvec x_k$ at the discrete time step $k$ using the known (i.e., observable) measurement set $\cal Y_k \defeq (\rvec y_1, \rvec y_2, \ldots, \rvec y_k)$ and the probabilistic model $p_{\rvec x_k, \cal Y_k}(\vec x_k, \mat Y_k)$; the probabilistic model $p_{\rvec x_k, \cal Y_k}(\vec x_k, \mat Y_k)$ is induced by the stochastic state-space models. The aim of state estimation is to obtain the posterior state distribution $p_{\rvec x_k|\cal Y_k}(\vec x_k|\mat Y_k)$ or the posterior mean $\E(\rvec x_k|\cal Y_k)$. 

\subsubsection{Uncertainty-Aware Kalman Filter}
We consider the state estimation problem of linear Gaussian state-space models \cite[Chap.~3]{anderson1979optimal}
\begin{equation}\label{eq:linear}
\left\{\begin{array}{rllll}
\rvec x_{k} &=  \mat F_{k-1} \rvec x_{k-1} + \mat G_{k-1} \rvec w_{k-1},  \\
\rvec y_{k}   &=  \mat H_{k} \rvec x_{k} + \rvec v_{k},
\end{array}\right.
\end{equation}
where, for every time $k$, the system matrices $\mat F_{k}$, $\mat G_{k}$, and $\mat H_{k}$ are assumed to be known. The process noise vector and the measurement noise vector are denoted by $\rvec w_{k} \sim \cal N(\vec 0, \mat Q_k)$ and $\rvec v_{k}\sim \cal N(\vec 0, \mat R_k)$, respectively, and $\mat Q_k$ and $\mat R_k$ are assumed to be known, for every $k$. For this linear Gaussian system, under several uncorrelatedness assumptions among $\rvec x_0$, $\{\rvec w_k\}_{\forall k}$, and $\{\rvec v_k\}_{\forall k}$ (see \citet[p.~38]{anderson1979optimal}), the Kalman filter is the optimal state estimator in the sense of minimum mean-squared error.

Suppose that at time $k-1$, the posterior state distribution is 
\[
\hat{\rvec x}_{k-1|k-1} \sim \cal N(\hat{\vec x}_{k-1|k-1}, \mat P_{k-1|k-1}).
\]
At time $k$, the prior state distribution is 
\[
\hat{\rvec x}_{k|k-1} \sim \cal N(\hat{\vec x}_{k|k-1}, \mat P_{k|k-1})
\]
where $\mat P_{k|k-1} \defeq \mat F_{k-1} \mat P_{k-1|k-1} \mat F^\top_{k-1} + \mat G_{k-1} \mat Q_{k-1} \mat G^\top_{k-1}$ and $\hat{\vec x}_{k|k-1} \defeq \mat F_{k-1} \hat{\vec x}_{k-1|k-1}$; the measurement distribution conditioned on state $\rvec x_k$ is 
\[
\rvec y_k | \rvec x_k \sim \cal N(\mat H_k \rvec x_k, \mat R_k).
\]
Therefore, by applying the $(\alpha, \beta)$-posterior rule \eqref{eq:alpha-beta-posterior}, the prior state distribution should be modified to
\[
\hat{\rvec x}_{k|k-1} \sim \cal N(\hat{\vec x}_{k|k-1}, ~\mat P_{k|k-1}/\beta)
\]
and the conditional measurement distribution should be modified to
\[
\rvec y_k | \rvec x_k \sim \cal N(\mat H_k \rvec x_k, ~\mat R_k/\alpha).
\]
Note that for a Gaussian distribution $\cal N(\vec \mu, \mat \Sigma)$, the $\alpha$-scaled version is equal to $\cal N(\vec \mu, \mat \Sigma/\alpha)$.
As a result, the $(\alpha, \beta)$-uncertainty-aware Kalman filter can be accordingly obtained by just applying the following two assignment operations in each Kalman iteration: $\mat P_{k|k-1} \leftarrow \mat P_{k|k-1}/\beta$ and $\mat R_k \leftarrow \mat R_k/\alpha$, for $\alpha,\beta > 0$. The former operation admits that the state-transition equation is uncertain (thus to be compensated by $\beta$), while the latter admits that the state-measurement equation is uncertain (thus to be compensated by $\alpha$).

This modification is reminiscent of the distributionally robust state estimation (DRSE) method proposed in \citet{wang2021robust}; see also \citet[p.~22,~p.~53]{wang2022thesisdistributionally} for intuitive interpretations. Hence, when $0< \alpha,\beta \le 1$, the $(\alpha, \beta)$-uncertainty-aware Kalman filter is equivalent to distributionally robust state estimation methods in \cite{wang2021robust,wang2021distributionally}, provided that there are no outliers in measurements. Since $0< \alpha,\beta \le 1$, $\mat P_{k|k-1}$ and $\mat R_k$ are assumed to be overly aggressive, and therefore, they need to be inflated. When $\alpha,\beta \ge 1$, the values of $\mat P_{k|k-1}$ and $\mat R_k$ are reduced, which implies that $\mat P_{k|k-1}$ and $\mat R_k$ are assumed to be overly conservative. However, from the technical derivations of the DRSE method, the value reduction of $\mat P_{k|k-1}$ and $\mat R_k$ cannot be achieved in the DRSE method. In this sense, therefore, the $(\alpha, \beta)$-uncertainty-aware Kalman filter presented in this paper generalizes the DRSE method in \cite{wang2021robust,wang2021distributionally} when there are no measurement outliers: the former can handle not only aggressive cases (by inflating covariances) but also conservative cases (by reducing covariances), while the latter can only address aggressive cases. (NB: To combat aggressiveness is to introduce robustness/conservativeness; i.e., robust methods innately cannot fight against conservativeness.)

The experiments of the DRSE method in \citet[Chap.~2]{wang2022thesisdistributionally} can sufficiently support the practical values of the $(\alpha, \beta)$-uncertainty-aware Kalman filter; just notice the mathematical equivalence to the DRSE method (when $0 < \alpha,\beta \le 1$). We do not repeat these experiments here.

\subsubsection{Uncertainty-Aware Particle Filter}
The particle filter is standard to handle the state estimation problem of nonlinear systems \citep{arulampalam2002tutorial}. By employing the $(\alpha, \beta)$-posterior in \eqref{eq:alpha-beta-posterior}, the $(\alpha, \beta)$-uncertainty-aware particle filter can be straightforwardly obtained; see Example \ref{ex:UA-PF}.

As an illustration, we specifically consider the state estimation problem of a 1-dimensional nonlinear system model \citep{carlin1992monte,arulampalam2002tutorial,wang2022distributionally}
\begin{equation}\label{eq:nonlinear-syst}
\left\{
\begin{array}{cl}
    \rscl x_k &= \displaystyle \frac{\rscl x_{k-1}}{2} + \displaystyle \frac{25 \rscl x_{k-1}}{1 + \rscl x_{k-1}^{2}}+8 \cos (1.2k) + \rscl w_{k-1}, \\
    \rscl y_k &= \displaystyle \frac{\rscl x^2_k}{20} + 0.5 \sin(\rscl x_k) + \rscl v_k,
\end{array}
\right.
\end{equation}
where for every $k$, $\rscl w_k \sim \cal N(0, 10)$ and $\rscl v_k \sim \cal N(0, 1)$; for every $k_1$ and $k_2$, $\rscl w_{k_1}$ and $\rscl w_{k_2}$ are uncorrelated, so are $\rscl v_{k_1}$ and $\rscl v_{k_2}$; for every $k$, $\rscl w_{k}$ and $\rscl v_{k}$ are uncorrelated. As in \cite{wang2022distributionally}, we assume that the \textit{nominal} measurement equation  is
\[
    \rscl y_k = \displaystyle \frac{\rscl x^2_k}{20} + \rscl v_k,
\]
which is slightly different from the true one in \eqref{eq:nonlinear-syst}. Therefore, in this case, there is a modeling uncertainty in the measurement equation, i.e., in likelihood distributions at every $k$. As a result, to combat this model uncertainty and improve the filtering accuracy, we should use $\beta = 1$ and $\alpha < 1$.

For the purpose of experimental demonstration, we use $50$, $100$, and $200$ particles, respectively, to report the performance of the $(\alpha,1)$-uncertainty-aware particle filter. The systematic resampling method is adopted to address particle degeneracy, and the effective sample size is set to half the number of particles. Given the number of particles, the experiment is independently conducted with $500$ episodes and each episode contains $100$ time steps. The performance measure, in every episode, is the rooted time-averaged mean-squared error (RTAMSE) along $100$ time steps, i.e.,
$
\sqrt{
\frac{1}{100} \sum_{k=1}^{100} (x_{k} - \hat x_{k})^2
}
$, 
where $\hat x_{k}$  denotes the estimate of the true value $x_{k}$; the overall performance measure for a given number of particles is the averaged RTAMSEs of the $500$ episodes. The filtering result of the $(\alpha,1)$-uncertainty-aware particle filter is shown in Figure \ref{fig:UAPF}, plotted against the value of $\alpha$.
\begin{figure}[!htbp]
    \centering
    \subfigure[50 Particles]{
        \begin{minipage}[htbp]{0.3\linewidth}
            \centering
            \includegraphics[height=3.4cm]{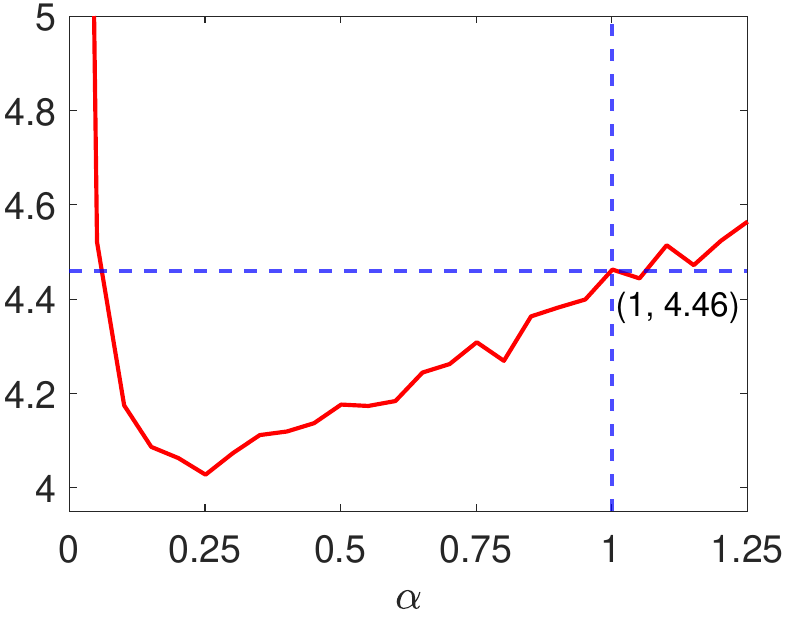}
        \end{minipage}
    }
    \subfigure[100 Particles]{
        \begin{minipage}[htbp]{0.3\linewidth}
            \centering
            \includegraphics[height=3.4cm]{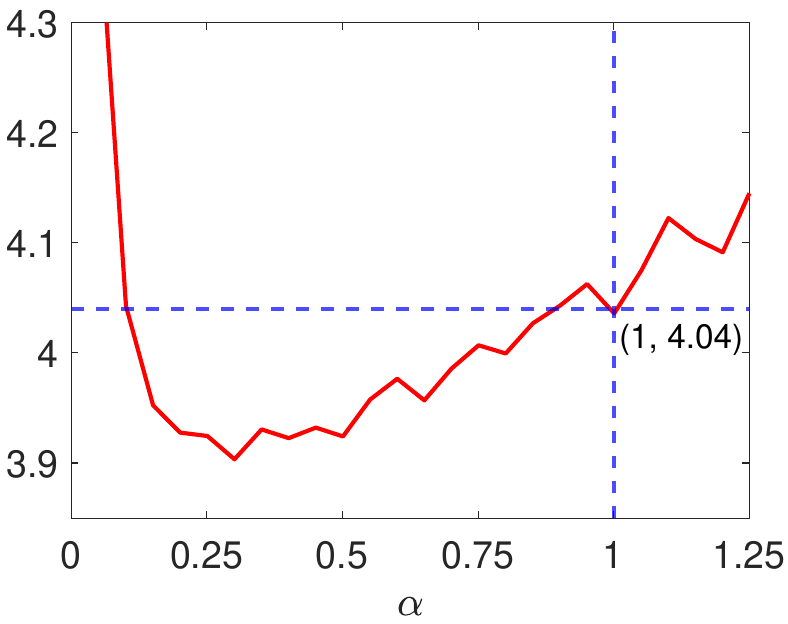}
        \end{minipage}
    }
    \subfigure[200 Particles]{
        \begin{minipage}[htbp]{0.3\linewidth}
            \centering
            \includegraphics[height=3.4cm]{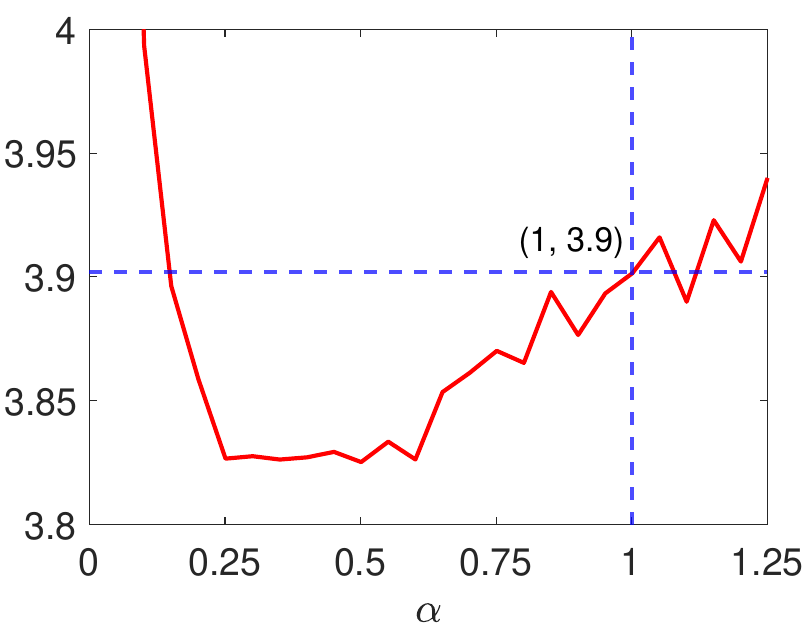}
        \end{minipage}
    }

    \caption{Averaged-RTAMSE performance of the $(\alpha,1)$-uncertainty-aware particle filter, against the value of $\alpha$. The larger the number of particles, the smaller the averaged RTAMSE. For a given number of particles, the $(\alpha,1)$-uncertainty-aware particle filter with $\alpha \approx 0.25$ works best.}
    \label{fig:UAPF}
\end{figure}

As we can see from Figure \ref{fig:UAPF}, when $\alpha < 1$ but $\alpha$ is not close to zero, the $(\alpha,1)$-uncertainty-aware particle filter can outperform the conventional particle filter: This is because when there exists model uncertainty in the measurement equation, the likelihood distribution of particles at each time step tends to be unreliable, and therefore, we need to reduce the concentration (i.e., reduce the trust level and improve the entropy) of the likelihood distribution to cope with the uncertainty. In addition, the results suggest that the superiority of the $(\alpha,1)$-uncertainty-aware particle filter tends to be more significant as the number of particles decreases: This implies that the $(\alpha,1)$-uncertainty-aware particle filter has the innate ability to fight against the particle degeneracy issue. When $\alpha$ is overly small (i.e., close to zero), the $(\alpha,1)$-uncertainty-aware particle filter almost ignores information from the measurements, and therefore, the filter diverges: This indicates that uncertain information is at least better than no information.

\subsubsection{Uncertainty-Aware Interactive Multiple Model Filter}
The interactive multiple model (IMM) filter is standard to handle the state estimation problem of jump linear systems \citep{blom1988interacting,bar2005imm}. By employing the $(\alpha, \beta)$-posterior in \eqref{eq:alpha-beta-posterior}, the $(\alpha, \beta)$-uncertainty-aware IMM filter can be straightforwardly obtained. One may just imagine the models' prior weights as a prior distribution and the models' likelihoods (evaluated at a given measurement) as a likelihood distribution; the aim is to infer the models' posterior distribution and then compute the posterior state estimate; see Example \ref{ex:Bayesian-model-averaging}.

As an illustration, we consider two real-world one-dimensional multi-model target tracking problems; as claimed in \citet{li2003survey}, focusing on only one coordinate does not lose the generality because the motions of a dynamic target in different coordinates can be independently tracked. Mathematically, it is a state estimation problem of a jump linear system model \citep{jilkov2004online,zhao2016recursive,wang2023distributionally}
\[
\left\{
\begin{array}{l}
\rvec x_{k} 
 = 
\left[
\begin{array}{cc}
    1 & T \\
    0 & 1
\end{array}
\right]
\rvec x_{k-1} 
+
\left[
\begin{array}{c}
     {T^2}/{2}  \\
     T
\end{array}
\right]
[\rscl a_{\rscl j_{k-1},k-1} + \rscl w_{k-1}] \\
\rscl y_k = \rscl p_k + \rscl v_{k},
\end{array}
\right.
\]
where the state vector $\rvec x_k$ is defined as
$\rvec x_k \defeq [\rscl p_{k}, \rscl s_{k}]^\top$; 
$\rscl p_k \in \R$ denotes the position, $\rscl s_k \in \R$ the speed, and $\rscl a_{\rscl j_k, k} \in \R$ the maneuvering acceleration of a moving target at time $k$; the positive-integer-valued discrete random variable $\rscl j_k$ denotes the system's operating mode at time $k$; $T > 0$ is the sampling time between two discrete time indices; $\rscl w_k \in \R$ is the acceleration modeling noise and $\rscl v_k \in \R$ the sensor's observation noise at time $k$. At time $k$, the maneuvering acceleration may randomly take any one of the following values
\begin{equation}\label{eq:IMM-model}
\rscl a_{j_k,k} = \left\{
\begin{array}{ll}
   0,  &  j_k = 1, \\
   10,  & j_k = 2, \\
   -10, & j_k = 3,
\end{array}
\right.
\end{equation}
i.e., the random variable $\rscl j_k$ randomly jumps; the diagonal elements of the transition probability matrix are set to $0.8$s and the non-diagonal ones are set to $0.1$s. Therefore, in state estimation for this jump linear system, at every time $k$, we need to jointly estimate the unknown state $\rvec x_k$ and the system's unknown operating mode $\rscl j_k$ based on the past measurements $\{y_1, y_2, \ldots, y_k\}$. 

We reuse the real-world data and experimental setups from \cite{wang2023distributionally}: In short, data from usual GPS are to be processed to obtain higher-accuracy target positions and velocities in real time, while data from RTK serve as ground truth. The RTAMSE, for the total $K$ time steps, is computed as
$
\sqrt{\frac{1}{K} \sum^{K}_{k = 1} (p_k - \hat p_k)^2 + (s_k - \hat s_k)^2}, 
$
where $p_k$ (resp. $s_k$) denotes the true value of position (resp. velocity) at the time $k$, and $\hat p_k$ (resp. $\hat s_k$) denotes its estimate. Note that $(p_k, s_k)$ for every $k$ is provided by RTK. (Some authors prefer to independently report the RTAMSEs of position estimates $\{\hat p_k\}_{\forall k \in [K]}$ and velocity estimates $\{\hat s_k\}_{\forall k \in [K]}$, respectively, because the two dimensions have different numerical scales. However, the author's experiments suggest that it introduces no essential influence on the main claims of this paper. One may use the shared source codes to verify this point.)

\textbf{Track A Slowly-Maneuvering Car}: 
We first track a slowly-maneuvering car that travels on a road in Beijing, China. The car and its trajectory are shown in Figure \ref{fig:car}.
\begin{figure}[!htbp]
    \centering
    \subfigure[car]{
        \begin{minipage}[htbp]{0.46\linewidth}
            \centering
            \includegraphics[height=3.1cm]{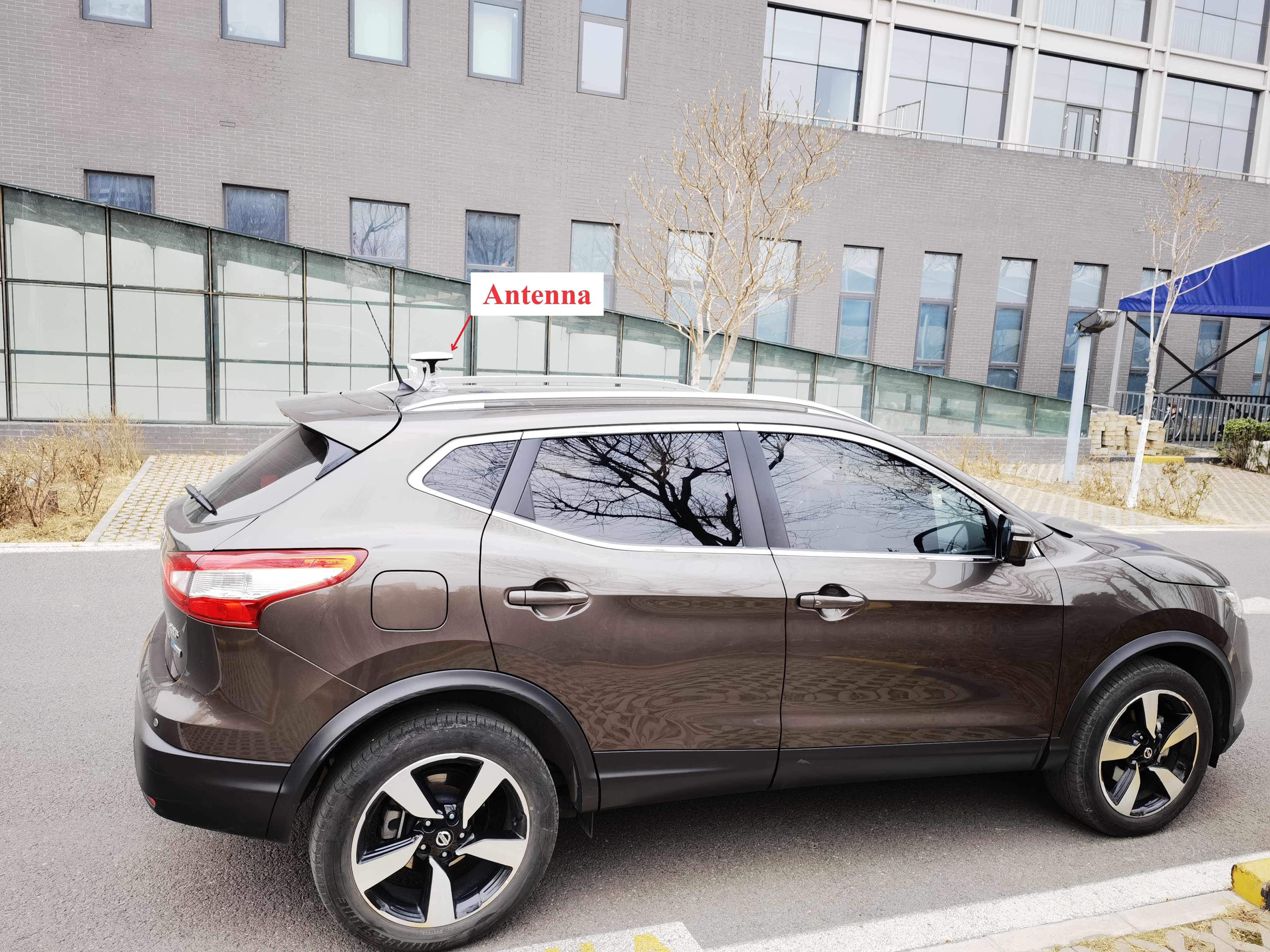}
        \end{minipage}
    }
    \subfigure[trajectory on a road]{
        \begin{minipage}[htbp]{0.46\linewidth}
            \centering
            \includegraphics[height=3.1cm]{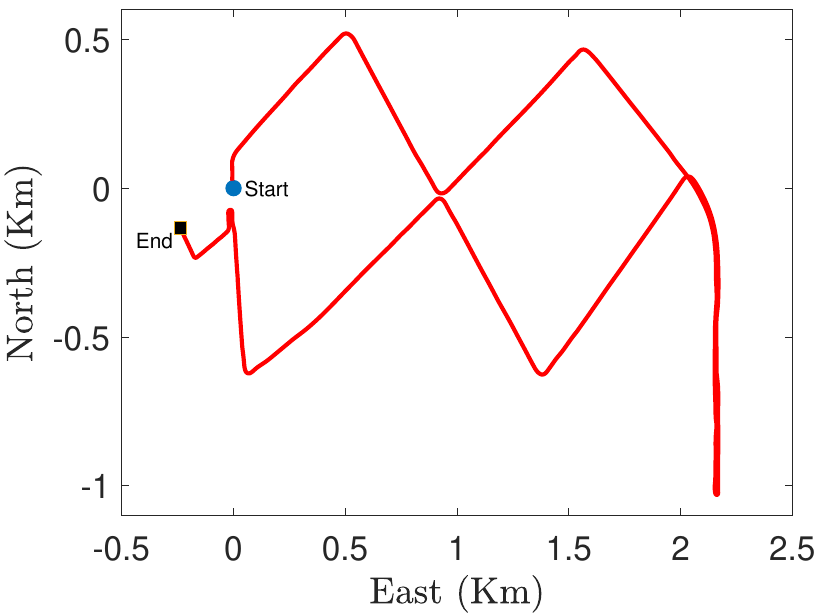}
        \end{minipage}
    }
     
    \caption{Car and its trajectory (data credit: UniStrong Co., Ltd. and \cite{wang2023distributionally}).}
    \label{fig:car}
\end{figure}

The east axis (in the east-north-up coordinate) is investigated \citep{wang2023distributionally}. The RTAMSE performance against the values of $(\alpha, \beta)$ is shown in Figure \ref{fig:car-results}.
\begin{figure}[!htbp]
    \centering
    \includegraphics[height=4.0cm]{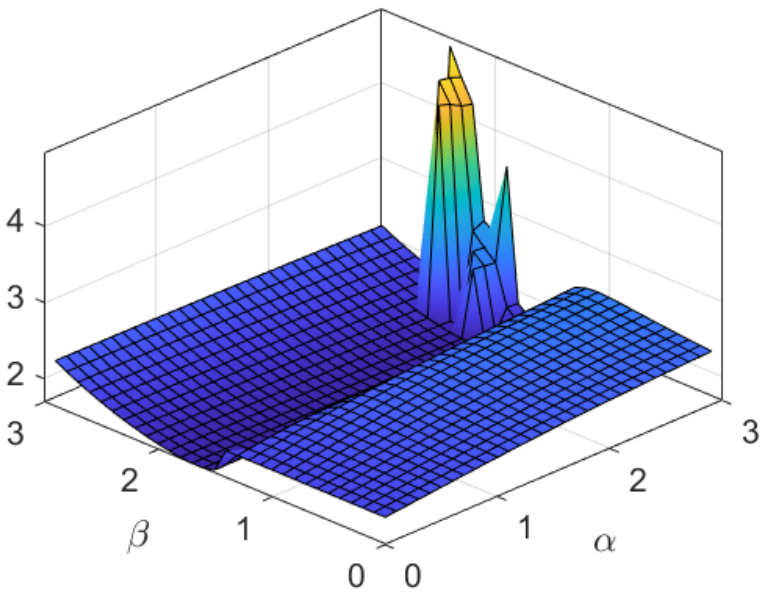}
    \caption{Averaged-RTAMSE performance of the $(\alpha,\beta)$-uncertainty-aware IMM filter, against the values of $(\alpha, \beta)$. The $(\alpha, \beta)$-uncertainty-aware IMM filter with $(\alpha, \beta) = (2, 1.7)$ works best.}
    \label{fig:car-results}
\end{figure}

The results suggest that the $(\alpha, \beta)$-uncertainty-aware IMM (IMM-UA) filter with $(\alpha, \beta) = (2, 1.7)$ works best, under which the tracking results (RTAMSE) are shown in Table \ref{tab:car}.
\begin{table}[!htbp]
\centering
\caption{Tracking Results of The Car: $(\alpha, \beta) = (2, 1.7)$}
\label{tab:car}
\begin{tabular}{lrr|lrr}
\bottomrule[0.8pt]
Filter  & RTAMSE & Avg Time & Filter  & RTAMSE & Avg Time \\
\specialrule{0.7pt}{0pt}{0pt}
IMM  & 2.30 & 1.05e-05 & IMM-UA  & 1.69 & 1.07e-05 \\
\toprule[0.8pt]
\multicolumn{6}{l}{\tabincell{l}{\footnotesize 
{Avg Time}: Average Execution Time at each time step (unit: seconds).
}}
\end{tabular}
\end{table}

From Table \ref{tab:car}, we can see that the value pair of $(\alpha, \beta) = (2, 1.7)$ significantly reduces the tracking errors for the car. This value pair concentrates both the prior distribution (i.e., prior model weights) and the likelihood distribution (i.e., model likelihoods), which means that one of the three models in \eqref{eq:IMM-model} dominates the rest two models most of the time. This implication coincides with our intuition from Figure \ref{fig:car}(b) that most of the time $k$, the model with $\rscl a_{1, k} = 0$ (i.e., constant velocity and straight-line trajectory) dominates the motion of the car.

\textbf{Track A Highly-Maneuvering Drone}: 
We next track a highly-maneuvering drone that flies following round trajectories in the air over an open playground, with a flying speed of about $6$m/s during data collection. The drone and parts of its trajectory are shown in Figure \ref{fig:drone}.
\begin{figure}[!htbp]
    \centering
    \subfigure[drone]{
        \begin{minipage}[htbp]{0.46\linewidth}
            \centering
            \includegraphics[height=3.0cm]{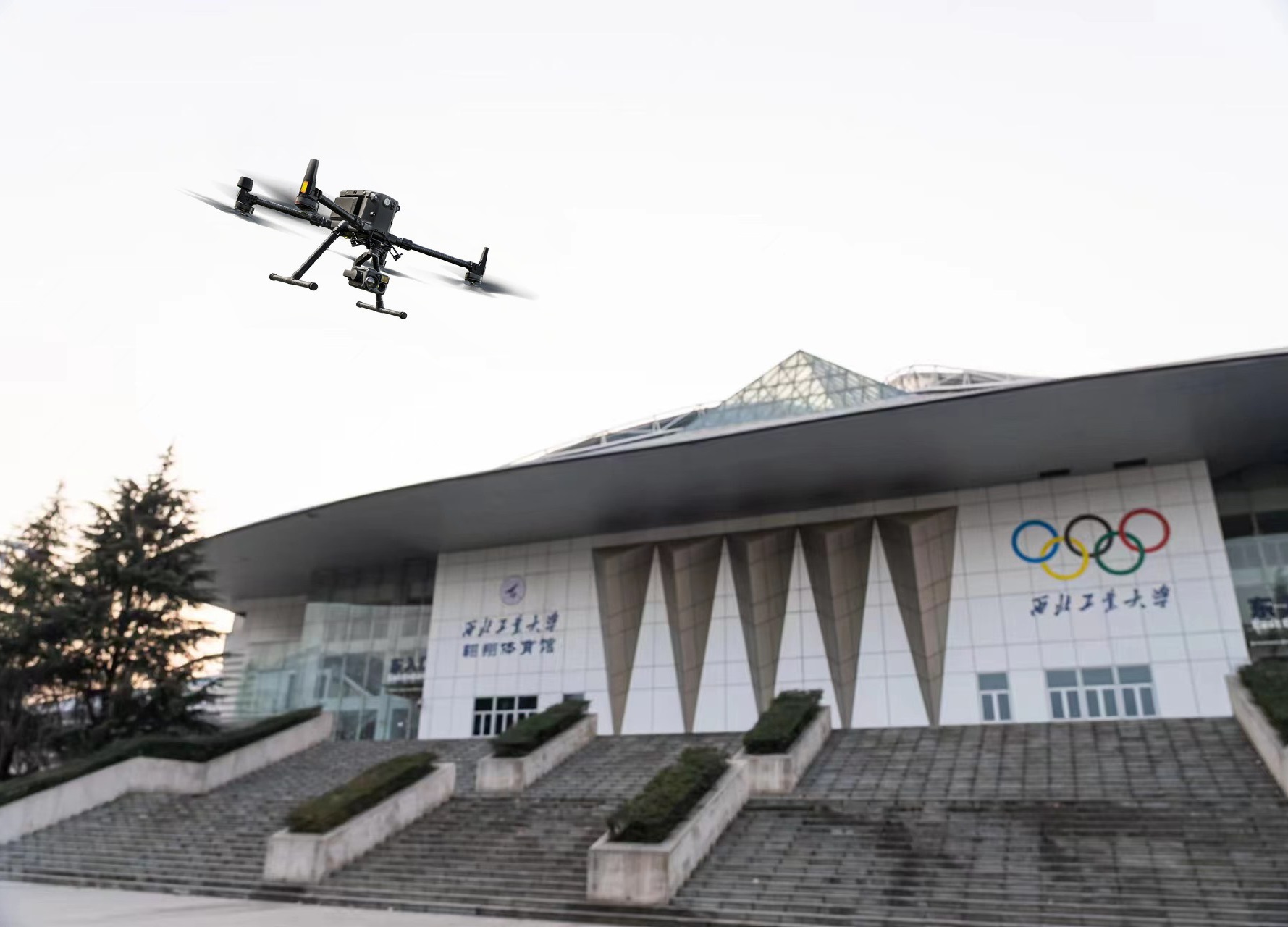}
        \end{minipage}
    }
    \subfigure[trajectory (part)]{
        \begin{minipage}[htbp]{0.46\linewidth}
            \centering
            \includegraphics[height=3.0cm]{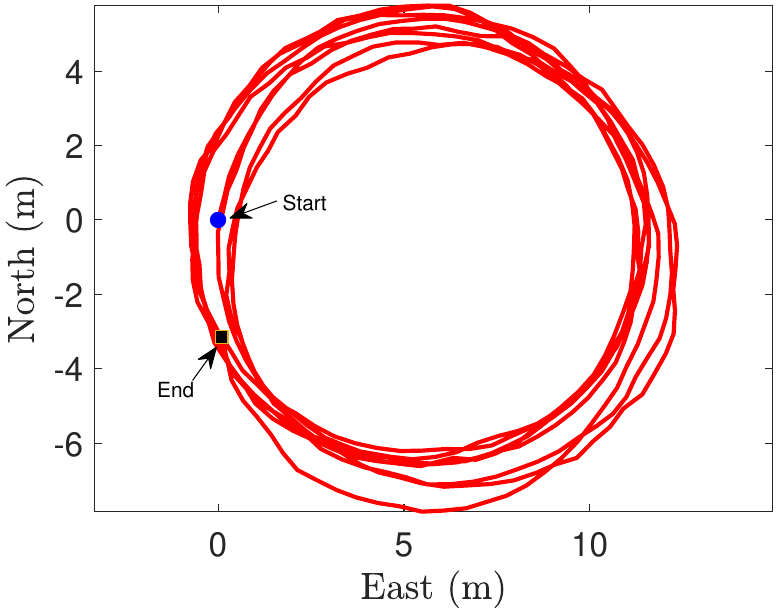}
        \end{minipage}
    }
     
    \caption{Drone and its flying trajectory, starting from $10s$ and ending at $60s$ (data credit: Northwestern Polytechnical University and \cite{wang2023distributionally}).}
    \label{fig:drone}
\end{figure}

The east axis (in the east-north-up coordinate) is investigated \citep{wang2023distributionally}. The RTAMSE performance against the values of $(\alpha, \beta)$ is shown in Figure \ref{fig:drone-results}.
\begin{figure}[!htbp]
    \centering
    \includegraphics[height=4.0cm]{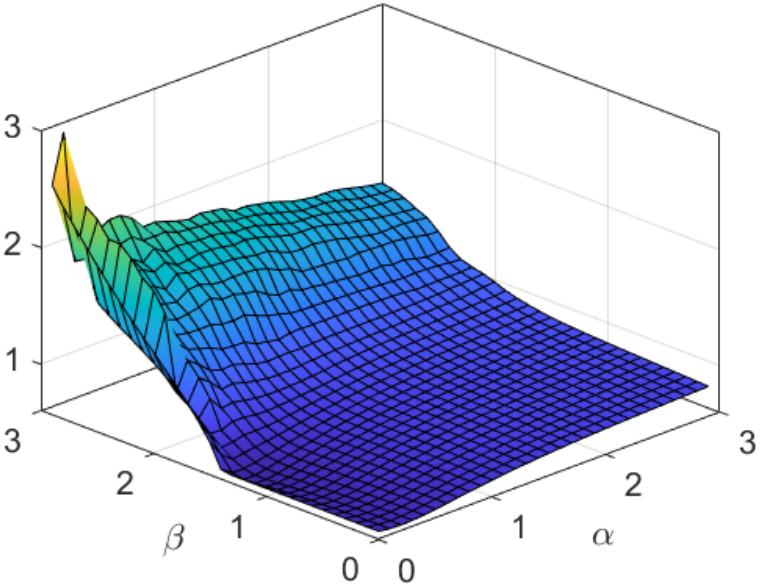}
    \caption{Averaged-RTAMSE performance of the $(\alpha,\beta)$-uncertainty-aware IMM filter, against the values of $(\alpha, \beta)$. The $(\alpha, \beta)$-uncertainty-aware IMM filter with $(\alpha, \beta) = (0.2, 1.2)$ works best.}
    \label{fig:drone-results}
\end{figure}

The results suggest that the $(\alpha, \beta)$-uncertainty-aware IMM filter with $(\alpha, \beta) = (0.2, 1.2)$ works best, under which the tracking results (RTAMSE) are shown in Table \ref{tab:drone}.
\begin{table}[!htbp]
\centering
\caption{Tracking Results of The Drone: $(\alpha, \beta) = (0.2, 1.2)$}
\label{tab:drone}
\begin{tabular}{lrr|lrr}
\bottomrule[0.8pt]
Filter  & RTAMSE & Avg Time & Filter  & RTAMSE & Avg Time \\
\specialrule{0.7pt}{0pt}{0pt}
IMM  & 0.77 & 2.71e-05 & IMM-UA  & 0.60 & 2.85e-05 \\
\toprule[0.8pt]
\multicolumn{6}{l}{\tabincell{l}{\footnotesize 
{Avg Time}: Average Execution Time at each time step (unit: seconds).
}}
\end{tabular}
\end{table}

From Table \ref{tab:drone}, we can see that the value pair of $(\alpha, \beta) = (0.2, 1.2)$ significantly reduces the tracking errors for the drone. This value pair improves the entropy (i.e., the spread) of the likelihood distribution (i.e., the model likelihoods) and almost does not influence the prior distribution (i.e., the prior model weights because $\beta = 1.2 \approx 1$), which means that none of the three models in \eqref{eq:IMM-model} dominates the rest two models and the model set in \eqref{eq:IMM-model} is not complete (viz., more candidate values for $j_k$ and $\rscl a_{j_k, k}$ are expected; only three values are not sufficient).\footnote{For example, $\rscl a_{j_k, k} \defeq \{0, -2.5, -5, -7.5, -10, 2.5, 5, 7.5, 10\}$ should be better; however, this introduces much more computational loads in IMM filter. For extensive reading on this point, see \citet[Subsec.~VI.A.2.;~Tab.~III]{wang2023distributionally}.} This implication coincides with our intuition from Figure \ref{fig:drone}(b) that, since the drone highly maneuvers with acceleration quickly switching between positive values and negative values, there is no model that dominates the motion of the drone.

\subsubsection{Remarks for State Estimation Applications}\label{subsec:remarks-expt-parameter-tuning-sig-proc}
From our experiments, we found that the grid search method with the step size of $0.01$ is an empirical golden rule. As for $\tau$ such that $(\alpha, \beta) \in [0, \tau]^2$, experiments suggest that $\tau \le 3$ is practically sufficient.

\end{document}